\tikzstyle{small}=[font=\footnotesize]
\tikzset{
    every picture/.style={>=stealth,auto,node distance=2cm},
}
\newcommand{\Act}{\mathit{Act}}
\newcommand{\CA}{\mathit{CA}}
\newcommand{\CB}{\mathit{CB}}
\newcommand{\OCN}{OCN}
\newcommand{\OCA}{OCA}
\newcommand{\ignore}[1]{}
\newcommand{\WSIM}[2]{\ensuremath{\,\mwsim^{#1}_{#2}}\,}
\newcommand{\notWSIM}[2]{{}\not\mwsim^{#1}_{#2}{}}
\newcommand{\ptnote}[1]{{}\todo[backgroundcolor=blue!40,inline,size=\small]{#1}}
\newcommand{\N}{\mathbb{N}}
\newcommand{\Z}{\mathbb{Z}}
\newcommand{\x}{\times}
\newcommand{\R}{Spoiler}
\newcommand{\V}{Duplicator}
\newenvironment{proof}[0]{\emph{Proof}} {\hfill $\Box$ \smallskip }
\newcommand{\step}[1]{\Step{#1}{}{}}
\newcommand{\wstep}[1]{\Wstep{#1}{}{}}
\newcommand{\notwstep}[1]{\notWstep{#1}{}{}}
\newcommand{\Wstep}[3]{\ensuremath{\,{\stackrel{#1}{\Longrightarrow}}\!{}^{\scriptstyle{#2}}_{\scriptstyle{#3}}}\,}
\newcommand{\notWstep}[3]{\ensuremath{\,{\centernot{\stackrel{#1}{\Longrightarrow}}\!{}}^{\scriptstyle{#2}}_{\scriptstyle{#3}}}\,}
\newcommand{\Step}[3]{\ensuremath{\,{\stackrel{#1}{\longrightarrow}}\!{}^{\scriptstyle{#2}}_{\scriptstyle{#3}}}\,}
\newcommand{\msim}{\SIM{}{}}
\newcommand{\SIM}[2]{\ensuremath{\,\preceq^{#1}_{#2}}\,}
\newcommand{\notSIM}[2]{{}\not\preceq^{#1}_{#2}{}}
\newcommand{\mwsim}{\leqq}
\newcommand{\Ord}{\ensuremath{Ord}}
\newcommand{\nic}[1]{}
\newcommand{\eps}{\varepsilon}
\newcommand{\Alf}{A}
\newtheorem{theorem}{{\bf Theorem}}
\newtheorem{definition}[theorem]{{\bf Definition}}
\newtheorem{example}[theorem]{{\bf Example}}
\newtheorem{lemma}[theorem]{{\bf Lemma}}
\newtheorem{proposition}[theorem]{{\bf Proposition}}
\newtheorem{remark}[theorem]{{\bf Remark}}
\newcommand{\qed}{}
\begin{document}
\title{\bf Decidability of Weak Simulation on One-counter Nets\thanks{
Technical Report EDI-INF-RR-1415 of the School of Informatics at the University
of Edinburgh, UK. (http://www.inf.ed.ac.uk/publications/report/).
Extended version of material presented at LICS 2013.
Made available at arXiv.org - Non-exclusive license to distribute
(http://arxiv.org/licenses/nonexclusive-distrib/1.0/).
}}

\date{}

\author[1]{Piotr Hofman}
\author[2]{Richard Mayr}
\author[2]{Patrick Totzke}
\affil[1]{University of Warsaw, Poland}
\affil[2]{University of Edinburgh, UK}

\maketitle

\begin{abstract}
\noindent
One-counter nets (\OCN) are Petri nets with exactly one unbounded place.
They are equivalent to a subclass of one-counter automata with
only a weak test for zero. 

We show that weak simulation preorder is decidable for \OCN\
and that weak simulation approximants do not converge
at level $\omega$, but only at $\omega^2$.
In contrast, other semantic relations like weak bisimulation 
are undecidable for \OCN\ \cite{May2003}, and so are weak (and strong) trace inclusion
(Sec.~\ref{sec:traces}).

\end{abstract}


\section{Introduction}

{\bf\noindent The model.}
One-counter automata (\OCA) are Minsky counter automata with only one counter,
and they can also be seen as a subclass of pushdown automata with just one
stack symbol (plus a bottom symbol).
One-counter nets (\OCN) are Petri nets with exactly one unbounded place,
and they correspond to a subclass of OCA where the counter
cannot be fully tested for zero (i.e., transitions enabled at counter value
zero are also enabled at nonzero counter values).
\OCN\ are arguably the simplest model of discrete infinite-state systems,
except for those that do not have 
a global finite control, e.g., (commutative) context-free grammars.

{\bf\noindent The state of the art.}
Verification problems for \OCA\ and \OCN\ have been extensively studied,
particularly model checking problems with various temporal logics and
semantic preorder/equivalence checking w.r.t. given notions of behavior. 
$\mu$-calculus model checking \cite{Ser2006,GMT2009} and CTL model checking
\cite{Goller:STACS2010} are PSPACE-complete for \OCA/\OCN,
while EF model checking is $P^{\it NP}$-complete \cite{GMT2009}.
There are many notions of semantic equivalences \cite{Gla2001},
but the most common ones are the following (ordered from finer to coarser): 
bisimulation, simulation and trace equivalence.
Each of these have their standard (called strong) variant, and a weak variant
that abstracts from arbitrarily long sequences of internal actions.

Strong bisimulation for 
\OCA/\OCN\ is PSPACE-complete \cite{BGJ2010}. However, \OCA\ and \OCN\ differ
w.r.t. strong simulation. While strong simulation is decidable for \OCN\
\cite{AC1998,JM1999,JKM2000}, strong simulation and trace inclusion are 
undecidable for \OCA\ \cite{JMS1999}.
((Bi)simulation checking is also undecidable for models with more than one
unbounded counter/place \cite{Jancar:PN-bisimilarity-TCS}.)

The situation changes when one considers weak semantic equivalences/preorders
that abstract from internal actions.
One can define upper-approximations of
(bi)simulation up-to $k$ by considering only $k$ rounds in the
(bi)simulation game. For strong (bi)simulation, these $k$-approximants
converge to (bi)simulation at level $k=\omega$, provided that the systems are finitely branching.
This is not the case for weak (bi)simulation.
Here the approximants are guaranteed to converge only at higher ordinals,
due to the implicit infinite branching capability introduced by
the abstraction.
This is why it is so hard to prove
semi-decidability of weak non-(bi)simulation for many classes of
infinite-state transition systems.

For \OCA/\OCN\ it was shown that weak bisimulation is undecidable
\cite{May2003}. Moreover, weak (and strong) simulation and trace inclusion are undecidable
for \OCA\ \cite{JMS1999,Valiant1973}.
However, it was an open question whether weak simulation is decidable for
\OCN.
Moreover, the decidability of strong and weak trace inclusion was open for
\OCN\,\cite{EN94}.

{\bf\noindent Our contribution.}
We show that weak simulation preorder is decidable for \OCN.
In fact, the weak simulation relation on \OCN\ is effectively semilinear.
Moreover, we show that weak simulation approximants only converge at level $\omega^2$
on \OCN.
The decidability of weak simulation is in contrast to the undecidability of
weak bisimulation for \OCN\ \cite{May2003}. This is surprising, because it
goes against a common trend. On almost all other classes of systems,
bisimulation problems are computationally easier than the corresponding
simulation problems \cite{KM:Why-bisim-sim}.

On the other hand, we show that strong and weak trace inclusion are
undecidable even for \OCN.

Finally, we study checking strong and weak (bi)simulation
and trace inclusion between \OCA/\OCN\ and finite systems, and close some
remaining complexity gaps in this area.

\section{Preliminaries}\label{sec:prelim}

    \paragraph{One-counter systems.}
    We consider infinite-state labeled transition systems induced by \OCA\ and /\OCN, respectively.
    A \emph{labeled transition system} is described by a triple
    $T=(V,\Act,\step{})$ where $V$ is a (possibly infinite) set of
    states, $\Act$ is a finite set of action labels and $\step{}\subseteq V\x
    \Act\x V$ is the labeled transition relation. We write $\step{}^*$ for the transitive and reflexive closure of
    $\step{}$ and use the infix notation $s\step{a}s'$ for a
    transition $(s,a,s')\in\step{}$, in which case we say $T$ makes an \emph{$a$-step} from $s$ to
    $s'$. 
    Given a finite or infinite 
    sequence of symbols $w \in \Act^*$ or $w \in \Act^\omega$ resp.,
    we write $|w|\in\N\cup\{\omega\}$ for the length
    of $w$. If $w$ is finite, we denote its $i$-fold concatenation by $w^i$.

    \begin{definition}[One-Counter Automata / Nets]
        A \emph{one-counter automaton} $A=(Q,\Act,\delta,\delta_0)$
        is given by a finite set
        of control-states $Q$, a finite set of action labels $\Act$
        and transition relations $\delta\subseteq Q\x \Act\x\{-1,0,1\}\x Q$
        and $\delta_0\subseteq Q\x \Act\x\{0,1\}\x Q$.
        It induces an infinite-state labeled transition system over the
        stateset $Q\x\N$, whose elements will be written as $pm$, as follows.
        $pm\step{a}p'm'$ iff
        \begin{enumerate}
          \item $(p,a,d,p')\in\delta$ and $m'=m+d\ge0$ or
          \item $(p,a,d,p')\in\delta_0$, $m=0$ and $m'=d$.
        \end{enumerate}
        Such an automaton is called a \emph{one-counter net} if
        $\delta_0=\emptyset$, i.e., if the automaton 
        cannot test if the counter is equal to $0$.
    \end{definition}

    \paragraph{Weak Simulation.}
    In a weak semantics, one needs to abstract from internal actions.
    Thus one assumes a dedicated action $\tau\in \Act$ that
    is used to model internal non-observable steps and defines the \emph{weak step} relation
    $\wstep{}$ by 
    \begin{align*}
        \wstep{\tau} &=\Step{\tau}{*}{} \mbox{, and for } a\neq\tau,
        &\wstep{a} =\Step{\tau}{*}{}\cdot\step{a}\cdot\Step{\tau}{*}{}
    \end{align*}
    
    Simulation and weak simulation are semantic preorders in van Glabbeeks
    linear time -- branching time spectrum \cite{Gla2001}, which are used to compare the behavior of processes.
    Their standard co-inductive definition is as follows.
    A binary relation $R\subseteq V^2$ on the states of a labeled transition
    system is a \emph{simulation} if $sRt$ implies that
    for all $s\step{a}s'$ there is a $t'$ such that $t\step{a}t'$ and $s'Rt'$. 
    Similarly, $R$ is a \emph{weak simulation} if $sRt$
    implies that for all $s\step{a}t$ there is a $t'$ such that $t\wstep{a}t'$ and
    $s'Rt'$. 
    (Weak) simulations {are closed under union, so there exists a
    unique maximal simulation $\msim$, resp. weak simulation $\mwsim$, which
    is a preorder on $V$.
    A (weak) bisimulation is a symmetric (weak) simulation.
    The maximal (weak) bisimulation is an equivalence.

    Simulation preorder can also be characterized in terms of ordinal \emph{approximant} relations
    $\SIM{}{\alpha}$, which are inductively defined as follows.
    $\SIM{}{0} = V^2$. For successors $\alpha+1$ let $s \SIM{}{\alpha+1} t$ iff for all $s\step{a}s'$
    there is a $t\step{a}t'$ such that $s'\SIM{}{\alpha}t'$. For limit ordinals $\lambda$
    define $\SIM{}{\lambda} = \bigcap_{\alpha<\lambda}\SIM{}{\alpha}$.

    This inductive notion of approximants can be interpreted as an interactive game between
    two players \R\ and \V, where the latter tries to stepwise match the moves of the
    former.  
    A \emph{play} is a finite or infinite sequence of pairs of transition system states.
    For a finite play $(E_0,F_0),(E_1,F_1),\dots,(E_i,F_i)$ the next pair $(E_{i+1},F_{i+1})$
    is determined by a round of choices: \R\ chooses a transition $E_i\step{a}E_{i+1}$,
    then \V\ responds by choosing an equally labeled transition $F_i\step{a}F_{i+1}$.
    A pair $(E,F)$ of states is \emph{directly winning} for \R\ if she can choose a transition
    $E\step{a}E'$ so that her opponent cannot respond, i.e. $\lnot\exists F'. F\step{a}F'$.
    A play is won by \R\ if a pair of states is reached that is directly winning for her, otherwise \V\ wins
    the play.
    A \emph{strategy} is a set of rules that tells the player which valid move to choose.
    A player plays according to a strategy if all her moves obey the rules of the strategy.
    A strategy is \emph{winning} from $(E,F)$ if every play that starts in $(E,F)$
    and which is played according to that strategy is winning.
    \begin{proposition}
        \label{prop:game-char}
        For any two states $(E,F)$ of a transition system $T$, \V\ has a winning
        strategy in $(E,F)$ in the simulation game
        iff $E\SIM{}{\alpha}F$ for all ordinals $\alpha$
        iff $E\msim F$.
    \end{proposition}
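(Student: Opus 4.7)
The plan is to prove the two equivalences (\V{} has a winning strategy from $(E,F)$) iff ($E \msim F$) iff ($E \SIM{}{\alpha} F$ for all $\alpha$) separately, which together yield the three-way equivalence.

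For the approximant characterization, observe that the relations $\SIM{}{\alpha}$ are the ordinal iterates (from above) of the monotone simulation functional whose greatest fixed point is $\msim$. To show $E \msim F$ implies $E \SIM{}{\alpha} F$ for every ordinal $\alpha$, I would proceed by transfinite induction: the base case $\alpha = 0$ is trivial since $\SIM{}{0} = V^2$; at a successor $\alpha+1$, using that $\msim$ is itself a simulation, every move $E \step{a} E'$ is matched by some $F \step{a} F'$ with $E' \msim F'$, whence $E' \SIM{}{\alpha} F'$ by the induction hypothesis; at limits the claim is immediate from the definition as intersection. For the converse, the sequence $(\SIM{}{\alpha})_{\alpha}$ is non-increasing under inclusion, so by cardinality of $\mathcal{P}(V^2)$ it must stabilize at some ordinal $\alpha^*$; the stabilized relation satisfies $\SIM{}{\alpha^*} = \SIM{}{\alpha^*+1}$ and is therefore a simulation contained in $\msim$, so $E \SIM{}{\alpha} F$ for all $\alpha$ forces $E \msim F$.

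For the game characterization, the forward direction proceeds by exhibiting an explicit winning strategy for \V. From any pair in $\msim$, whenever \R\ plays $E \step{a} E'$, the simulation property of $\msim$ guarantees the existence of $F \step{a} F'$ with $E' \msim F'$, which \V\ plays. By induction every reached pair lies in $\msim$, so \V\ is never stuck and consequently wins every play, finite or infinite. Conversely, the relation $R$ consisting of all pairs from which \V\ has a winning strategy is itself a simulation: given $(E,F) \in R$ and a move $E \step{a} E'$, applying \V's strategy to this first \R-move yields a response $F \step{a} F'$ from whose resulting position \V\ still has a winning strategy (namely the residual of the original one), hence $(E',F') \in R$. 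Thus $R$ is a simulation and $R \subseteq \msim$.

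This three-way characterization is entirely standard and presents no real technical obstacle; the subtlety lies hidden in the stabilization ordinal $\alpha^*$ of the approximant chain. For strong simulation on finitely branching systems it is known that $\alpha^* \le \omega$, but a principal contribution of this paper is to show that for weak simulation on \OCN{} the correct bound is $\omega^2$, reflecting the infinite branching implicit in the $\tau$-closure.
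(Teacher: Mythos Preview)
Your argument is correct and is exactly the standard proof of this well-known characterization. Note, however, that the paper does not actually prove Proposition~\ref{prop:game-char}: it is stated without proof as a folklore fact about simulation games and ordinal approximants, so there is no ``paper's own proof'' to compare against. Your write-up supplies precisely the expected justification (transfinite induction for $\msim \subseteq \SIM{}{\alpha}$, stabilization of the decreasing chain for the converse, and the usual ``the winning region is itself a simulation'' argument for the game direction), and your closing remark about the stabilization ordinal correctly anticipates the paper's later contribution.
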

    Weak simulation approximants $\WSIM{}{\alpha}$ and games are defined analogously but allow \V\ to
    make weak steps and characterize $\mwsim$.

    \paragraph{The Problem.}
    We consider the problem of deciding weak simulation preorder on
    one-counter nets. An instance is given by a one-counter net $N=(Q,\Act,\delta)$
    and configurations $pm$ and $qn$, and the question is 
    whether $pm \mwsim qn$ holds. Generally, we want to compute a
    representation of the semilinear set $W(p,q)=\{(m,n)\,|\, pm \mwsim qn\}$.

%


\section{Reduction to Strong Simulation on $\omega$-Nets}
    
First we reduce the weak simulation problem on one-counter nets to
a strong simulation problem on a slightly generalized model that we call
$\omega$-nets.
In $\omega$-nets, there exist dedicated transitions with symbolic effect $\omega$, which allow to
arbitrarily increase the counter in a single step.
Checking weak simulation between two one-counter nets
can be reduced to strong simulation between a one-counter net and an $\omega$-net.
    
    \begin{definition}[$\omega$-Nets]
        An \emph{$\omega$-net} $N=(Q,Act,\delta)$ is given by a finite set of
        control-states $Q$, a finite set of actions $\Act$ and transitions
        $\delta\subseteq Q\x \Act\x\{-1,0,1,\omega\}\x Q$. It induces a transition system over the
        stateset $Q\x\N$
        that allows a step $pm\step{a}p'm'$ if either $(p,a,d,p')\in\delta$ and $m'=m+d\in\N$ or if
        $(p,a,\omega,p')\in\delta$ and $m' > m$.
    \end{definition}
    Every one-counter net is a $\omega$-net without $\omega$-transitions.
    Unlike one-counter nets, $\omega$-nets can yield infinitely branching
    transition systems, since each $\omega$-transition $(p,a,\omega,p')$ introduces steps 
    $pm\step{a}p'm'$ for any two naturals $m' > m$.

    It is easily verified that $\omega$-net (and hence also one-counter net) processes satisfy the following monotonicity property.

    \begin{proposition}[Monotonicity of $\msim$]\label{reduction:monotonicity}
        $pm\step{a}p'm'$ implies $p(m+d)\step{a}p'(m'+d)$ for all $d \in\N$.
        Moreover, $pm\msim qn$ implies $pm'\msim qn'$ for $m'\le m$, $n'\ge n$.
    \end{proposition}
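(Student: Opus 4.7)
The first claim is a direct inspection of the $\omega$-net transition rules. Fix any $d\in\N$. If $pm\step{a}p'm'$ stems from a transition $(p,a,d',p')\in\delta$ with $d'\in\{-1,0,1\}$ and $m'=m+d'\ge 0$, then the same transition is enabled at $p(m+d)$ (since $m+d\ge m\ge 0$) and produces $p'(m'+d)$, because $(m+d)+d'=(m+d')+d=m'+d\ge 0$. If instead $(p,a,\omega,p')\in\delta$ and $m'>m$, then $m'+d>m+d$, so the $\omega$-transition is enabled at $p(m+d)$ and yields $p'(m'+d)$. Either way, $p(m+d)\step{a}p'(m'+d)$, as required.

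For the second claim I will prove the standard coinductive statement: the relation
\[
R \;=\; \{(pm',qn')\mid \exists m\ge m',\ \exists n\le n'\ \text{with}\ pm\msim qn\}
\]
is a simulation. Since $\msim$ is the largest simulation, this yields $R\subseteq\msim$, which is precisely the monotonicity claim.

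To see that $R$ is a simulation, take $(pm',qn')\in R$ witnessed by $m\ge m'$, $n\le n'$, $pm\msim qn$, and let $k=m-m'\ge 0$, $k'=n'-n\ge 0$. Suppose \R\ plays $pm'\step{a}p''m''$. By the first part of the proposition applied with offset $k$, we have $pm\step{a}p''(m''+k)$. Since $pm\msim qn$, \V\ has a response $qn\step{a}q''n''$ with $p''(m''+k)\msim q''n''$. Applying the first part again, this time with offset $k'$ to \V's move, gives $qn'\step{a}q''(n''+k')$. The target pair $(p''m'',q''(n''+k'))$ lies in $R$, witnessed by the counters $m''+k\ge m''$ and $n''\le n''+k'$ together with $p''(m''+k)\msim q''n''$. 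Thus $R$ is indeed a simulation, closing the argument.

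The only subtle point is the definition of $R$: one must allow the two gaps $k$ and $k'$ to be arbitrary (and possibly different from round to round), because the game may shrink or enlarge them independently. Beyond that the argument is routine, so I do not expect any real obstacle.
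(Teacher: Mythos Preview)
Your proof is correct and is exactly the routine coinductive argument the paper has in mind; the paper does not spell out a proof at all (it merely says the property ``is easily verified''), so there is nothing substantive to compare. One minor remark: in your closing paragraph you say the gaps $k$ and $k'$ may change from round to round, but in fact your argument preserves them exactly (the new witnesses are $m''+k$ and $n''$), so the relation with \emph{fixed} offsets would already be a simulation --- your more generous $R$ is fine too, just not strictly necessary.
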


    The following theorem justifies our focus on strong simulation games where \V\ plays
    on an $\omega$-net process.

    \begin{theorem}
        \label{thm:reduction}
        Checking weak simulation between two one-counter net processes can be reduced to
        checking strong simulation between a one-counter net process and an
        $\omega$-net process.
        Formally, for two one-counter nets $M$ and $N$ with states
        $Q_M$ and $Q_N$ resp., one
        can effectively construct a \OCN\ $M'$ with states $Q_{M'}\supseteq Q_M$
        and a $\omega$-net $N'$ with states $Q_{N'}\supseteq Q_N$ such that for
        each pair $p,q\in Q_M\x Q_N$ of original control states and
                any ordinal $\alpha$ the following hold.
        \begin{enumerate}
            \item $pm\mwsim qn$ w.r.t. $\!M,N$ iff $pm\msim\!qn$ w.r.t. $M',N'$.\label{thm:reduction:main}
            \item If $pm\WSIM{}{\alpha}\!qn$ w.r.t. $\!M,N$
                then $pm\SIM{}{\alpha}\!qn$ w.r.t. $M',N'$.\label{thm:reduction:ordinals}
        \end{enumerate}
    \end{theorem}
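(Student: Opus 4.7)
The plan is to construct $M'$ and $N'$ so that the strong-simulation game on $(M', N')$ exactly captures the weak-simulation game on $(M, N)$ at the level of original control states. The central device is the use of $\omega$-transitions in $N'$ to encode the unbounded counter increases Duplicator can obtain by traversing positive-effect $\tau$-cycles within a single weak step.

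First, by a reachability analysis on the control-state graph of $N$, I determine for each triple $(q, a, q')$ with $q, q' \in Q_N$ and $a \in \Act$ whether weak $a$-steps from $q$ to $q'$ can yield arbitrarily large net counter increases (equivalently, whether a positive-effect $\tau$-cycle is reachable on such a path), and otherwise I catalogue the set of achievable net effects. Using this data, I augment $N$ to obtain $N'$: for each triple yielding unbounded positive effect, add an $\omega$-transition $(q, a, \omega, q')$; for each triple with a bounded effect, add the corresponding strong transitions, introducing auxiliary intermediate states to realise effects that lie outside $\{-1, 0, 1\}$. I also add a $\tau$-self-loop at every control state so Duplicator can answer Spoiler's $\tau$-move with an empty weak response. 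The construction of $M'$ mirrors this: $M'$ extends $M$ with auxiliary states and transitions that pair up with the macro-step gadgets in $N'$, ensuring Spoiler cannot exploit the intermediate structure to win.

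For part~(1), I prove the iff by two strategy translations. Given Duplicator's winning weak-sim strategy on $(M, N)$, I construct a strong-sim strategy on $(M', N')$ by mapping each weak response of shape $\tau^* \cdot a \cdot \tau^*$ to the matching summary (or macro) transition in $N'$. Conversely, given a winning strong-sim strategy on $(M', N')$, I unfold each $N'$-step into its encoded weak $N$-step, producing a winning weak-sim strategy on $(M, N)$. The iff restricts to original pairs $(p, q) \in Q_M \times Q_N$ because auxiliary states are visited only inside macro-steps and the two constructions are designed to be in lockstep there.

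For part~(2), I proceed by transfinite induction on $\alpha$, using the game characterization (Proposition~\ref{prop:game-char}) adapted to weak simulation. The translation from the weak game to the strong game preserves the approximant level round-by-round: a Duplicator response that is winning with $\alpha$ remaining rounds in the $(M, N)$ game is mapped to a response that is winning with $\alpha$ remaining rounds in the $(M', N')$ game. The main technical obstacle is designing the gadgets in $M'$ and $N'$ so that each weak round in $(M, N)$ corresponds to one round (or a uniform predictable block of rounds) in $(M', N')$ while keeping ordinal approximants aligned; the $\omega$-transitions handle the unbounded positive-effect case cleanly, but weak steps with bounded yet non-standard net effects require careful auxiliary-state engineering to preserve the strategy correspondence at every ordinal level.
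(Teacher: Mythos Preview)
Your overall strategy---summarise weak steps of $N$ as strong transitions, using $\omega$-transitions for unbounded positive-effect $\tau$-cycles, and stretch both nets with auxiliary states so the games stay in lockstep---is the same as the paper's. However, there is a genuine gap in your construction of $N'$.

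You say that for each triple $(q,a,q')$ you ``catalogue the set of achievable net effects'' and then add summary transitions realising those effects. But the set of effects achievable by a weak $a$-step $qn\wstep{a}q'n'$ depends on $n$, not just on $q$: a path $\pi$ with net effect $d$ may dip below its starting counter value before coming back up, so it is only enabled when $n\ge\Gamma(\pi)$, where $\Gamma(\pi)=-\min_i\Delta({}^i\pi)$ is the minimal counter needed to traverse $\pi$. Your summary transitions, as described, forget this guard, so $N'$ would allow Duplicator moves that are not available in the original weak game. The same issue arises for $\omega$-transitions: even when a positive $\tau$-cycle exists on some $q$--$q'$ path, reaching and entering that cycle may itself require a minimum counter value.

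The paper fixes exactly this by a two-stage construction. First it builds a \emph{guarded} $\omega$-net $G$ whose transitions carry both an effect $d\in\Z\cup\{\omega\}$ and a guard $g\in\N$, one transition per acyclic (``direct'') silent--visible--silent path, and shows $pm\mwsim qn$ in $M,N$ iff $pm\msim qn$ in $M,G$. Only finitely many direct paths exist (length $\le|Q|$), so guards and finite effects are bounded. Second, it normalises $G$ to an ordinary $\omega$-net $G'$ by replacing every transition by a fixed-length chain of $k=2\Gamma(G)+\Delta(G)+1$ steps: the chain first decrements by $g$ and re-increments by $g$ (testing the guard via the OCN semantics), then realises the effect. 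Crucially, $M$ is expanded to $M'$ by the \emph{same} factor $k$ using filler $b$-steps, so one original round corresponds to exactly $k$ new rounds. Your ``auxiliary intermediate states'' idea is close to this second stage, but without the guard-testing prefix it is incomplete.

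For part~(2), your transfinite induction would work once the construction is right, but the paper's argument is more direct: since one weak round in $M,N$ is simulated by exactly $k$ strong rounds in $M',N'$, a Spoiler win in $\alpha$ rounds in $M',N'$ yields a Spoiler win in at most $\alpha$ rounds in $M,N$; contrapositively, $pm\WSIM{}{\alpha}qn$ implies $pm\SIM{}{\alpha}qn$.
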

    \begin{proof}(Sketch.)
    The idea of the proof is to look for counter-increasing cyclic paths
    via $\tau$-labeled transitions in the control graph
    and to introduce $\omega$-transitions accordingly.
    For any path that reads a single visible action and visits a 'generator' state that is
    part of a silent cycle with positive effect, we add an $\omega$-transition.
    For all of the finitely many non-cyclic paths that read a single visible action
    we introduce direct transitions.
    A full proof is given in Appendix~\ref{app:reduction}.
    \end{proof}
\section{Approximants}\label{sec:approximants}
We generalize the notion of $\SIM{}{\alpha}$ simulation approximants in 
the case of simulation between one-counter and $\omega$-net processes.
This yields approximants that converge at a finite level for any pair of nets.

First we define approximants $\SIM{\beta}{\alpha}$ in two (ordinal) dimensions.
From a game-theoretic perspective the subscript $\alpha$ indicates the number of rounds \V\ can survive
and the superscript $\beta$ denotes the number of $\omega$-steps \R\ needs to allow.
E.g., $pm\SIM{2}{5}qn$ if \V\ can guarantee that no play of the
simulation game that contains $< 2$ $\omega$-steps
is losing for him in less than $6$ rounds.
If not stated otherwise we assume that $N=(Q,\Act, \delta)$ is a one-counter net
and $N'=(Q',\Act,\delta')$ is an $\omega$-net.

\begin{definition}\label{def:approximants}
    We define \emph{approximants} for ordinals $\alpha$ and $\beta$ as follows.
    Let $\SIM{0}{\alpha}=\SIM{\beta}{0} = Q\x\N\x Q'\x\N$, the full relation.
    For successor ordinals $\alpha+1,\beta+1$ let $pm \SIM{\beta+1}{\alpha+1} qn$ iff for all
    $pm\step{a}p'm'$ there is a step $qn\step{a}q'n'$ s.t. either
    \begin{enumerate}
        \item $(q,a,\omega, q')\in\delta'$ (the step is due to an $\omega$-transition) and
            $p'm'\SIM{\beta}{\alpha} q'n'$, or
        \item $(q,a,\omega, q')\notin\delta'$, but $(q,a,(n'-n), q')\in\delta'$
          (i.e., there is no $\omega$-transition and the step is due to a normal transition)
          and $p'm'\SIM{\beta+1}{\alpha} q'n'$.
    \end{enumerate}
    For limit ordinals $\lambda$ we define
    $\SIM{\lambda}{\alpha}= \bigcap_{\beta<\lambda}\SIM{\beta}{\alpha}$ and
    $\SIM{\beta}{\lambda} = \bigcap_{\alpha<\lambda}\SIM{\beta}{\alpha}$.
    Finally,
    \begin{align}\label{def:big_approximants}
        \SIM{\beta}{} = \bigcap_{\alpha\in\Ord} \SIM{\beta}{\alpha}
        &&\SIM{}{\alpha} = \bigcap_{\beta\in\Ord} \SIM{\beta}{\alpha}.
    \end{align}
\end{definition}

    $\SIM{}{\alpha}$ corresponds to the usual notion of simulation approximants
    and $\SIM{\beta}{}$ is a special notion derived from the syntactic peculiarity of
    $\omega$-transitions present in the game on one-counter vs.\ $\omega$-nets.

\begin{example}
    Consider a net that consists of a single $a$-labeled loop in state $X$ and the
    $\omega$-net with transitions $Y\step{a,\omega}Z\step{a,-1}Z$ only.
    We see that for any $m,n\in\N$,
    $Xm\SIM{}{n}Zn \,\not\succeq_{n+1}\,Xm$. Moreover,
    $Xm\SIM{}{\omega}Yn$ but $Xm\notSIM{}{\omega+1}Yn$ and
    $Xm\SIM{1}{}Yn$ but $Xm\notSIM{2}{\omega+1}Yn$ and thus $Xm\notSIM{2}{}Yn$.
\end{example}

\begin{definition}\label{def:approxgame}
    An \emph{approximant game} is played in rounds between \R\ and \V.
    Game positions are quadruples $(pm,qn,\alpha,\beta)$ where $pm, qn$ are configurations
    of $N$ and $N'$ respectively, and $\alpha,\beta$ are ordinals called step- and $\omega$-counter.
    In each round that starts in $(pm,qn,\alpha,\beta)$:
    \begin{itemize}
      \item \R\ chooses ordinals $\alpha'<\alpha$ and $\beta'<\beta$,
      \item \R\ makes a step $pm\step{a}p'm'$,
      \item \V\ responds by making a step $qn\step{a}q'n'$ using some transition $t$.
    \end{itemize}
    If $t$ was an $\omega$-transition the game continues from position $(p'm',q'n',\alpha',\beta')$,
    Otherwise the next round starts at $(p'm',q'n',\alpha',\beta)$
    (in this case \R's choice of $\beta'$ becomes irrelevant).
    If a player cannot move the other wins and if $\alpha$ or $\beta$ becomes $0$, \V\ wins.
\end{definition}
\begin{lemma}
    \label{approximants:game:monotonicity}
    If \V\ wins the approximation game from $(pm,qn,\alpha,\beta)$
    then he also wins the game from $(pm,qn,\alpha',\beta')$ for any $\alpha'\le \alpha$
    and $\beta'\le\beta$.
\end{lemma}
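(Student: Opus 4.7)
The plan is to convert any winning strategy $\sigma$ for \V\ from $(pm,qn,\alpha,\beta)$ into a winning strategy $\sigma'$ from $(pm,qn,\alpha',\beta')$ by a simple coupling argument. Specifically, we run two games in lockstep: the actual play, with counters $\alpha',\beta'$, and a \emph{phantom} play, with counters $\alpha,\beta$. Every move \R\ makes in the actual game is copied verbatim into the phantom play, and \V's response prescribed by $\sigma$ in the phantom play is copied back to serve as his move in the actual game.

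The invariant to maintain, by induction on the round number $i$, is that both plays sit at the same pair of configurations $(p_i m_i, q_i n_i)$ and satisfy $\alpha_i^{\text{small}} \le \alpha_i^{\text{big}}$ and $\beta_i^{\text{small}} \le \beta_i^{\text{big}}$. It holds initially by hypothesis. For the inductive step, \R\ chooses $\alpha_{i+1} < \alpha_i^{\text{small}}$ and $\beta_{i+1} < \beta_i^{\text{small}}$, which by the invariant are also legal choices in the phantom game, and plays a transition $p_i m_i \step{a} p_{i+1} m_{i+1}$ of $N$. In the phantom, $\sigma$ prescribes a response $q_i n_i \step{a} q_{i+1} n_{i+1}$; since this is an ordinary transition of $N'$ whose validity does not depend on the abstract ordinal counters, it is also available in the actual game, and we let \V\ play it there. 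Whether the response uses an $\omega$-transition or not, the $\beta$-counter either drops to $\beta_{i+1}$ in both games or is preserved in both, and the $\alpha$-counter drops to $\alpha_{i+1}$ in both; in each case the inequality is preserved.

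It remains to verify that $\sigma'$ wins. \V\ wins a play whenever \R\ cannot move or a counter reaches $0$, and loses only by becoming stuck himself. Since $\sigma$ wins the phantom play, it terminates in one of the first two ways, and the invariant transports this to the actual play: identical configurations mean \R\ is stuck in both games simultaneously, and $\alpha_i^{\text{big}}=0$ or $\beta_i^{\text{big}}=0$ forces the corresponding smaller counter to be $0$ as well. The main subtlety to check is the edge case where $\alpha'=0$ or $\beta'=0$, so the actual game ends before the phantom does; but this is immediate since \V\ wins by default in the actual game. No transfinite induction on $\alpha,\beta$ is needed, because the coupling refers to $\sigma$'s behaviour only on the sequence of positions actually produced and these positions remain synchronised round by round.
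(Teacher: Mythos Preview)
Your proposal is correct and is essentially the same argument as the paper's: the paper also reuses \V's winning strategy from the larger game and maintains the invariant that the ordinal pair in the actual game is pointwise below that of the original (your ``phantom'') game. Your write-up merely spells out the coupling and the edge cases in more detail than the paper's two-sentence proof.
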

\begin{proof}
If \V\ has a winning strategy in the game from $(pm,qn,\alpha,\beta)$ then he can
use the same strategy in the game from
$(pm,qn,\alpha',\beta')$ and maintain the invariant that the pair of
ordinals in the game configuration is pointwise smaller than the pair in the
original game. Thus \V\ wins from $(pm,qn,\alpha',\beta')$. \qed
\end{proof}

\begin{lemma}
    \label{lem:game_interpretation}
    $pm\SIM{\beta}{\alpha} qn$ iff \V\ has a strategy to win the approximation game that starts in
    $(pm,qn,\alpha,\beta)$.
\end{lemma}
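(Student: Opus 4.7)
I would prove both directions by simultaneous transfinite induction on the pair $(\beta,\alpha)$ (say, ordered lexicographically, or via a well-founded induction on the ordinal max). The structure of both the approximant definition and the game definition is parallel in the two ordinals, so I expect the induction to mirror Definition~\ref{def:approximants} closely, with the game-monotonicity Lemma~\ref{approximants:game:monotonicity} as the main tool for the limit cases.

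\emph{Base cases.} If $\alpha=0$ or $\beta=0$, then $\SIM{\beta}{\alpha}$ is the full relation by definition, so one direction is immediate; for the other direction, the game rule says \V\ wins as soon as a counter becomes $0$, so $(pm,qn,\alpha,\beta)$ is a won position for \V\ without her having to move. Hence both statements hold vacuously.

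\emph{Successor case.} For $\SIM{\beta+1}{\alpha+1}$, I would simply unfold one round of the game and match it clause by clause against Definition~\ref{def:approximants}. For the ($\Rightarrow$) direction, given $pm\SIM{\beta+1}{\alpha+1}qn$ and a \R-move (consisting of a choice $\alpha'\le\alpha$, $\beta'\le\beta$ and a step $pm\step{a}p'm'$), the defining clause yields a response $qn\step{a}q'n'$ such that either $p'm'\SIM{\beta}{\alpha}q'n'$ (if an $\omega$-transition was used) or $p'm'\SIM{\beta+1}{\alpha}q'n'$. Applying the IH gives \V\ a winning strategy from the successor position $(p'm',q'n',\alpha',\beta')$ or $(p'm',q'n',\alpha',\beta)$ respectively, which I then combine using Lemma~\ref{approximants:game:monotonicity} to handle the case that \R\ picked ordinals strictly below $\alpha$ or $\beta$. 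The ($\Leftarrow$) direction is entirely symmetric: I take \V's winning strategy, consider the responses it prescribes to each possible \R-move, and read off the required witnesses for the defining clauses via the IH.

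\emph{Limit case.} Suppose $\alpha$ is a limit (the case of $\beta$ a limit is analogous). For ($\Rightarrow$): if $pm\SIM{\beta}{\alpha}qn$ then by definition $pm\SIM{\beta}{\alpha'+1}qn$ for every $\alpha'<\alpha$. By IH, \V\ wins the game from $(pm,qn,\alpha'+1,\beta)$ for each such $\alpha'$. In the limit game starting from $(pm,qn,\alpha,\beta)$, \R\ must open the round by choosing some $\alpha'<\alpha$ and $\beta'\le\beta$; \V\ then simply adopts the winning strategy witnessing $(pm,qn,\alpha'+1,\beta)$ (since \R's choice of $\alpha'$ would also be legal in that smaller game) and follows it thereafter. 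For ($\Leftarrow$): if \V\ wins from $(pm,qn,\alpha,\beta)$, then by Lemma~\ref{approximants:game:monotonicity} she also wins from $(pm,qn,\alpha',\beta)$ for every $\alpha'<\alpha$; the IH then gives $pm\SIM{\beta}{\alpha'}qn$ for all such $\alpha'$, and intersecting yields $pm\SIM{\beta}{\alpha}qn$.

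\emph{Main obstacle.} The routine worry is bookkeeping of the two ordinal indices simultaneously; the slightly less routine point is the limit direction, where one must explain why a single winning strategy suffices despite Spoiler being free to pick any $\alpha'<\alpha$. Here game-monotonicity is exactly what lets the strategies for all lower $\alpha'$ be subsumed by (or conversely derived from) a single strategy at level $\alpha$, so Lemma~\ref{approximants:game:monotonicity} is doing the real work. Once that is clear, everything else is a mechanical unfolding of definitions.
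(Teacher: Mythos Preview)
Your outline is correct for the base case, the double-successor case, and the $\alpha$-limit case, but the claim that ``the case of $\beta$ a limit is analogous'' hides a genuine gap in the $(\Rightarrow)$ direction. The two ordinal counters are \emph{not} symmetric in the game: the step-counter $\alpha$ decreases in every round, whereas the $\omega$-counter $\beta$ stays put whenever \V\ responds with a non-$\omega$ transition. If you mimic your $\alpha$-limit argument and, after \R\ picks $\beta'<\beta$, let \V\ borrow her winning strategy from $(pm,qn,\alpha,\beta'+1)$, then a non-$\omega$ response takes the \emph{actual} game to $(p'm',q'n',\alpha',\beta)$, not to $(p'm',q'n',\alpha',\beta'+1)$. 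In subsequent rounds \R\ may now choose some $\beta''$ with $\beta'+1\le\beta''<\beta$, a move that is illegal in the borrowed game and for which the borrowed strategy offers no reply. Lemma~\ref{approximants:game:monotonicity} only lets you transport winning strategies \emph{downwards} in $\beta$, so it cannot rescue this step.

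The paper fills this gap with a pigeonhole argument that exploits the finite branching of non-$\omega$ transitions. From $pm\SIM{\gamma}{\alpha'+1}qn$ for every $\gamma<\beta$ one gets, for each $\gamma$, \emph{some} response to \R's move; if no single $\omega$-response already yields $p'm'\SIM{\beta'}{\alpha'}q'n'$, then the finitely many non-$\omega$ transitions must cover all $\gamma<\beta$, so one fixed non-$\omega$ response $qn\step{a}q'n'$ works for cofinally many $\gamma$ and hence satisfies $p'm'\SIM{\beta}{\alpha'}q'n'$. Now the induction hypothesis applies at the strictly smaller pair $(\alpha',\beta)$. This finiteness/pigeonhole step is the missing idea in your plan; once you add it (and treat the mixed cases where one index is a successor and the other a limit), the rest of your outline goes through and matches the paper's proof.
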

\begin{proof}
    We show both directions by well-founded induction on the pairs of ordinals
    $(\alpha,\beta)$.

    For the ``only if'' direction we assume $pm\SIM{\beta}{\alpha}qn$ and show that \V\ wins the game
    from $(pm,qn,\alpha,\beta)$.
    In the base case of $\alpha=0$ or $\beta=0$ \V\ directly wins by definition.
    By induction hypothesis we assume that the claim is true for all pairs
    pointwise smaller than $(\alpha,\beta)$.
    \R\ starts a round by picking ordinals $\alpha'<\alpha$ and $\beta'<\beta$
    and moves $pm\step{a}p'm'$.
    We distinguish two cases, depending on whether $\beta$ is a limit or
    successor ordinal.

    Case 1: $\beta$ is a successor ordinal.
    By Lemma~\ref{approximants:game:monotonicity} we can safely assume that
    $\beta'=\beta-1$.
    By our assumption $pm\SIM{\beta}{\alpha}qn$ and Def.~\ref{def:approximants},
    there must be a response $qn\step{a}q'n'$ that is either due to an $\omega$-transition
    and then $p'm'\SIM{\beta'}{\alpha'}q'n'$ or due to an ordinary transition, in which case we have
    $p'm'\SIM{\beta'+1}{\alpha'}q'n'$. In both cases, we know by the induction hypothesis
    that \V\ wins from this next position and thus also from the initial position.

    Case 2: $\beta$ is a limit ordinal.
    By $pm\SIM{\beta}{\alpha}qn$ and Def.~\ref{def:approximants},
    we obtain $pm\SIM{\gamma}{\alpha} qn\text{ for all }\gamma <\beta$.
    If $\alpha$ is a successor ordinal then, by Lemma~\ref{approximants:game:monotonicity},
    we can safely assume that $\alpha'= \alpha -1$.
    Otherwise, if $\alpha$ is a limit ordinal, then, by
    Def.~\ref{def:approximants}, we have
    $pm\SIM{\gamma}{\alpha''} qn\text{ for all }\alpha'' <\alpha$ and in
    particular $pm\SIM{\gamma}{\alpha'+1} qn$. So in either case we obtain  
    \begin{equation}
        \label{eq:game1}
        pm\SIM{\gamma}{\alpha'+1} qn\text{ for all }\gamma <\beta.
    \end{equation}
    If there is some $\omega$-transition that allows a response
    $qn\Step{a}{}{_\omega} q'n'$ that satisfies $p'm'\SIM{\beta'}{\alpha'}q'n'$,
    then \V\ picks this response and we can use the induction hypothesis
    to conclude that he wins the game from the next position.
    Otherwise, if no such $\omega$-transition exists,
    Equation~\eqref{eq:game1} implies that for every $\gamma <\beta$
    there is a response to some $q'n'$ that uses a non-$\omega$-transition $t(\gamma)$
    and that satisfies $p'm'\SIM{\gamma}{\alpha'}q'n'$.
    Since $\beta$ is a limit ordinal, there exist infinitely many $\gamma < \beta$.
    By the pigeonhole principle, that there must be one transition that occurs as
    $t(\gamma)$ for infinitely many $\gamma$. Therefore, a response that uses this transition
    satisfies $p'm'\SIM{\beta}{\alpha'}q'n'$.
    If \V\ uses this response, the game continues from position $(p'm',q'n',\alpha',\beta)$
    and he wins by induction hypothesis.

    For the ``if'' direction we show that $pm\notSIM{\beta}{\alpha}qn$ implies that \R\
    has a winning strategy in the approximation game from $(pm,qn,\alpha,\beta)$.
    In the base case of $\alpha=0$ or $\beta=0$ the implication holds
    trivially since the premise is false.
    By induction hypothesis we assume that the implication is true
    for all pairs pointwise smaller than $(\alpha,\beta)$.
    Observe that if $\alpha$ or $\beta$ are limit ordinals then
    (by Def.~\ref{def:approximants}) there are successors
    $\beta'\leq \beta$ and $\alpha'\leq \alpha$ s.t. $pm\notSIM{\beta'}{\alpha'}qn$. So without loss of
    generality we can assume that $\alpha$ and $\beta$ are successors.
    By the definition of approximants there must be a move $pm\step{a}p'm'$ s.t.
    \begin{itemize}
        \item for every possible response $qn\Step{a}{}{\omega}q'n'$ that uses
            some $\omega$-transition we have $p'm'\notSIM{\beta-1}{\alpha-1}q'n'$,
        \item for every possible response $qn\step{a} q'n'$ via
            some normal transition it holds that $p'm'\notSIM{\beta}{\alpha -1}q'n'$.
    \end{itemize}
    So if \R\ chooses $\alpha'=\alpha-1$, $\beta'=\beta-1$ and moves $pm\step{a}p'm'$
    then any possible response by \V\ will take the game to a
    position $(p'm',q'n',\gamma,\alpha')$ for a $\gamma\le \beta$.
    By induction hypothesis $\R$ wins the game.
    \qed
\end{proof}
\begin{lemma}
        \label{approximants:properties}
        For all ordinals $\alpha,\beta$ the following properties hold.
        \begin{enumerate}
            \item $pm\SIM{\beta}{\alpha}qn$ implies $pm'\SIM{\beta}{\alpha}qn'$ for all $m'\le m$ and $n'\ge n$
                \label{approximants:monotonicity}
            \item If $\alpha'\geq \alpha$ and $\beta'\ge \beta$ then
                $\SIM{\beta'}{\alpha'}\subseteq\SIM{\beta}{\alpha}$\label{approximants:alpha-inc}.
            \item There are ordinals $\CA,\CB$ such that 
                $\SIM{}{\CA} = \SIM{}{\CA+1}$ and $\SIM{\CB}{} = \SIM{\CB+1}{}$.
                \label{convergence_ordinals}
            \item $\SIM{}{}=\bigcap_{\alpha}\SIM{}{\alpha} = \bigcap_\beta\SIM{\beta}{}$\label{convergence}
        \end{enumerate}
    \end{lemma}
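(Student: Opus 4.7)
The four items will be addressed in order, each one resting on the previous.

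For item~\ref{approximants:monotonicity} I would use transfinite induction on the pair $(\alpha,\beta)$ together with the monotonicity of single steps from Prop.~\ref{reduction:monotonicity}. Any Spoiler move from $pm'$ (with $m'\le m$) can be lifted to a move from $pm$ that reaches $p''(m''+(m-m'))$; the Duplicator response from $qn$ guaranteed by the hypothesis $pm\SIM{\beta}{\alpha}qn$ is then pushed forward to a response from $qn'$, using Prop.~\ref{reduction:monotonicity} for ordinary transitions and the freedom to enlarge the target counter for $\omega$-transitions, after which the induction hypothesis closes the step. Item~\ref{approximants:alpha-inc} is essentially free: Lemma~\ref{lem:game_interpretation} identifies $\SIM{\beta}{\alpha}$ with the winning region of the $(\alpha,\beta)$-approximant game, and Lemma~\ref{approximants:game:monotonicity} already shows that this region is pointwise monotone in the two budgets.

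For item~\ref{convergence_ordinals}, item~\ref{approximants:alpha-inc} tells us that $\{\SIM{}{\alpha}\}_\alpha$ and $\{\SIM{\beta}{}\}_\beta$ are $\subseteq$-decreasing chains of subsets of the fixed set $Q\x\N\x Q'\x\N$; the standard cardinality argument (a strictly decreasing chain of subsets of $S$ has length bounded in terms of $|\mathcal{P}(S)|$) then forces each chain to stabilise at some ordinals $\CA,\CB$.

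For item~\ref{convergence}, commuting the intersections yields $\bigcap_\alpha\SIM{}{\alpha}=\bigcap_{\alpha,\beta}\SIM{\beta}{\alpha}=\bigcap_\beta\SIM{\beta}{}$, so the two equalities reduce to identifying this common intersection with $\SIM{}{}=\msim$. Prop.~\ref{prop:game-char} gives $\msim$ as the intersection of the standard one-dimensional approximants, so the task further reduces to showing that the present $\SIM{}{\alpha}=\bigcap_\beta\SIM{\beta}{\alpha}$ coincides with the standard $\alpha$-th approximant (the correspondence anticipated right after Def.~\ref{def:approximants}). This is proved game-theoretically: a Duplicator strategy winning the standard $\alpha$-round game trivially wins the two-dimensional $(\alpha,\beta)$-game for every $\beta$, and conversely, a family of strategies winning uniformly across all $\beta$ can be assembled (much as in the limit case of Lemma~\ref{lem:game_interpretation}) into a strategy for the standard game, in which the $\omega$-budget plays no role.

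The main obstacle lies in that last converse identification. Because $\omega$-transitions make Duplicator's game graph infinitely branching, one cannot extract a uniform Duplicator strategy by a naive K\"onig-style argument; instead one exploits that the $\omega$-counter only decreases on rounds in which $\V$ uses an $\omega$-transition, so that a sufficiently large $\beta$ is never exhausted within any bounded prefix of play, and combines this with a pigeonhole/cofinality argument of the same flavour as the limit-ordinal case in the proof of Lemma~\ref{lem:game_interpretation} together with the stabilisation guaranteed by item~\ref{convergence_ordinals}.
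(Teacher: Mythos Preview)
Your treatment of items~\ref{approximants:monotonicity}--\ref{convergence_ordinals} is sound and essentially matches the paper. For item~\ref{approximants:monotonicity} the paper takes the slightly shorter route of invoking Lemma~\ref{lem:game_interpretation} and then simply reusing \V's winning strategy verbatim (Prop.~\ref{reduction:monotonicity}, applied step by step, shows the reused strategy remains legal from the shifted position), rather than unfolding a two-dimensional transfinite induction; but your inductive argument is correct.

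For item~\ref{convergence} your route genuinely diverges from the paper's. You reduce the problem to identifying $\SIM{}{\alpha}$ with the standard $\alpha$-th approximant and then invoke Prop.~\ref{prop:game-char}. The paper avoids this identification entirely: it observes directly from Def.~\ref{def:approximants} that $\SIM{}{\alpha}=\SIM{\gamma}{\alpha}$ whenever $\gamma\ge\alpha$ (since the $\omega$-budget can drop at most once per round, it cannot be exhausted before the step-budget is), whence $\SIM{}{\CA}=\SIM{\CA}{\CA}=\SIM{\CA+1}{\CA+1}$. The last equality says this relation is closed under one round of the approximant operator, so it is itself a simulation and hence contained in $\msim$. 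The reverse inclusion $\msim\subseteq\SIM{}{\alpha}$ is handled by a short ordinal induction on $\alpha$. This is more direct and requires no strategy-assembly argument.

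Your sketch for the hard direction of the identification is the weakest point. The ingredients you list --- a pigeonhole/cofinality argument and the stabilisation ordinal from item~\ref{convergence_ordinals} --- are not what is actually needed: item~\ref{convergence_ordinals} supplies a stabilisation ordinal $\CB$ for the sequence $\SIM{\beta}{}$, not for $\SIM{\beta}{\alpha}$ at fixed $\alpha$, and no pigeonhole step is required. What makes the identification go through is precisely the observation the paper exploits, namely that $\SIM{\beta}{\alpha}$ is already constant for $\beta\ge\alpha$, so $\SIM{}{\alpha}=\SIM{\alpha}{\alpha}$; a straightforward transfinite induction on $\alpha$ then shows $\SIM{\alpha}{\alpha}$ coincides with the standard $\alpha$-approximant. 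If you sharpen your last paragraph to this, your route goes through; as written it gestures at machinery that is both unnecessary and not obviously sufficient.
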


    The first point states that individual approximants are monotonic in the sense of Proposition
    \ref{reduction:monotonicity}.
    Points 2.-\ref{convergence}. imply that both $\SIM{}{\alpha}$ and $\SIM{\beta}{}$
    yield non-increasing sequences of approximants that converge towards simulation.
    It is easy to see that the approximants $\SIM{}{\alpha}$ do not
    converge at finite levels, and not even at $\omega$, i.e., $\CA>\omega$ in
    general. However, we will show that the approximants $\SIM{\beta}{}$ do
    converge at a finite level, i.e., $\CB\in\N$ for any pair of nets.

    \begin{proof}
     1) By Lemma~\ref{lem:game_interpretation} it suffices
     to observe that \V\ can reuse a winning strategy in the approximant game
     from $(pm,qn,\alpha,\beta)$ to win the game from $(pm-d_1,qn+d_2,\alpha,\beta)$
     for naturals $d_1, d_2$.

     2) If $pm\SIM{\beta'}{\alpha'}qn$ then, by Lemma~\ref{lem:game_interpretation}, \V\ wins
     the approximant game from position $(pm,qn, \beta', \alpha')$.
     By Lemma~\ref{approximants:game:monotonicity} he can also win the approximant game
     from $(pm,qn, \beta,\alpha)$. Thus $pm\SIM{\beta}{\alpha}qn$ by Lemma~\ref{lem:game_interpretation}.

     3) By point 2) we see that with increasing ordinal index $\alpha$ the approximant
     relations $\SIM{}{\alpha}$ form a decreasing sequence of relations,
     thus they stabilize for some ordinal $\CA$.
     The existence of a convergence ordinal for $\SIM{\CB}{}$ follows analogously.

     4) First we observe that
     $\bigcap_{\alpha}\SIM{}{\alpha} = \bigcap_{\alpha} \bigcap_\beta\SIM{\beta}{\alpha} =
     \bigcap_{\beta} \bigcap_\alpha \SIM{\beta}{\alpha}=
     \bigcap_\beta\SIM{\beta}{}$.
     It remains to show that $\SIM{}{} = \bigcap_{\alpha} \SIM{}{\alpha}$.

     To show $\SIM{}{} \supseteq \bigcap_{\alpha} \SIM{}{\alpha}$,
     we use $\CA$ from point \ref{convergence_ordinals}) and rewrite the right side to
     $\bigcap_{\alpha} \SIM{}{\alpha} = \SIM{}{\CA} = \SIM{}{\CA+1}$.
     From Definition \ref{def:approximants} we get that $\SIM{}{\alpha}=\SIM{\gamma}{\alpha}$
     for $\gamma\ge \alpha$ and therefore
     $\SIM{\CA+1}{\CA+1}=\SIM{}{\CA+1}=\SIM{}{\CA}=\SIM{\CA}{\CA}$.
     This means $\SIM{\CA}{\CA}=\bigcap_{\alpha}\SIM{}{\alpha}$ must be a simulation relation and hence a subset of $\msim$.

     To show $\SIM{}{} \subseteq \bigcap_{\alpha} \SIM{}{\alpha}$,
     we prove by ordinal induction that $\SIM{}{} \subseteq \SIM{}{\alpha}$ for all ordinals $\alpha$.
     The base case $\alpha=0$ is trivial.
     For the induction step we prove the equivalent property
     $\notSIM{}{\alpha} \subseteq \notSIM{}{}$.
     There are two cases.

     In the first case, $\alpha=\alpha'+1$ is a successor ordinal.
     If $pm\notSIM{}{\alpha'+1}qn$ then
     $pm\notSIM{\alpha'+1}{\alpha'+1}qn$ and therefore,
     by Lemma~\ref{lem:game_interpretation},
     \R\ wins the approximant game from
     $(pm,qn,\alpha'+1,\alpha'+1)$.
     Let $pm\step{a}p'm'$ be an optimal initial move by  \R.
     Now either there is no valid response and thus \R\ immediately wins
     in the simulation game or for every \V\ response $qn\step{a}q'n'$ we have
     $p'm'\notSIM{\alpha'}{\alpha'}q'n'$. Then also $p'm'\notSIM{}{\alpha'}q'n'$ and by induction
     hypothesis $p'm'\notSIM{}{}q'n'$.
     By Proposition~\ref{prop:game-char} we obtain that
     \R\ wins the simulation game from $(p'm',q'n')$ and thus from $(pm,qn)$.
     Therefore $pm\notSIM{}{}qn$, as required.

     In the second case, $\alpha$ is a limit ordinal. Then $pm\notSIM{}{\alpha}qn$ implies
     $pm\notSIM{}{\alpha'}qn$ for some $\alpha'< \alpha$
     and therefore $pm\notSIM{}{}qn$ by induction hypothesis.
     \end{proof}

The following lemma shows a certain uniformity property of the simulation
game. Beyond some fixed bound, an increased counter value of Spoiler can be
neutralized by an increased counter value of Duplicator, thus enabling Duplicator
to survive at least as many rounds in the game as before.

    \begin{lemma}\label{thm:approximants:colouring}
        For any one-counter net $N=(Q,\Alf,\delta)$ and $\omega$-net $N'=(Q'\Alf,\delta')$
        there is a fixed bound $c\in\N$ s.t. for all states $p\in Q,q\in Q'$, naturals $m'>m>c$
        and ordinals $\alpha$:
         \begin{equation}
             \label{thm1:eq}
             \forall n. ( pm\SIM{}{\alpha} qn \implies \exists n'. pm'\SIM{}{\alpha} qn')
         \end{equation}
    \end{lemma}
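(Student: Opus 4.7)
The plan is to reformulate the claim via monotonicity as a uniform dichotomy, and then establish it using a shadow-simulation strategy in the approximant game, with a colouring (pigeonhole) switch to handle the case when the shadow breaks down.

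First I would exploit Lemma~\ref{approximants:properties}(\ref{approximants:monotonicity}): the set $W_\alpha(p,q) := \{(m,n) : pm \SIM{}{\alpha} qn\}$ is downward closed in $m$ and upward closed in $n$, so its projection $M_\alpha(p,q) := \{m : \exists n.\, pm \SIM{}{\alpha} qn\}$ is a downward-closed subset of $\N$, i.e.\ either all of $\N$ or an initial segment $\{0, \dots, k_{p,q,\alpha}\}$. The lemma is then equivalent to uniformly bounding all finite $k_{p,q,\alpha}$ by a constant $c = c(N, N')$ depending only on the nets, not on $p, q,$ or $\alpha$.

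Second, I would pass to the game characterisation of Lemma~\ref{lem:game_interpretation}. Given $pm \SIM{}{\alpha} qn$ with $m > c$ and a \V\ winning strategy $\sigma$ from $(pm,qn,\alpha)$, I would construct a winning strategy $\sigma'$ from $(pm', q(n + (m'-m)), \alpha)$ by \emph{shadowing} $\sigma$. Each real \R\ move $pm' \step{a} p''(m'+d)$ is matched by the shadow move $pm \step{a} p''(m+d)$ (valid while $m+d \ge 0$), $\sigma$ is consulted for a response $qn \step{a} q''n_1$, and \V\ plays the corresponding real transition with counter lifted by $m'-m$; for $\omega$-transitions of $N'$, \V\ is free to pick any sufficiently large counter. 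The invariant ``real \R\ counter $-$ shadow \R\ counter $= m'-m$'' is preserved as long as the shadow counter stays nonnegative, i.e.\ as long as \R's real counter stays $\ge m'-m$.

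The main obstacle is that \R\ may drive its real counter below $m'-m$, breaking the shadow. Here I would invoke a pigeonhole / colouring argument exploiting the finiteness of $Q \x Q'$: in any play lasting more than $|Q| \cdot |Q'|$ rounds in the shadow, some control-state pair $(p_i, q_i)$ must recur, giving a cycle in \V's strategy tree along which, by monotonicity, counters can be pumped to reinitialise the shadow at a fresh, higher \R\ counter. Choosing $c$ of order $|Q| \cdot |Q'|$ (with a small structural safety margin from the transition effects) guarantees that such a cycle exists before the shadow can fail.

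I expect the hardest subtlety to be the limit-ordinal case, since $\SIM{}{\alpha} = \bigcap_{\alpha' < \alpha} \SIM{}{\alpha'}$ and intersections of nonempty upward-closed subsets of $\N$ can collapse to the empty set --- precisely the phenomenon that causes $\CA$ to exceed $\omega$ on $\omega$-nets. To handle it, the pumping cycle produced in the successor step must be chosen uniformly enough to survive taking the limit; a Dickson-style argument on the finitely many minimal witnesses of $W_{\alpha'}(p,q)$ across $\alpha' < \alpha$ should deliver this uniformity and close a transfinite induction on $\alpha$.
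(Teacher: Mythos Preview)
Your approach is fundamentally different from the paper's, and the central step has a genuine gap.

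The paper's proof is entirely non-constructive. For a fixed pair $(p,q)$ it sets $I(m,n)$ to be the largest ordinal $\alpha$ with $pm\SIM{}{\alpha}qn$, defines $S(m)=\sup_n I(m,n)$, and observes via monotonicity that $S(m)$ is non-increasing in $m$. Well-foundedness of the ordinals forces this sequence to stabilise at some finite index $k$; a short case analysis (on whether $I(l)$ attains its supremum for $l\ge k$) then shows that $c=k$, or a slightly larger index, satisfies \eqref{thm1:eq}. No bound on $c$ in terms of $|Q|,|Q'|$ is produced, and none is claimed.

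Your shadowing construction is sound as far as it goes, but the pigeonhole step does not rescue it when the shadow breaks. A recurrence of a control-state pair $(p_i,q_i)$ in the shadow play gives you two positions with the \emph{same} control but with a \emph{strictly smaller} ordinal parameter at the later one; since \R\ is adversarial, there is no way to ``pump'' his counter upward, and monotonicity in the counters does not let you trade back the lost ordinal. The approximant game is not positionally determined by the counters alone --- the ordinal coordinate matters and only decreases --- so repetition of $(p_i,q_i)$ yields nothing you can re-enter the shadow with. Consequently your claim that $c$ of order $|Q|\cdot|Q'|$ suffices is unsupported; the paper's constant is the stabilisation index of an ordinal-valued sequence and need not be polynomial in the net sizes. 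The limit-ordinal difficulty you correctly identify is precisely why the paper abandons any combinatorial bound and argues via suprema: the minimal witnesses of $W_{\alpha'}(p,q)$ can drift unboundedly as $\alpha'$ increases toward a limit, so a Dickson-style finiteness statement about any single level does not give the uniformity across $\alpha$ that you need.
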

    \begin{proof}
        It suffices to show the existence of a local bound $c$ for any given
        pair of states $p,q$ that satisfies \eqref{thm1:eq}, since
        we can simply take the global $c$ to be the maximal such bound
        over all finitely many pairs $p,q$.
        Consider now a fixed pair $p,q$ of states.
        For $m,n\in\N$, we define the following (sequences of) ordinals.
        \begin{align*}
            I(m,n) = &\ \text{the largest ordinal $\alpha$ with } pm\SIM{}{\alpha} qn
            \text{ or }\CA\\ &\ \text{if no such $\alpha$ exists},\\
            I(m) = &\ \text{the increasing sequence of ordinals $I(m,n)_{n\ge 0}$},\\
            S(m) = &\ \sup\{I(m)\}.
        \end{align*}

        Observe that $I(m,n)$ can be presented as an infinite matrix where $I(m)$ is a column and
        $S(m)$ is the limit of the sequence of elements of column $I(m)$ looking upwards.
        Informally, $S(m)=lim_{i\rightarrow \infty} I(m,i).$

\begin{center}
 \includegraphics[angle=0]{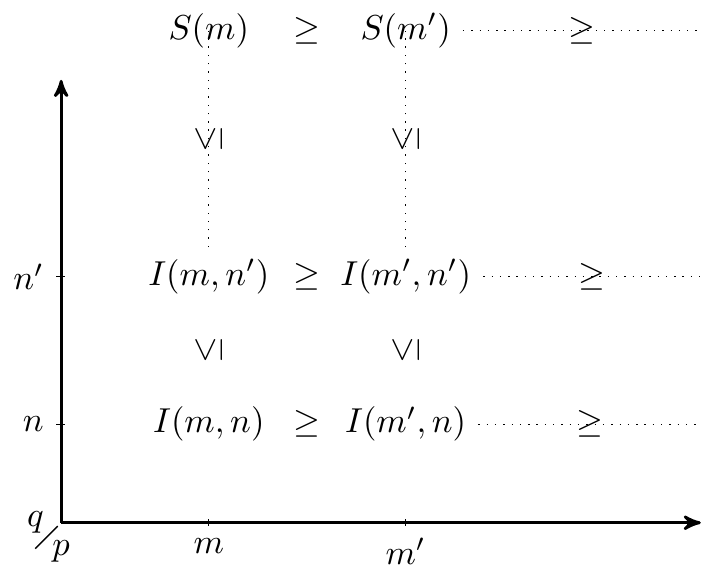}
\end{center}

        By Lemma~\ref{approximants:properties} (point \ref{approximants:monotonicity}), we derive
        that for any $n'>n\in\N$ and $m'>m\in\N$
        \begin{equation}
            I(m,n')\ge I(m,n) \ge I(m',n)\label{eq:approximants:thm1:mon1}
        \end{equation}
        and because of the second inequality also $S(m)\ge S(m')$.
        So the ordinal sequence $S(m)_{m\ge0}$ of suprema must be non-increasing and by the
        well-ordering of the ordinals there is a smallest index $k \in \N$ at
        which this sequence stabilizes:
        \[
            \forall l>k.\ S(l) = S(k).
        \]
        We split the remainder of this proof into three cases depending on whether $I(k)$ and $I(l)$
        for some $l>k$ have maximal
        elements. In each case we show the existence of a bound $c$ that satisfies requirement \eqref{thm1:eq}.

        Case 1. For all $l\ge k$ and $n\in\N$ it holds that $I(l,n)<S(l)$,
        i.e., no $I(l)$ has a maximal element.
        In this case $c := k$ satisfies the requirement \eqref{thm1:eq}.
        To see this, take $m'>m>c=k$ and $pm\SIM{}{\alpha} qn$.
        Then, by our assumption, $\alpha<S(m)$ and $S(m)=S(m')=S(k)$.
        Therefore $\alpha < S(m')$. Thus there must exist an $n' \in \N$ s.t.
        $pm'\SIM{}{\alpha} qn'$, as required.

        Case 2. For all $l\ge k$ there is a $n_l \in \N$ such that $I(l,n_l) = S(l)$,
        i.e., all $I(l)$ have maximal element $S(l)=S(k)$.
        Again $c := k$ satisfies the requirement \eqref{thm1:eq}.
        Given $m'>m>c=k$ and $pm\SIM{}{\alpha} qn$ we let $n' := n_{m'}$ and obtain
        $I(m',n') = S(m') = S(k) \ge\alpha$ and thus $pm'\SIM{}{\alpha} qn'$, as required.

        Case 3. If none of the two cases above holds then there must exist
        some $l > k$ s.t. the sequences $I(k), \dots, I(l-1)$ each have a maximal element and for
        $l' >l$ the sequence $I(l')$ has no maximal element.
        To see this, consider sequences $I(x)$ and $I(x')$ with $x' > x \ge k$.
        If $I(x')$ has a maximal element then so must $I(x)$, by equation
        \eqref{eq:approximants:thm1:mon1} and $S(x)=S(x')=S(k)$.
        Given this, we repeat the argument of the Case 1. with $c:=l$ and
        again satisfy the requirement \eqref{thm1:eq}. \qed
    \end{proof}

    \begin{lemma}\label{thm:approximants:convergence}
        Consider strong simulation $\SIM{}{}$ between a one-counter net
        $N=(Q,\Alf,\delta)$ and an $\omega$-net $N'=(Q',\Alf,\delta')$.
        There exists a constant $\CB\in\N$ s.t. $\SIM{}{}=\SIM{\CB}{}$.
    \end{lemma}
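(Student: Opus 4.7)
The plan is to prove finite convergence by contradiction, combining the Colouring Lemma (Lemma \ref{thm:approximants:colouring}) with Dickson-style pigeon-holing on the witnesses and a game-theoretic strategy transfer.

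First I would assume, towards a contradiction, that $\SIM{k+1}{} \subsetneq \SIM{k}{}$ holds for every $k\in\N$. For each $k$ I pick a witness pair of configurations $(p_k m_k, q_k n_k) \in \SIM{k}{}\setminus\SIM{k+1}{}$. Since $Q \times Q'$ is finite, by pigeon-holing I pass to an infinite subsequence on which the control-states $(p_k,q_k) = (p,q)$ are constant. The problem is thus reduced to analysing a sequence of counter-pairs $(m_k,n_k)\in\N^2$ tied to the fixed pair $(p,q)$.

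Next, I invoke the Colouring Lemma at $\alpha=\CA$, using that $\SIM{}{}=\SIM{}{\CA}$ by Lemma \ref{approximants:properties}, to fix a threshold $c\in\N$ beyond which the existence of a matching Duplicator counter in $\SIM{}{}$ propagates uniformly to all larger Spoiler counters. I then apply Dickson's Lemma to the sequence $(m_k,n_k)\in\N^2$ and thin it to an increasing subsequence. By a further case split, I may assume either that all remaining $m_k$ lie below $c$ (and hence coincide with a common value $m$, by finiteness of $\{0,\dots,c\}$), or that they strictly exceed $c$; in the former case, monotonicity in $n$ (Lemma \ref{approximants:properties}, part \ref{approximants:monotonicity}) forces the sequence of minimal Duplicator-counters $N_k = \min\{n:pm\SIM{k}{}qn\}$ to be strictly increasing and hence divergent.

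The heart of the argument is then a strategy-transfer, framed through the game characterisation (Lemma \ref{lem:game_interpretation}). Starting from Duplicator's winning strategy in the $i$-approximant game from $(pm_i,qn_i)$, I construct a Duplicator strategy for the $(j+1)$-approximant game from $(pm_j,qn_j)$ that follows the reference strategy whenever monotonicity suffices, and whenever a Spoiler move pushes the counter beyond the scope of monotonicity, Duplicator invokes an $\omega$-transition in $N'$ to jump to a counter value of the size supplied by the Colouring uniformity. A careful bookkeeping shows that the number of such invocations needed along any play is bounded by a constant depending only on $c$ and the control-state sizes, thereby yielding $pm_j \SIM{j+1}{} qn_j$ for all sufficiently large $j$ and contradicting the choice of witnesses.

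The hard part will be the final bookkeeping step. The Colouring Lemma is phrased for $\SIM{}{\alpha}$, i.e.\ the intersection over $\beta$, whereas the witnesses live in $\SIM{\beta}{}$, the intersection over $\alpha$; so the bridge between the two has to be built by hand. In particular, the transferred strategy must respect the $\beta$-counter bookkeeping of Definition \ref{def:approxgame}, and so the number and timing of the $\omega$-transitions invoked needs to be chosen so that the remaining $\beta$-budget stays positive until Duplicator reaches a position already covered by monotonicity and the Colouring Lemma's uniform region.
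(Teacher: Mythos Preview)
Your proposal has a genuine gap at exactly the point you flag as the ``hard part'', and the difficulty is structural rather than just bookkeeping. The Colouring Lemma speaks about $\SIM{}{\alpha}$ (it transfers membership in a fixed $\alpha$-approximant from $pm$ to $pm'$), whereas your witnesses $(pm_k,qn_k)\in\SIM{k}{}\setminus\SIM{k+1}{}$ live in the $\beta$-hierarchy. You never explain how to bridge this. Moreover, your strategy-transfer idea is underspecified in a way that matters: after Dickson you have $m_i\le m_j$ and $n_i\le n_j$, but monotonicity (Lemma~\ref{approximants:properties}, part~\ref{approximants:monotonicity}) goes the wrong way in \R's coordinate, so a winning \V-strategy from $(pm_i,qn_i)$ does not simply port to $(pm_j,qn_j)$. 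Invoking the Colouring Lemma at $\alpha=\CA$ does not help either, since your witnesses satisfy $pm_k\notSIM{}{}qn_k$ and hence the hypothesis of the lemma at $\alpha=\CA$ is not available.

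The paper avoids this entirely by \emph{not} taking an infinite family of witnesses. It fixes one explicit value $\beta=|Q\times Q'|(c+1)$, takes a \emph{single} witness $(p_0m_0,q_0n_0)\in\SIM{\beta}{}\setminus\SIM{}{}$, and analyses a \emph{single play} of the ordinary simulation game from $(p_0m_0,q_0n_0)$ in which both players play optimally. Along that play one tracks the simulation levels $\alpha_i$ (the largest $\alpha$ with $p_im_i\SIM{}{\alpha}q_in_i$), which are precisely the $\SIM{}{\alpha}$-style quantities the Colouring Lemma talks about; optimality of \R\ makes them strictly decrease. The assumption $p_0m_0\SIM{\beta}{}q_0n_0$ forces \V\ to use at least $\beta$ $\omega$-moves before losing, and optimality of \V\ forces \R's counter to strictly increase across any repeated control-state pair that is entered via an $\omega$-move. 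The choice of $\beta$ then guarantees two such repetitions with \R's counter above $c$, at which point the Colouring Lemma applied to the simulation level $\alpha$ at the earlier repetition produces an alternative \V-response that contradicts the strict decrease of the $\alpha_i$. So the $\alpha$--$\beta$ bridge you are missing is built by reading $\SIM{}{\alpha}$-levels off positions of one concrete play, not by comparing separate witnesses across different $k$.
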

    \begin{proof}
        We assume the contrary and derive a contradiction. By
        Lemma~\ref{approximants:properties}(\ref{convergence}), 
        the inclusion
        $\SIM{}{} \subseteq \SIM{\beta}{}$ always holds for every ordinal $\beta$.
        Thus, if $\nexists \CB\in\N.\,\SIM{}{}=\SIM{\CB}{}$ then for every finite
        $\beta \in \N$ there are processes $p_0m_0$ and $q_0n_0$ s.t.
        $p_0m_0\SIM{\beta}{}q_0n_0$ but $p_0m_0 \notSIM{}{} q_0n_0$.
        In particular, this holds for the special case of $\beta = |Q\x Q'|(c +1)$,
        where $c$ is the constant given by
        Lemma~\ref{thm:approximants:colouring}, which we consider in the
        rest of this proof.

        Since $q_0n_0$ does not simulate $p_0m_0$, we can assume a winning strategy for \R\ in the
        simulation game which is optimal in the sense that it guarantees that
        the simulation level $\alpha_i$ -- the largest ordinal with $p_im_i\SIM{}{\alpha_i}q_in_i$ -- strictly decreases
        along rounds of any play.
        By monotonicity (Lemma~\ref{approximants:properties}, part~\ref{approximants:monotonicity})
        we can thus infer that whenever a pair of control-states repeats along a play, then \V's
        counter must have decreased or \R's counter must have increased:
        Along any partial play\\
        $(p_0m_0,q_0n_0)(t_0,t'_0)(p_1m_1,q_1n_1)(t_t,t'_1)\dots(p_km_k,q_kn_k)$ of length $k$ with
        $p_i=p_j$ and $q_i=q_j$ for some $i<j\le k$ we have $n_j < n_i$ or
        $m_j > m_i$.
        By a similar argument we can assume that \V\ also plays
        optimally, in the sense that he uses $\omega$-transitions to increase
        his counter to higher values than in previous situations with the same
        pair of control-states. By combining this with the previously stated
        property that the sequence of $\alpha_i$ strictly decreases we obtain
        the following:
        \begin{equation}
            \label{eq:approximants:thm2:loop}
            \text{if } p_i=p_j, q_i=q_j \text{ and }
            t'_{i-1},t'_{j-1}\in\delta_\omega \text{ then } m_j > m_i.
        \end{equation}
        Here $\delta_\omega$ denotes the set of transitions with symbolic effect $\omega$ in \V's
        net.

        Although \V\ loses the simulation game between $p_0m_0$ and $q_0n_0$, 
        our assumption $p_0m_0\SIM{\beta}{}q_0n_0$
        with $\beta = |Q\x Q'|(c +1)$ implies that \V\ does not lose with
        less than $\beta$ $\omega$-transitions, regardless of \R's strategy. Thus there
        always is a prefix of a play along which \V\ makes use of
        $\omega$-transitions $\beta$ times.\\
        Let $\pi = (p_0m_0,q_0n_0)(t_0,t'_0)(p_1m_1,q_1n_1)(t_t,t'_1)\dots(p_km_k,q_kn_k)$
        be such a partial play.

        Our choice of $\beta = |Q\x Q'|(c +1)$ guarantees that some pair
        $(p,q)$ of control-states repeats at least $c+1$
        times directly after \V\ making an $\omega$-step.
        Thus there are indices $i(1)<i(2)<\dots <i(c+1)<k$ s.t.
        for all $1\le j\le c+1$ we have $p_{i(j)} = p, q_{i(j)} = q$ and
        $t'_{i(j)}\in\delta_\omega$.
        By observation~\eqref{eq:approximants:thm2:loop} and $m_0 \ge 0$ we
        obtain that $m_{i(x)} \ge x$ for $0 \le x \le c+1$.
        In particular, $c \le m_{i(c)}<m_{i(c+1)}$, i.e., both of \R's counter
        values after the last two such repetitions must lie above $c$.
        This allows us to apply Lemma~\ref{thm:approximants:colouring} to derive a contradiction.

        Let $\alpha$ be the simulation level before this repetition: $\alpha$ is the largest ordinal
        with $pm_{i(c)}\SIM{}{\alpha}qn_{i(c)}$.
        Since $m_{i(c+1)} > m_{i(c)} > c$,
        Lemma~\ref{thm:approximants:colouring} ensures the existence of a natural $n'$ s.t.
        $pm_{i(c +1)}\SIM{}{\alpha}qn'$.
        Because \V\ used an $\omega$-transition in his last response leading to the repetition of
        states there must be a partial play $\pi'$ in which both players make the same moves as
        in $\pi$ except that \V\ chooses $n_{i(c +1)}$ to be $n'$.
        Now in this play we observe that the simulation level did in fact not strictly decrease as this
        last repetition of control-states shows: We have
        $pm_{i(c)}\SIM{}{\alpha}qn_{i(c)}\not\nsucceq_{\alpha+1} pm_{i(c)}$ and
        $pm_{i(c +1)}\SIM{}{\alpha}qn_{i(c +1)}$, which contradicts the
        optimality of \R's strategy. \qed
    \end{proof}

\section{The Main Result}

    We prove the decidability of simulation between one-counter nets and
    $\omega$-nets.
    First, we show that for each
    finite level $k\in\N$ the approximant $\SIM{k}{}$ is effectively
    semilinear, i.e., we can compute the semilinearity description of $\SIM{k}{}$.
    This yields a decision procedure for simulation that works as follows.
    Iteratively compute $\SIM{k}{}$ for growing $k$ and check after each round
    if the approximant has converged yet.
    The convergence test of $\SIM{k}{} \stackrel{?}{=} \SIM{k-1}{}$ can easily be done,
    since the approximants are semilinear sets.
    Termination of this procedure is guaranteed by
    Lemma~\ref{thm:approximants:convergence},
    and the limit is the simulation relation by
    Lemma~\ref{approximants:properties} (point \ref{convergence}).

    We recall the following important result by Jan\v{c}ar, Ku\v{c}era and Moller.
    \begin{theorem}[\cite{JKM2000}]\label{thm:main:reference}
        The largest strong simulation relation $\SIM{}{}$ between processes of two given one-counter
        nets is effectively semilinear.
    \end{theorem}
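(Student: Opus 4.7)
The plan is to show that for each pair of control states $(p,q)$ of the two one-counter nets $M$ and $N$, the set $W(p,q) = \{(m,n) \in \N^2 : pm \msim qn\}$ is effectively semilinear. By Proposition~\ref{reduction:monotonicity}, each $W(p,q)$ is downward-closed in the first coordinate and upward-closed in the second, so it is described by a staircase boundary in $\N^2$; the theorem amounts to proving that this staircase is ultimately periodic with a computable period and threshold.

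A naive iteration of the step-index approximants $\SIM{}{k}$ will not suffice. For instance, for nets with transitions $P \step{a} P$ (counter effect $0$) on one side and $Q \step{a} Q$ (counter effect $-1$) on the other, one has $Pm \SIM{}{k} Qn$ iff $n \ge k$, while $Pm \notSIM{}{} Qn$ for every $n$, so the sequence $(\SIM{}{k})_{k \in \N}$ strictly decreases and does not stabilize at any finite level. Instead, I would analyse Spoiler's winning plays in the simulation game directly via a pumping argument. The key ingredient is a uniformity statement in the spirit of Lemma~\ref{thm:approximants:colouring}: there exists a threshold $c$ depending only on $|Q_M|, |Q_N|$ and the transition effects such that, once both counters exceed $c$, the winner of the game is determined by the control-state pair together with the residues of the counter values modulo a bounded period $T$.

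Granted such a pumping lemma, $W(p,q)$ decomposes as a finite table of small-counter outcomes below the threshold $c$ together with an eventually periodic pattern above $c$, which is a semilinear description; its correctness can then be verified by checking that the resulting candidate relation is a post-fixpoint of the simulation operator, a condition expressible in Presburger arithmetic. The main obstacle is proving the pumping lemma itself in the pure one-counter-net setting: unlike in the reduction to $\omega$-nets, where Duplicator can arbitrarily increase his counter in a single step, here both players change their counters by only $\pm 1$ per transition, so the colouring argument of Lemma~\ref{thm:approximants:colouring} does not transfer directly. The argument must therefore carefully combine pigeonhole over the finitely many control-state pairs visited along long plays with the monotonicity of Proposition~\ref{reduction:monotonicity}, in order to show that Spoiler-winning plays always factor into a bounded initial segment plus finitely many periodic loops whose counter effects and winning certificates can be enumerated effectively.
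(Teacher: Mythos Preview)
The paper does not prove Theorem~\ref{thm:main:reference}; it is quoted verbatim as a known result of Jan\v{c}ar, Ku\v{c}era and Moller~\cite{JKM2000} and then used as a black box inside the proof of Lemma~\ref{lem:main:approximants}. There is therefore no proof in the present paper to compare your proposal against.

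Judged on its own terms, what you have written is a plan rather than a proof. You correctly identify the target shape (for each $(p,q)$ an ultimately periodic staircase boundary of $W(p,q)$, by the monotonicity of Proposition~\ref{reduction:monotonicity}) and you correctly note that the finite-index approximants $\SIM{}{k}$ do not stabilise. But the entire argument hinges on the pumping lemma you announce---a threshold $c$ and period $T$ beyond which the winner depends only on the control-state pair and the residues of the two counters modulo $T$---and you explicitly label this as ``the main obstacle'' without proving it. Until that lemma is established, nothing has been shown; in particular, the Presburger post-fixpoint check you mention only verifies a \emph{candidate} once one already has it, it does not produce the description.

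For orientation, the actual argument in~\cite{JKM2000} (and the earlier colouring technique of~\cite{JM1999}) is more geometric than your residue formulation: for each pair of control states the boundary of the non-simulation region in $\N^2$ is shown to lie inside a finite union of linear \emph{belts} of bounded width, whose slopes and offsets are computable from the nets. The crucial combinatorial step is a colouring of game positions that constrains how any optimal Spoiler play can cross these belts; this is precisely where the pigeonhole-plus-monotonicity reasoning you allude to is made rigorous. Your sketch is compatible with that approach but does not yet supply the missing step.
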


    Now we construct the semilinear approximants $\SIM{k}{}$.

    \begin{lemma}\label{lem:main:approximants}
    Given a one-counter net $N$ and an $\omega$-net $N'$,
    the approximant relations $\SIM{k}{}$ between them
    are effectively semilinear sets for all $k\in\N$.
    \end{lemma}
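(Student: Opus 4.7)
I would prove the statement by induction on $k$.

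For the base case $k=0$, Definition~\ref{def:approximants} directly gives $\SIM{0}{} = Q\x\N\x Q'\x\N$, the full relation, which is trivially semilinear.

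For the inductive step, assume $\SIM{k-1}{}$ has been computed as an effectively semilinear set. Unfolding Definition~\ref{def:approximants} and using Lemma~\ref{lem:game_interpretation}, $\SIM{k}{}$ is the greatest relation $R$ satisfying: $(pm,qn)\in R$ implies that for every $pm\step{a}p'm'$, either (i) some non-$\omega$-transition $(q,a,d,q')\in\delta'$ yields a response $qn\step{a}q'(n+d)$ with $(p'm',q'(n+d))\in R$, or (ii) some $\omega$-transition $(q,a,\omega,q')\in\delta'$ admits an $n'$ with $(p'm',q'n')\in\SIM{k-1}{}$. By Proposition~\ref{reduction:monotonicity}, the side condition $n'>n$ in (ii) may be dropped, since $\SIM{k-1}{}$ is upward-closed in the last coordinate. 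Combined with the induction hypothesis and closure of semilinear sets under projection, condition (ii) defines an effectively semilinear set $W\subseteq Q\x\N\x Q'\x\Act$ of tuples $(p',m',q,a)$.

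The plan is then to reduce the computation of $\SIM{k}{}$ to an instance of strong simulation between two one-counter nets and apply Theorem~\ref{thm:main:reference}. Concretely, I construct a one-counter net $\widehat N'$ from $N'$ by removing every $\omega$-transition and, for each pair $(q,a)$ with an outgoing $\omega$-transition in $N'$, adding a finite OCN ``shortcut gadget'' reachable from $q$ via an $a$-labelled step, whose distinguished state acts as a universal simulator against precisely those Spoiler configurations lying in the appropriate $(q,a)$-slice of $W$. Such gadgets can be assembled from the semilinear presentation of $W$, exploiting the eventual periodicity of semilinear subsets of $\N$. By design, strong simulation between $N$ and $\widehat N'$ coincides with $\SIM{k}{}$, and Theorem~\ref{thm:main:reference} then yields an effectively semilinear description.

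The principal obstacle is the design and verification of the shortcut gadgets. Condition (ii) is a one-step property of the game position, but the winning behaviour of the gadgets in the strong simulation game is coinductive, so ensuring that Duplicator can take the shortcut in $\widehat N'$ \emph{exactly} in those positions where she could have used an $\omega$-transition in $N'$ to land in $\SIM{k-1}{}$ requires a careful case analysis of both inclusions in the equivalence between $\SIM{k}{}$ on $(N,N')$ and $\SIM{}{}$ on $(N,\widehat N')$.
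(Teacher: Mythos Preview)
Your overall strategy---induction on $k$, reducing to strong simulation between two ordinary OCNs, and invoking Theorem~\ref{thm:main:reference}---matches the paper's. The gap lies in the gadget construction. You propose to modify only Duplicator's net $N'$ into $\widehat N'$, adding gadget states that act as ``universal simulators against precisely those Spoiler configurations in $W$.'' But a gadget living in Duplicator's net has access only to Duplicator's counter, whereas membership in the $(q,a)$-slice of $W$ depends on Spoiler's counter value $m'$ (via a threshold $M(p',q')$ derived from $\SIM{k-1}{}$). For the correspondence $\SIM{k}{} = {\msim}$ to hold in both directions, the gadget state $g$ would have to satisfy $p'm' \msim g\,n$ in the strong game on $(N,\widehat N')$ iff $m' < M(p',q')$, and this must hold for \emph{every} counter value $n$ that Duplicator might carry when entering the gadget. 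That forces the gadget to strongly simulate $p'm'$ forever when $m'<M$, yet fail when $m'\ge M$; the only way it can distinguish these cases is through Spoiler's behaviour in $N$, which need not reveal $m'$ in any bounded interaction (take e.g.\ $p'\step{a,+1}p'$, $p'\step{b,-1}p'$, where a finite-state gadget cannot match $a^k b^{k+M-1}$ for all $k$, and an OCN gadget's capacity to match depends on $n$, not on $m'$).

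The paper resolves this by modifying \emph{both} nets. It introduces fresh challenge actions $(p,q)\in Q\times Q'$ and gives Spoiler a transition $p\step{(p,q),0}s(p,q)$ into a test gadget that decrements \emph{Spoiler's own} counter $M(p,q)$ times before emitting a fresh action $f$, while Duplicator's matching gadget $t(p,q)$ can only loop on $e$. Thus the threshold test is performed on Spoiler's side, where the relevant counter lives; auxiliary transitions to a universal winning state $W$ on Duplicator's side punish Spoiler for issuing the challenge at the wrong moment. Restricting modifications to $\widehat N'$ alone, as you propose, cannot carry out this test, and the claimed equivalence between $\SIM{k}{}$ on $(N,N')$ and strong simulation on $(N,\widehat N')$ does not go through.
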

    \begin{proof}
        Let $N = (Q, \Act, \delta)$ and $N' = (Q',\Act,\delta')$.
        We prove the effective semilinearity of $\SIM{k}{}$ by induction on
        $k$.

        The base case $\SIM{0}{}=Q\x \N \x Q' \x \N$ is trivially effectively
        semilinear.

        For the induction step we proceed as follows. By induction hypothesis
        $\SIM{k}{}$ is effectively semilinear. Using this, we reduce the
        problem of checking $\SIM{k+1}{}$ between $N$ and $N'$ to the problem
        of checking normal strong simulation $\SIM{}{}$ between two derived
        one-counter nets $N_S$ and $N_D$, and obtain the effective
        semilinearity of the relation from Theorem~\ref{thm:main:reference}.
        More precisely, the derived one-counter nets $N_S$ and $N_D$ will
        contain all control-states of $N$ and $N'$, respectively,
        and we will have that $pm \SIM{k+1}{} qn$ w.r.t. $N,N'$ iff
        $pm \SIM{}{} qn$ w.r.t. $N_S,N_D$.

        Before we describe $N_S$ and $N_D$ formally, we explain the function
        of a certain test gadget used in the construction.

        An important observation is that after \V\ made an $\omega$-move in
        the approximation game between $N$ and $N'$, the winner of the game from
        the resulting configuration depends only
        on the control-states and \R's counter value, because \V\ could choose
        his counter arbitrarily high.
        Moreover, monotonicity (Lemma \ref{approximants:properties}, point
        \ref{approximants:monotonicity}) guarantees that there must be a minimal value
        for \R's counter with which he can win if at all. This yields the
        following property.

        For any pair of states $(p,q) \in Q\x Q'$ there must exist a value $M(p,q)\in\N\cup\{\omega\}$
        s.t. for all $m\in\N$
        \begin{equation}
            (\forall n \in\N.\, pm\notSIM{k}{} qn) \iff m\ge M(p,q) \label{claim:function_m}
        \end{equation}
        Since, by induction hypothesis, $\SIM{k}{}$ (and thus also its
        complement) is effectively semilinear, we can compute the values
        $M(p,q)$ for all $(p,q) \in Q\x Q'$.

        \paragraph*{The test gadgets.}
        Given the values $M(p,q)$, we construct test gadgets that check whether
        \R's counter value is $\ge M(p,q)$.
        For each $(p,q) \in Q\x Q'$ we construct two one-counter nets $S(p,q)$ and $T(p,q)$ with
        initial states $s(p,q)$ and $t(p,q)$, respectively, such that the
        following property holds for all $m,n\in\N$.
            \begin{equation}
                s(p,q)m\notSIM{}{} t(p,q)n \iff m\ge M(p,q) \label{eq:main:gadget}
            \end{equation}
        The construction of $S(p,q)$ and $T(p,q)$ is very simple.
        Let $s(p,q)$ be the starting point of a counter-decreasing
        chain of $e$-steps of length $M(p,q)\in\N$ where the last state of the chain
        can make an $f$-step whereas $t(p,q)$ is a simple $e$-loop (where $e,f$
        are fresh actions not in $\Act$).
        If $M(p,q)=\omega$, making $s(p,q)$ a deadlock suffices.
        Thus $S(p,q)$ and $T(p,q)$ are one-counter nets, denoted by
        $S(p,q) = (Q_s(p,q),\Act_s(p,q),\delta_s(p,q))$ and
        $T(p,q) = (Q_t(p,q),\Act_t(p,q),\delta_t(p,q))$.
        Wlog.\ we assume that their state sets are disjoint from each other and from the
        original nets $N,N'$.

        \paragraph*{The construction of $N_S$ and $N_D$.}
        Let $N_S=(Q_S,\Act',\delta_S)$ and $N_D=(Q_D,\Act',\delta_D)$ be
        one-counter nets constructed as follows.
        $\Act' = \Act\cup Q\x Q'\cup \{f,e\}$ (where $e,f$ are the actions
        from the test gadgets).
        \R's new net $N_S$ has states
        \begin{equation}
            Q_S = Q \cup \bigcup_{p\in Q, q\in Q'} Q_s(p,q)
        \end{equation}
        \V's new net $N_D$ has states
        \begin{equation}
            Q_D = Q' \cup \{W\}\cup\bigcup_{p\in Q, q\in Q'} Q_t(p,q).
        \end{equation}
        where $W$ is a new state.

        Now we define the transition relations.
        $\delta_S = \delta \cup \bigcup_{p\in Q, q\in Q'} \delta_s(p,q)$
        plus the following transitions for all
        $p\in Q, q\in Q'$:
        \begin{equation}
            p\step{(p,q),0}s(p,q) \label{eq:spoilmove}
        \end{equation}

        $\delta_D = \{q\step{a,x}q' \in \delta'\ |\ x \neq \omega\} \cup \bigcup_{p\in Q, q\in Q'} \delta_t(p,q)$
        plus the following transitions
        for all $p,p'\in Q$ and $q,q'\in Q'$:
        \begin{eqnarray}
            q\step{a,0}t(p,q')         &,&\mbox{ if } q\step{a,\omega}q' \in \delta' \label{eq:omegamove} \\
            q\step{(p',q'),0} W       & & \label{eq:punish_wrong_spoilmove}\\
            t(p,q)\step{(p,q),0}t(p,q) & & \label{eq:stay}\\
            t(p,q)\step{(p,q'),0}W     & &\mbox{ for all } q\neq q' \label{eq:punish1}\\
            t(p,q)\step{a,0}W          & &\mbox{ for all } a \in \Act \label{eq:punish2}\\
            W\step{a,0}W                 & &\mbox{ for all } a \in \Act'
        \end{eqnarray}

%

        \paragraph*{Correctness proof.}
        We show that for any pair $pm, qn$ of configurations of the nets $N,N'$ we have
        $pm\SIM{k+1}{}qn$ if and only if $pm\msim qn$ in the newly constructed nets $N_S,N_D$.

        To prove the `if' direction we assume that $pm\notSIM{k+1}{}qn$
        w.r.t. $N,N'$ and derive that $pm\notSIM{}{} qn$ w.r.t. $N_S,N_D$.
        By our assumption and Definition~\ref{def:approximants}, there exists
        some ordinal $\alpha$ s.t. $pm\notSIM{k+1}{\alpha}qn$.
        By Lemma~\ref{lem:game_interpretation},
        \R\ has a winning strategy in the approximation game
        from position $(pm, qn,\alpha, k+1)$. The result then follows from the
        following general property.

        \paragraph*{Property P1.}
        For all ordinals $\alpha$,
        control-states $(p,q) \in Q \x Q'$ and naturals $m,n \in \N$:
        If \R\ has a winning strategy in the approximation game
        from position $(pm, qn, \alpha, k+1)$ then he also
        has a winning strategy in the strong simulation game between $N_S,N_D$ from
        position $(pm, qn)$.

        \begin{proof}
        To prove P1, we fix some $p\in Q, q\in Q'$ and proceed by ordinal induction on $\alpha$.
        The base case trivially holds since \R\ looses from a position $(pm,qn,0,k+1)$.

        For the induction step let \R\ play the same move $pm \step{a} p'm'$
        for some $a \in \Act$ in both games according to his assumed winning strategy in the
        approximation game.
        Now \V\ makes his response move in the new game between $N_S,N_D$, which yields two cases.
        In the first case, \V\ does not use a transition from
        Equation~\eqref{eq:omegamove}. Then
        his move induces a corresponding move in the approximation game which
        leads to a new configuration $(p'm', q'n', \alpha', k+1)$ where
        $p'm'\notSIM{k+1}{\alpha'}q'n'$ for some ordinal $\alpha'<\alpha$.
        Thus, by Lemma~\ref{lem:game_interpretation} and the induction
        hypothesis, the property holds.

        In the second case, \V's response is via a transition from
        Equation~\eqref{eq:omegamove}, which leads to a new configuration
        $(p'm', t(p'',q')n)$ for some $p''\in Q$.
        Thus in the approximants game there will exist \V\ moves to
        positions $(p'm', q'n', \alpha', k)$ where $n' \in \N$ can be arbitrarily high.
        We can safely assume that \V\ chooses $p''=p'$, since otherwise \R\ can win in one round by
        playing $p'm'\step{(p',q')}$.
        Now in the following round \R\ can play $p'm'\step{(p',q')}s(p',q')m'$
        by Equation~\eqref{eq:spoilmove} and \V's only option
        is to stay in his current state by Equation~\eqref{eq:stay}.
        The game thus continues from
        $(s(p',q')m', t(p',q')n)$.
        By our assumption \R\ wins the approximation game from the position $(pm,qn, \alpha, k+1)$.
        Thus there is some ordinal $\alpha'<\alpha$ s.t.
        \R\ also wins the approximation game from the position $(p'm', q'n', \alpha', k)$
        for every $n'\in \N$.
        Thus, by Lemma~\ref{lem:game_interpretation} and Definition~\ref{def:approximants},
        we have $p'm' \notSIM{k}{\alpha'} q'n'$ and by
        Lemma \ref{approximants:properties} (item~\ref{approximants:alpha-inc})
        $p'm' \notSIM{k}{} q'n'$ for all $n' \in \N$.
        By Equation \eqref{claim:function_m} we obtain $m'\geq M(p',q')$.
        By the construction of the gadgets and Equation \eqref{eq:main:gadget}
        we get $s(p',q')m'\notSIM{}{} t(p',q')n$, which implies the desired
        property.
    \end{proof}
        This concludes the proof of the `if' direction.
        Now we prove the `only if' direction of the correctness property.
        We assume that $pm\notSIM{}{} qn$ in the newly constructed nets $N_S$
        and $N_D$ and derive that
        $pm\notSIM{k+1}{}qn$ w.r.t. $N,N'$.
        To do this, we first show the following general property.
        \paragraph*{Property P2.}
        If $pm \notSIM{}{} qn$ with respect to nets $N_S$ and $N_D$ then
        there exists some general ordinal $\alpha'$ s.t. $pm \notSIM{k+1}{\alpha'} qn$ with respect to
        nets $N,N'.$
    \begin{proof}
	Assume $pm \notSIM{}{} qn$ with respect to nets $N_S$ and $N_D$.
        Since both $N_S,N_D$ are just one-counter nets,
        non-simulation manifests itself at some finite approximant
        $\alpha \in \N$, i.e., $pm\notSIM{}{\alpha} qn$.
        We prove property P2 by induction on $\alpha$. The base case of
        $\alpha=0$ is trivial. For the induction step we consider a
        move $pm \step{a} p'm'$ for some $a \in \Act$ by \R\ in both games
        according to \R's assumed winning strategy in the game between $N_S,N_D$
        (It cannot be a \R\ move
        $p\step{(p,q),0}s(p,q)$ by Equation~\eqref{eq:spoilmove}, because
        \V\ would immediately win via a reply move by
        Equation~\eqref{eq:punish_wrong_spoilmove}).
        Now we consider all (possibly infinitely many) replies by \V\ in the approximation
        game between $N,N'$ from a position $(pm,qn,\alpha', k+1)$ for some yet to be
        determined ordinal $\alpha'$.
        These replies fall into two classes.

        In the first class, \V's move $qn \step{a} q'n'$ is not due to
        an $\omega$-transition and thus also a possible move in the
        strong simulation game between $N_S,N_D$.
        From our assumption that \R\ wins the strong
        simulation game from position $(pm, qn)$ in at
        most $\alpha$ steps, it follows that \R\ wins the strong simulation
        game from $(p'm',q'n')$ in at most $\alpha-1$ steps.
        By induction hypothesis, there exists an ordinal $\alpha''$
        s.t. \R\ has a winning strategy in the
        approximation game for $\SIM{k+1}{\alpha''}$ between $N,N'$
        from position $(p'm', q'n')$.
        There are only finitely many such replies.
        Thus let $\alpha^0$ be the maximal such $\alpha''$.

        In the second class, \V's move $qn \step{a} q'n'$ uses an
        $\omega$-transition which does not exist in $N_D$.
        Instead there exists a \V\ transition
        $qn \step{a,0}t(p'',q')n$ by Equation~\eqref{eq:omegamove}.
        From our assumption that \R\ wins the strong
        simulation game from position $(pm, qn)$ in at
        most $\alpha$ steps, it follows that \R\ wins the strong simulation
        game from $(p'm',t(p'',q')n)$ in at most $\alpha-1$ steps.
        If $p'' \neq p'$ then this is trivially true by a \R\ move by
        Equation~\eqref{eq:spoilmove}. Otherwise, if $p''=p'$,
        then this can only be achieved by a \R\ move of
        $p'm' \step{(p',q'),0} s(p',q')m'$ in the next round,
        because for any other \R\ move \V\ has a winning countermove
        by Equations~\eqref{eq:punish1} or \eqref{eq:punish2}.
        In this case \V\ can only reply with a move
        $t(p',q')qn \step{(p',q'),0}t(p',q')n$ by Equation~\eqref{eq:stay},
        and we must have that \R\ can win in at most $\alpha-2$ steps
        from position $(s(p',q')m', t(p',q')n)$.
        This implies, by Equation~\eqref{eq:main:gadget}, that
        $m' \ge M(p',q')$.
        Then Equation~\eqref{claim:function_m} yields
        $\forall n \in\N.\, p'm'\notSIM{k}{} q'n$.
        Thus for every $n \in \N$ there exists some ordinal
        $\alpha_n$ s.t. $p'm'\notSIM{k}{\alpha_n} q'n$.
        Let $\alpha''$ be the smallest ordinal s.t.
        $\forall n \in \N.\, \alpha_n \le \alpha''$.
        Each of the finitely many distinct $\omega$-transitions yields such
        an $\alpha''$. Let $\alpha^1$ be the maximum of them.

        We set $\alpha' := \max(\alpha^0,\alpha^1)+1$.
        Then every reply to \R s move $pm\step{a}p'm'$ in the approximation game
        from $(pm,qn,\alpha',k+1)$ leads to some position that is winning for \R.
        So, \R\ has a winning strategy in the
        approximation game from $(pm,qn,\alpha',k+1)$ and
        by Lemma~\ref{lem:game_interpretation},
        $pm\SIM{k+1}{\alpha'}qn$ w.r.t. $N,N'$, which concludes the proof
        of property P2.
    \end{proof}
        To show the `only if' direction of the correctness property, we assume that
        $pm\notSIM{}{} qn$ in the newly constructed nets $N_S,N_D$.
        By property P2 we have $pm \notSIM{k+1}{\alpha'} qn$ for some ordinal $\alpha'$ and thus
        $pm\notSIM{k+1}{}qn$ w.r.t. $N,N'$. This concludes the `only if' direction.

        We have constructed one-counter nets $N_S,N_D$ s.t.
        $pm\SIM{k+1}{}qn$ w.r.t. $N,N'$ if and only if $pm\msim qn$
        w.r.t. $N_S,N_D$.
        By Theorem~\ref{thm:main:reference}, $\SIM{k+1}{}$ is
        effectively semilinear. \qed
    \end{proof}

    \begin{theorem}\label{thm:main}
        The largest weak simulation over processes of a given one-counter net is
        effectively semilinear and thus decidable.
    \end{theorem}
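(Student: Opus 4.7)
The plan is to chain together the three main results established in this section and in the paper's preceding sections. First, by Theorem~\ref{thm:reduction}, given any one-counter nets $M$ and $N$ one can effectively construct a one-counter net $M'$ and an $\omega$-net $N'$ so that $pm \mwsim qn$ w.r.t.\ $M,N$ iff $pm \msim qn$ w.r.t.\ $M',N'$, for all pairs $p,q$ of original control-states. It therefore suffices to exhibit an effective semilinear representation of the largest strong simulation $\SIM{}{}$ between $M'$ and $N'$; the weak simulation on $M,N$ is then obtained by restricting this relation to the original control-states of $Q_M\x Q_N$, which preserves semilinearity.

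To produce this representation, I would run the following semi-algorithm on $M',N'$: starting with $k=0$, iteratively compute a semilinear description of the approximant $\SIM{k}{}$, and after each step test whether $\SIM{k}{} = \SIM{k-1}{}$. Each approximant $\SIM{k}{}$ is effectively semilinear by Lemma~\ref{lem:main:approximants}, and equality of semilinear sets is decidable (via Presburger arithmetic), so every iteration is effective. As soon as equality is detected, the current set is returned. Termination of this loop is guaranteed by Lemma~\ref{thm:approximants:convergence}, which yields a finite $\CB \in \N$ with $\SIM{}{} = \SIM{\CB}{}$, while correctness follows from Lemma~\ref{approximants:properties}(\ref{convergence}) together with the fact that the sequence $\SIM{\beta}{}$ is non-increasing in $\beta$: once two consecutive approximants coincide, they must agree with every later approximant and hence with $\SIM{}{}$.

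The hard part is not this final assembly, which is essentially bookkeeping, but rather the two prior ingredients the argument rests on, namely the convergence of $\SIM{\beta}{}$ at a finite level (Lemma~\ref{thm:approximants:convergence}) and the effective semilinearity of each $\SIM{k}{}$ (Lemma~\ref{lem:main:approximants}); the latter reduces $\SIM{k+1}{}$ on the OCN/$\omega$-net pair to ordinary strong simulation on a derived pair of one-counter nets via the test-gadget construction, and then appeals to Theorem~\ref{thm:main:reference}. With those lemmas in hand the theorem follows immediately, and decidability of $pm \mwsim qn$ for any given query pair is a direct consequence of membership in an effectively computed semilinear set.
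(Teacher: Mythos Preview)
Your proposal is correct and follows essentially the same approach as the paper: reduce weak simulation to strong simulation between a \OCN\ and an $\omega$-net via Theorem~\ref{thm:reduction}, iteratively compute the semilinear approximants $\SIM{k}{}$ using Lemma~\ref{lem:main:approximants}, test successive approximants for equality, and invoke Lemma~\ref{thm:approximants:convergence} for termination and Lemma~\ref{approximants:properties}(\ref{convergence}) for correctness of the limit. One small remark: your justification that ``once two consecutive approximants coincide, they must agree with every later approximant'' does not follow merely from the sequence being non-increasing; it holds here because $\SIM{k+1}{}$ is obtained from $\SIM{k}{}$ by a monotone operator (implicit in the construction of Lemma~\ref{lem:main:approximants}), a point the paper also leaves implicit.
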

    \begin{proof}
        By Theorem~\ref{thm:reduction} it suffices to show that the largest strong simulation
        relation $\SIM{}{}$ between a one-counter net $N$ and a $\omega$-net $N'$ is effectively
        semilinear.
        By Lemma~\ref{lem:main:approximants}, we can iteratively compute the semilinearity description of
        the approximants $\SIM{k}{}$ for $k=0,1,2,\dots$.
        Convergence can be detected by checking
        if $\SIM{k}{} \stackrel{?}{=} \SIM{k-1}{}$, which is effective
        because equality is decidable for semilinear sets.
        Termination (i.e., eventual convergence at a finite index) of this procedure is guaranteed by
        Lemma~\ref{thm:approximants:convergence},
        and the reached limit is the semilinear simulation relation by
        Lemma~\ref{approximants:properties} (item \ref{convergence}). \qed
    \end{proof}

\section{Approximant Convergence at $\omega^2$}\label{sec:omegasquare}

We show that ordinary weak simulation approximants $\WSIM{}{\alpha}$ converge at level
$\alpha=\omega^2$ on \OCN.

\begin{lemma}\label{lem:omegasquare}
When considering relations between a one-counter net and an $\omega$-net,
we have $\SIM{}{\omega i}\subseteq \SIM{i}{}$ for every $i \in \N$.
\end{lemma}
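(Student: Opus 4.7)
The plan is to proceed by induction on $i$. The base case $i=0$ is immediate since both $\SIM{}{0}$ and $\SIM{0}{}$ equal the full relation $Q\x\N\x Q'\x\N$. For the induction step we assume $\SIM{}{\omega i} \subseteq \SIM{i}{}$ and, given $pm \SIM{}{\omega(i+1)} qn$, we describe a winning strategy for \V\ in the approximation game from $(pm, qn, \alpha, i+1)$ for every ordinal $\alpha$; by Lemma~\ref{lem:game_interpretation} this is exactly what is needed to conclude $pm \SIM{i+1}{} qn$.

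The crux is a dichotomy for \V's responses, established via pigeonhole. Fix a position $(p^*m^*, q^*n^*)$ with $p^*m^* \SIM{}{\omega(i+1)} q^*n^*$ and a \R-move $p^*m^* \step{a} p'm'$. Decompose $\omega(i+1) = \omega i + \omega$ as a limit ordinal; Definition~\ref{def:approximants} then supplies, for every $k \in \N$, some response transition $t_k \in \delta'$ leading to a pair $(q_k', n_k')$ with $p'm' \SIM{}{\omega i + k} q_k'n_k'$. Since $q^*$ has only finitely many outgoing transitions with label $a$, the pigeonhole principle produces an infinite $K\subseteq\N$ on which $t_k$ is a single transition $t$. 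If $t$ is a normal transition, then both $q_k'$ and $n_k'$ are determined by $t$, so we obtain $p'm' \SIM{}{\omega i + k} q'n'$ for cofinally many $k$; by the monotonicity of approximants in the step index (Lemma~\ref{approximants:properties}) this extends to all $k$, so $p'm' \SIM{}{\omega(i+1)} q'n'$. If $t$ is an $\omega$-transition, then $q'$ is fixed while the $n_k'$ may genuinely vary, but any single choice $n' := n_{k_0}'$ gives a valid $\omega$-step (since $n' > n^*$) witnessing $p'm' \SIM{}{\omega i} q'n'$. So in every position \V\ has a response that either (a) uses a normal transition and preserves the invariant $\SIM{}{\omega(i+1)}$, or (b) uses an $\omega$-transition and reaches a position in $\SIM{}{\omega i}$.

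Duplicator's strategy is now to prefer option (a) whenever it is available and to fall back to option (b) otherwise. Each round governed by (a) preserves the invariant and leaves the $\omega$-budget at $i+1$, while strictly decreasing the step ordinal $\alpha^*$; by well-foundedness of the ordinals this cannot continue forever, so any play eventually either terminates as a \V-win (because $\alpha^*$ reaches $0$ or \R\ is stuck) or enters case (b). When case (b) fires, the game moves to some $(p'm', q'n', \alpha', \beta')$ with $\beta' \leq i$; the outer induction hypothesis supplies $p'm' \SIM{i}{} q'n'$, and Lemma~\ref{lem:game_interpretation} combined with Lemma~\ref{approximants:game:monotonicity} then shows that \V\ wins the remainder of the game.

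The main subtlety lies in the pigeonhole step and the asymmetry it forces between the two kinds of responses: a normal response pins down both $q'$ and $n'$, so we genuinely preserve the stronger invariant $\SIM{}{\omega(i+1)}$; an $\omega$-response only pins down $q'$, and the witnessing counter $n'$ changes with $k$, so we have to settle for the weaker $\SIM{}{\omega i}$ at a single chosen witness. It is precisely this drop from $\omega(i+1)$ to $\omega i$, matched by the simultaneous unit drop of the $\omega$-budget from $i+1$ to $i$ whenever an $\omega$-transition actually fires, that makes the two indices align and closes the outer induction.
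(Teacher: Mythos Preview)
Your argument is correct. It shares the paper's core insight---the pigeonhole on finitely many $a$-labelled transitions out of $q^*$, distinguishing a fixed normal response (which preserves the level $\omega(i+1)$) from an $\omega$-response (where the witness counter varies, so one can only secure level $\omega i$)---but packages it differently. The paper argues by contradiction: assuming $pm\SIM{}{\omega i}qn$ yet $pm\notSIM{i}{}qn$, it follows \R's optimal winning strategy in the approximant game until reaching a critical position $(p'm',q'n',\alpha'+1,i)$ with $\alpha'\ge\omega i$ a limit, then uses the branching argument there to extract an $\omega$-response landing in $\SIM{i-1}{\omega(i-1)}$, which the induction hypothesis turns into $\SIM{i-1}{}$ and hence $\SIM{i-1}{\alpha'}$, contradicting \R's optimality. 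Your version is a direct forward construction: you maintain the invariant $\SIM{}{\omega(i+1)}$ as long as normal responses suffice, and the moment an $\omega$-response is forced you hand over to the induction hypothesis. The paper's route isolates a single critical step and derives a contradiction; yours gives an explicit \V\ strategy and shows termination via the well-foundedness of the step ordinal. Both are clean; your presentation is somewhat more constructive and avoids reasoning about \R's ``optimal'' play.
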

\begin{proof}
    By induction on $i$.
    The base case of $i=0$ is trivial, since $\SIM{0}{}$ is the full
    relation.
    We prove the inductive step by assuming the contrary and deriving a
    contradiction.
    Let $pm\SIM{}{\omega i} qn$ and
    $pm \notSIM{i}{} qn$ for some $i>0$.
    Then there exists some ordinal $\alpha$ s.t. $pm \notSIM{i}{\alpha} qn$.
    Without restriction let $\alpha$ be the least ordinal
    satisfying this condition.
    If $\alpha \le \omega i$ then we trivially have a
    contradiction. Now we consider the case $\alpha > \omega i$.
    By $pm \notSIM{i}{\alpha} qn$ and Lemma~\ref{lem:game_interpretation},
    \R\ has a winning strategy in the approximant game from position
    $(pm,qn,\alpha,i)$. Without restriction we assume that \R\ plays optimally,
    i.e., wins as quickly as possible. Thus this game must reach some game position
    $(p'm',q'n',\alpha'+1,i)$ where $\alpha' \ge \omega i$ is a limit ordinal,
    such that  \R\ can win from $(p'm',q'n',\alpha'+1,i)$ but not from
    $(p'm',q'n',\alpha',i)$.
    I.e., $p'm' \notSIM{i}{\alpha'+1} q'n'$, but $p'm' \SIM{i}{\alpha'} q'n'$.
    Consider \R's move $p'm'\step{a} p''m''$ according to his optimal winning
    strategy in the game from position $(p'm',q'n',\alpha'+1,i)$.
    Since $p'm' \SIM{i}{\alpha'} q'n'$ and $\alpha'$ is a limit ordinal,
    for every ordinal
    $\gamma_k < \alpha'$, \V\ must have some countermove
    $q'n'\step{a} q_kn_k$ s.t. $p''m'' \SIM{j}{\gamma_k} q_kn_k$,
    where $j=i-1$ if the move was due to an $\omega$-transition and $j=i$
    otherwise. In particular, $\sup_k \{\gamma_k\} = \alpha'$.
    However, since \R's move $p'm'\step{a} p''m''$ was according to his optimal winning
    strategy from position $(p'm',q'n',\alpha'+1,i)$, we have
    that $p''m'' \notSIM{j}{\alpha'} q_kn_k$.
    Therefore, there must be infinitely many different \V\ countermoves
    $q'n'\step{a} q_kn_k$. Infinitely many of these countermoves must be due to
    an $\omega$-transition, because apart from these the system is finitely
    branching.
    Thus for every ordinal $\gamma < \alpha'$ there is some
    \V\ countermove $q'n'\step{a} q_kn_k$ which is due to an $\omega$-transition
    s.t. $p''m'' \SIM{i-1}{\gamma_k} q_kn_k$ where $\gamma_k \ge \gamma$ (note the
    $i-1$ index due to the $\omega$-transition).
    In particular, we can choose $\gamma=\omega (i-1)$, because $i>0$ and $\alpha'
    \ge \omega i$.
    Then we have $p''m'' \SIM{i-1}{\omega (i-1)} q_kn_k$, but
    $p''m'' \notSIM{i-1}{\alpha'} q_kn_k$.
    However, from $p''m'' \SIM{i-1}{\omega (i-1)} q_kn_k$ and the induction hypothesis,
    we obtain $p''m'' \SIM{i-1}{} q_kn_k$
    and in particular
    $p''m'' \SIM{i-1}{\alpha'} q_kn_k$. Contradiction.
\end{proof}

\begin{theorem}
Weak simulation approximants on \OCN\ converge at level $\omega^2$,
but not earlier in general.
\end{theorem}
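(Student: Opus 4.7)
The theorem consists of an upper bound (convergence at $\omega^2$) and a lower bound (no earlier convergence in general); the plan addresses them separately.

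\emph{Upper bound.} The plan is to route through the reduction of Theorem~\ref{thm:reduction} and apply Lemma~\ref{lem:omegasquare} together with the convergence machinery of Section~\ref{sec:approximants}. Given a weak simulation instance on one-counter nets $M,N$, Theorem~\ref{thm:reduction} yields an OCN $M'$ and an $\omega$-net $N'$; its part~\ref{thm:reduction:ordinals} transports $\WSIM{}{\alpha}$ into $\SIM{}{\alpha}$ and its part~\ref{thm:reduction:main} identifies $\mwsim$ with $\msim$, so it suffices to prove $\SIM{}{\omega^2}=\SIM{}{}$ on $M',N'$. Since $\{\omega i\mid i\in\N\}$ is cofinal in $\omega^2$, the limit clause of Definition~\ref{def:approximants} rewrites this as $\SIM{}{\omega^2}=\bigcap_{i\in\N}\SIM{}{\omega i}$, and Lemma~\ref{lem:omegasquare} applied pointwise gives $\SIM{}{\omega^2}\subseteq\bigcap_{i\in\N}\SIM{i}{}$. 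Lemma~\ref{thm:approximants:convergence} produces a finite $\CB$ with $\SIM{\CB}{}=\SIM{}{}$, and Lemma~\ref{approximants:properties}(\ref{convergence}) then collapses the intersection $\bigcap_{i\in\N}\SIM{i}{}$ to $\SIM{}{}$. Combined with the trivial reverse inclusion this yields equality, and back-translating through Theorem~\ref{thm:reduction} finishes the upper bound.

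\emph{Lower bound.} The plan is to exhibit a pair of one-counter net processes $pm,qn$ for which $pm\WSIM{}{\alpha}qn$ holds for every $\alpha<\omega^2$ but $pm\not\mwsim qn$. The natural template is a ``staircase'' of $\omega$ tiers inside Duplicator's process, where tier $i$ carries a $\tau$-cycle of positive counter effect; pumping this cycle $k$ times lets Duplicator survive $k$ extra rounds at that tier, so that his maximal surviving ordinal at tier $i$ reaches exactly $\omega i$. Spoiler, using only bounded-effect loops, forces descent through the tiers via visible actions, preventing any finite counter commitment of Duplicator from being uniform across tiers. This is a Duplicator-side \OCN\ encoding of the $X$-vs-$Y$ example following Definition~\ref{def:approximants}, where the explicit $\omega$-transitions are replaced by chains of silent counter-increasing cycles.

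The main obstacle in the upper bound is essentially Lemma~\ref{lem:omegasquare} itself, which is already in hand, so the remaining work is just to verify that Theorem~\ref{thm:reduction}(\ref{thm:reduction:ordinals}) supplies the approximant inclusion in the direction required for the transfer. The harder part of the proof is the lower bound, where one must engineer the staircase so that Duplicator's ordinal-matching strategy genuinely requires a fresh tier for each $\omega$-increment: one has to block him from recycling earlier cycles and to ensure that no uniform counter bound suffices across all tiers. I would verify this by induction on $i$, computing \V's optimal strategy at tier $i$ from the optimum at tier $i-1$ and exhibiting an \R\ strategy that forces descent whenever \V\ commits to a finite counter value.
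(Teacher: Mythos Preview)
Your upper bound is correct and essentially identical to the paper's: both route through Theorem~\ref{thm:reduction}(\ref{thm:reduction:ordinals}), apply Lemma~\ref{lem:omegasquare} to pass from $\SIM{}{\omega i}$ to $\SIM{i}{}$, and invoke Lemma~\ref{thm:approximants:convergence} to collapse to $\SIM{}{}$. The only cosmetic difference is that the paper picks the specific $\CB$ first and works at level $\omega\CB$, while you intersect over all $i$; the content is the same.

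Your lower bound, however, cannot work as stated. You propose a \emph{single} pair $pm,qn$ of \OCN\ processes with $pm\WSIM{}{\alpha}qn$ for every $\alpha<\omega^2$ but $pm\not\mwsim qn$. No such pair exists: the very machinery you used for the upper bound shows that for any fixed pair of nets $M,N$ there is a \emph{finite} $\CB$ (from Lemma~\ref{thm:approximants:convergence} applied to the derived $M',N'$) with $\WSIM{}{\omega\CB}=\WSIM{}{}$ already. So convergence for a fixed instance always happens at some $\omega\CB<\omega^2$. Concretely, your ``staircase of $\omega$ tiers'' would require infinitely many control states (each tier carrying its own $\tau$-cycle), which an \OCN\ does not have; and any finite-tier version collapses at the corresponding finite multiple of $\omega$.

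The correct shape of the lower bound is a \emph{family} of instances whose convergence ordinals are cofinal in $\omega^2$. The paper does exactly this: for each $i\in\N$ it builds a net $N_i$ with an $i$-tier staircase (states $q_0,\dots,q_i$ with $a$-decrementing self-loops and silent pump-up links $q_{k-1}\step{\tau}q_{k-1}'\step{\tau,+1}q_{k-1}'\step{a}q_k$), and plays it against the one-state Spoiler $p\step{a}p$. Then $p\WSIM{}{\omega i}q_00$ but $p\notWSIM{}{}q_00$, so no $\alpha<\omega^2$ works uniformly. Your tier idea is right; you just need one finite staircase per $i$ rather than a single infinite one.
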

\begin{proof}
First we show that $\WSIM{}{\omega^2}$ is contained in $\WSIM{}{}$ for \OCN.
Let $pm$ and $qn$ be processes of \OCN\ $M$ and $N$, respectively.
Let $M',N'$ be the derived \OCN\ and $\omega$-net from
Theorem~\ref{thm:reduction}.
Assume $pm \WSIM{}{\omega^2} qn$ w.r.t. $M,N$. 
Then, by point \ref{thm:reduction:ordinals}) of Theorem~\ref{thm:reduction}, 
$pm \SIM{}{\omega^2} qn$ w.r.t. $M',N'$.
In particular we have $pm \SIM{}{\omega \CB} qn$ w.r.t. $M',N'$,
with the $\CB \in \N$ from Lemma~\ref{thm:approximants:convergence}.
From Lemma~\ref{lem:omegasquare} we obtain
$pm \SIM{\CB}{} qn$ w.r.t. $M',N'$.
Lemma~\ref{thm:approximants:convergence} yields 
$pm \SIM{}{} qn$ w.r.t. $M',N'$.
Finally, by Theorem~\ref{thm:reduction}, we obtain
$pm \WSIM{}{} qn$ w.r.t. $M,N$.  

To see that $\omega^2$ is needed in general, consider the following class of
examples. Let $p \step{a} p$ define a simple \OCN\ (actually even a finite
system). For every $i \in \N$ we define an \OCN\ $N_i$ with
transitions $(q_k, a, -1, q_k)$, 
$(q_{k-1}, \tau, 0, q_{k-1}')$,
$(q_{k-1}', \tau, 1, q_{k-1}')$, and
$(q_{k-1}', a, 0, q_{k})$
for all $k$ with $1 \le k \le i$.
Then, for the net $N_i$, we have $p \WSIM{}{\omega i} q_00$,
but $p \notWSIM{}{} q_00$.
Thus in general $\WSIM{}{} \neq \WSIM{}{\omega i}$ for any $i \in \N$.
\end{proof}

\section{Undecidability of Trace Inclusion and Equivalence}\label{sec:traces}
For any process $\alpha$ we write $T(\alpha)$ for the set $\{w\in\Act^* | \exists \beta.\alpha\step{w}\beta\}$
of \emph{traces} of $\alpha$.
We consider trace inclusion and equivalence checking for \OCN, which was
stated as an open question in \cite{EN94}.
We show that both problems are undecidable for \OCN\ by a reduction from the containment problem
for weighted automata \cite{ABK2011}.

\begin{definition}[Weighted Automata]
    A \emph{weighted finite automaton (WFA)} 
    is a tuple
    $(Q,\Sigma,\delta,q_0)$ where $Q$ is a finite set of states,
    $\Sigma$ a finite alphabet, $q_0\in Q$ an initial state and $\delta\subseteq Q\x\Sigma\x\N\x Q$
    a transition relation.
    If $(p,a,d,p')\in\delta$ the automaton can go from $p$ to $p'$ reading a symbol `$a$' with reward $d\in \N$.
    A \emph{run} of A on a word $w=w_0w_1\dots w_n\in\Sigma^*$ is a sequence
    $(q_i,w_i,d_i,q_{i+1})_{0\le i<n} \in \delta^*$ of transitions.
    The \emph{value} of such a run is $\sum_{i=0}^{n-1}d_i$; the value $L(A,w)$ of a word $w\in\Sigma^*$
    is the maximal value of any run on $w$.
\end{definition}

We say that the language of WFA $A$ is contained in that of WFA $B$ (over the same alphabet $\Sigma$),
$L(A)\subseteq L(B)$, iff for all words $w\in\Sigma^*$, $L(A,w)\le L(B,w)$.
Checking $L(A)\subseteq L(B)$ is undecidable \cite{ABK2011} (Theorem 4).
The next result is a direct consequence.

\begin{theorem}
Trace inclusion/equivalence between \OCN\ processes is undecidable.
\end{theorem}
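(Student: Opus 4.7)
The plan is to reduce the undecidable containment problem for max-plus weighted finite automata \cite{ABK2011}. Given WFAs $A, B$ over $\Sigma$, I will construct \OCN\ processes $\alpha_A, \alpha_B$ such that $T(\alpha_A) \subseteq T(\alpha_B)$ if and only if $L(A, w) \le L(B, w)$ for every $w$, i.e., $L(A) \subseteq L(B)$. Undecidability of trace equivalence then follows by the standard packaging $T(\alpha_A) \subseteq T(\alpha_B) \iff T(\alpha_A \oplus \alpha_B) = T(\alpha_B)$, where $\alpha_A \oplus \alpha_B$ is the nondeterministic choice of the two processes realized through a fresh initial control state with $\tau$-transitions into $q_0^A$ and $q_0^B$.

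For each WFA $W = (Q, \Sigma, \delta, q_0)$, the \OCN\ $N_W$ I build has control-states $Q$ together with auxiliary gadget states and one fresh sink $q_{end}$, over alphabet $\Sigma \cup \{\$\}$. Each WFA-transition $(p, a, d, p') \in \delta$ is compiled into a gadget $p \step{a} r_0 \step{\tau} r_1 \cdots \step{\tau} r_d = p'$ consisting of a zero-effect $a$-step followed by $d$ unit-increment $\tau$-steps (for $d=0$ the gadget degenerates to a direct edge $p \step{a} p'$). From every $p \in Q$ I add a zero-effect $\$$-transition to $q_{end}$, and at $q_{end}$ a $\$$-self-loop with counter effect $-1$. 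Set $\alpha_W := (N_W, q_0, 0)$.

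The central correspondence, which I will verify in the weak trace semantics, is that $w \$^{k+1}$ is a trace of $\alpha_W$ exactly when $W$ admits some run on $w$ whose total accumulated weight is at least $k$. Weakly reading $w$ through $N_W$'s gadgets exactly traces a run of $W$ and silently deposits its transition weights on the counter; the first $\$$ then transports the process into $q_{end}$ while preserving the counter, and each subsequent $\$$-step drains it by one. Quantifying over all $w$ and all $k$ yields $T(\alpha_A) \subseteq T(\alpha_B) \iff L(A) \subseteq L(B)$, which is undecidable.

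The main obstacle is transferring this argument to the strong trace semantics used in the paper, in which each $\tau$-increment is itself a visible trace symbol, so a naive reuse of the construction would force $\alpha_B$ to reproduce $\alpha_A$'s tick pattern on every single gadget rather than only the aggregate weight. I plan to handle this by restructuring the gadgets so that all weight accumulation is deferred to a post-$\$$ phase at $q_{end}$: the $a$-steps become direct zero-effect edges, and $q_{end}$ carries a fresh visible ``emit'' symbol alongside the $\$$-drain, with the interplay between the two exposing only the chosen run's total weight as a single unary tail. Verifying that this refined encoding blocks spurious traces and still captures $L(A) \subseteq L(B)$ -- in particular, that the deferred emission can be arranged so that $\alpha_W$ really realizes the maximum-weight option of every WFA-run -- is the step I expect to be the most delicate.
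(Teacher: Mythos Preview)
Your overall reduction from WFA containment is the same as the paper's, and your weak-trace argument is sound. The problem is the strong-trace fix you sketch at the end.

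Your ``deferred emission'' idea cannot work as stated. If the $a$-steps are all zero-effect and the process only starts manipulating the counter after the $\$$-move into $q_{end}$, then at that point the configuration is $(q_{end},0)$ regardless of which run of $W$ on $w$ was followed. The \OCN\ has no memory of the run except its current control state and counter, and you have just erased both dimensions of that memory: the control state is the fixed sink $q_{end}$, and the counter is $0$. So whatever emit/drain gadget you place at $q_{end}$ produces the same set of trace suffixes for every run, and you cannot recover the run's weight. Your parenthetical ``so that $\alpha_W$ really realizes the maximum-weight option of every WFA-run'' is exactly the obstruction, not a detail to be filled in.

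The paper avoids the whole issue by a much simpler device: it encodes each WFA transition $(p,a,d,p')$ directly as a single \OCN\ transition $p\step{a,+d}p'$, with no intermediate $\tau$- or tick-steps at all, and then adds only the drain state $D$ with a fresh symbol. This is legitimate because the undecidability of WFA containment from \cite{ABK2011} already holds for weights in $\{0,1\}$, so every transition effect lies in the \OCN\ range $\{-1,0,1\}$. With that normalization the construction has no silent transitions, hence strong and weak trace inclusion coincide and a single argument covers both. Your weak-trace construction is correct but unnecessarily elaborate; for strong traces, rather than trying to defer the increments, you should instead restrict the source problem so that no multi-step increments are needed in the first place.
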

\begin{proof}
    Inclusion can trivially be reduced to equivalence for nondeterministic
    systems like \OCN.  
    Thus we show undecidability of inclusion by reduction from WFA
    containment. 
    The idea is to encode the WFA
    as \OCN, using the counter as accumulator. To ensure a faithful encoding
    of WFA containment, the \OCN\ can at any point jump to a gadget that
    compares the counter values.

    Given WFA $A=(Q_A,\Sigma,\delta_A,q_A)$ and $B=(Q_B,\Sigma,\delta_B,q_B)$
    we construct nets $A'$ and $B'$ with states $Q_A\cup\{D\}$ and $Q_B\cup\{D\}$
    resp., over alphabet $\Act=\Sigma\cup\{d\}$ where $d$ is a fresh symbol.
    We add transitions $D\step{d,-1}D$ to both nets as well as $q\step{d,-1}D$
    for any original state.
    We argue that $L(A)\subseteq L(B)$ if and only if $T(q_A(0))\subseteq T(q_B(0))$.

    Assume a witness $w$ with $L(A,w)=v>L(B,w)$. Then there is a run of $A$ on $w$
    with a value higher than that of any run of $B$ on $w$.
    So the word $w d^v$ must be a valid trace from $q_A(0)$, but not from $q_B(0)$.
    Conversely, if $L(A,w)\le L(B,w)$ for all $w\in\Sigma^*$, then
    for any run of $A'$ there is a run of $B'$ over the same sequence of actions
    which accumulates a higher or equal counter value. Thus no such word can be extended to
    a counterexample for trace inclusion by appending finitely many $d$'s.
\end{proof}

\section{Comparing \OCN/\OCA\ and Finite Systems}\label{sec:finite}

{\it\noindent Simulation.}
First we consider checking strong/weak simulation between 
\OCN/\OCA\ and finite systems, and vice-versa.

\begin{theorem}\label{thm:fss_ws_OCN_in_P}
     Checking if a finite-state process weakly simulates a \OCN\ process is in $P$.
\end{theorem}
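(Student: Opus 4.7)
The plan is to reduce the weak simulation problem to a strong simulation problem between an \OCN\ and a finite-state system, and then to compute the resulting preorder in polynomial time by maintaining a polynomial-size threshold vector.

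\emph{Reduction to strong simulation.} First I would preprocess the finite-state system $F=(Q_F,\Act,\to_F)$ on Duplicator's side by computing its weak step relation: for every $a\in\Act$ and every pair $q,q'\in Q_F$, decide whether $q\wstep{a}q'$ via the standard transitive-closure construction, which takes time polynomial in the size of $F$. Let $F'$ be the finite system whose strong transitions are exactly the weak steps of $F$. Then $pm\mwsim q$ w.r.t.\ the original $N,F$ iff $pm\msim q$ w.r.t.\ $N,F'$, because Duplicator's moves in the weak simulation game between $N,F$ correspond one-to-one to his moves in the strong simulation game between $N,F'$, while Spoiler uses strong transitions of $N$ in either case. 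So it suffices to decide strong simulation of an \OCN\ process by a finite-state process in polynomial time.

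\emph{Threshold representation and polynomial bound.} By monotonicity (Proposition~\ref{reduction:monotonicity}), for each pair $(p,q)\in Q_N\x Q_F$ the set $\{m\in\N:pm\msim q\}$ is downward closed in $m$, hence determined by a single threshold $t(p,q)\in\N\cup\{\omega\}$. The whole relation is encoded by $|Q_N|\cdot|Q_F|$ such thresholds. I would then show a polynomial bound $B$ on every finite $t(p,q)$ by a pumping argument on shortest Spoiler-winning plays: if Spoiler wins from $pm$ with $m$ much larger than the combinatorial data, an optimal winning play must revisit some control pair $(p',q')$ at two counter values $m_1\le m_2$, and the upward closure of the Spoiler-winning region (the contrapositive of monotonicity) then yields a winning strategy from a strictly smaller initial counter value for $(p,q)$, contradicting the minimality of $m$ as a losing counter value.

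\emph{Fixed-point computation.} Finally I would compute the thresholds by a greatest-fixed-point iteration. Initialize $t(p,q):=\omega$ for every pair. In each round, and for every $(p,q)$ and every $m\le B$ strictly below the current $t(p,q)$, check whether there is a move $pm\step{a}p'm'$ of $N$ such that for every response $q\wstep{a}q'$ we have $m'\ge t(p',q')$ (or no such response exists); if so, lower $t(p,q)$ to $m$. Each improvement strictly decreases the polynomial-size, polynomially bounded vector of thresholds, so polynomially many rounds of polynomial-time work suffice, and the original query $pm\mwsim q$ is answered by comparing $m$ with the fixed-point value of $t(p,q)$.

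The main obstacle is the polynomial bound in the second step: turning the informal pumping intuition into a rigorous argument that exploits monotonicity to transport Spoiler-winning strategies at large counter values down to smaller counter values, uniformly over all pairs $(p,q)$, without blowing up the bound by a factor that depends on hidden parameters of the input.
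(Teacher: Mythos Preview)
Your reduction step—replacing the finite system by its weak-step closure and then checking strong simulation against that closure—is exactly what the paper does. The paper then finishes in one line by invoking Ku\v{c}era's polynomial-time algorithm for strong simulation between \OCN\ and finite-state processes \cite{Kuc2000a}; you instead attempt to re-derive that result from scratch.

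Your threshold representation and the greatest-fixed-point iteration are sound, and the overall architecture is the right one. The only substantive gap is the polynomial bound $B$, which you yourself flag, and the specific pumping sketch you give does not go through as written. From a repetition of a control pair $(p',q')$ at counter values $m_1\le m_2$ along one optimal Spoiler-winning play, upward closure of the Spoiler-winning region only tells you that Spoiler wins from $(p'm',q')$ for every $m'\ge m_1$; this points the wrong way for manufacturing a win from a \emph{smaller} initial counter at $(p,q)$. Moreover, the loop you isolate is a segment of a single play, not a pump that Spoiler controls—Duplicator may respond differently on a replay, so you cannot simply iterate it. A workable route (essentially the content of \cite{Kuc2000a}) first shows that if $m=t(p,q)$ is the minimal losing counter for $(p,q)$, then the play of Spoiler's winning strategy from $(pm,q)$ against Duplicator's winning strategy from $(p(m-1),q)$ must drive the counter down to $0$: otherwise one shifts Spoiler's entire strategy down by one, obtaining a valid winning strategy from $(p(m-1),q)$ and contradicting minimality of $m$. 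Only after securing this descent to $0$ does a pigeonhole-and-cut argument on the control pairs visited along the descent yield the polynomial bound. So either flesh this step out along those lines, or take the paper's shortcut and cite the known result.
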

\begin{proof}
    It suffices to first replace the step relation in the finite system with
    its weak closure so that $q\wstep{a}q'\iff q\step{a}q'$ and then check if the resulting
    finite system strongly simulates the net.
    The finiteness of the state space allows us to compute the weak closure in polynomial
    time. A polynomial time algorithm for checking strong simulation between \OCN\ and
    finite-state processes can be found in \cite{Kuc2000a}.
\end{proof}

For the other direction, checking if a \OCN\  process weakly simulates a finite-state process,
we show that it suffices to consider a finite version of the net where the counter
is capped at a polynomially bounded level.
The crucial observation is that monotonicity implies that \V\ must be able to
ensure that his counter never decreases along any partial play that repeats
control-states.

\begin{definition}
    Let $N=(Q,\Act,\delta)$ be a \OCN\  and $l\in\N$.
    The \emph{$l$-capped version} of $N$ is the finite system $N_l=(Q_l, \step{})$ with states
    $Q_l=\{(\hat{q},n)| q\in Q, n\le l\}$ and transitions
    $(\hat{q},n)\step{a}(\hat{q}',\min\{n',l\})$ iff $qn\Step{a}{}{N}q'n'$.
\end{definition}
It is easy to see that $N_l$ can be constructed from $N$ in time proportional to $l \times |N|$.
For $n,l \in \N$ we observe the following properties.
\begin{proposition}\
    \label{prop:fs:capped}
    \begin{enumerate}
        \item $(\hat{q},\min\{n,l\}) \mwsim qn$,\label{obs:fs:sim}
        \item $qn \WSIM{}{l} (\hat{q},\min\{n,l\})$,\label{obs:fs:sim-up-to-l}
        \item $(\hat{q},\min\{n,l\})\mwsim (\hat{q},\min\{n+1,l\})$.\label{obs:fs:monotonicity}
    \end{enumerate}
\end{proposition}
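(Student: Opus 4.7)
The plan is to establish all three properties by exhibiting appropriate (weak) simulation relations or a Duplicator strategy. The key technical ingredient is the monotonicity of OCN transitions (Proposition~\ref{reduction:monotonicity}), which passes to the capped system because every capped transition is by construction shadowed by a net transition with the same effect.

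For part~1 I would prove that the relation
\[
R_1 = \{((\hat{q}, m), qn) : m \le n \text{ and } m \le l\}
\]
is a weak simulation. Given any capped step $(\hat{q}, m) \step{a} (\hat{q}', \min\{m+d, l\})$ arising from a net transition $(q, a, d, q')$, monotonicity yields the matching net step $qn \step{a} q'(n+d)$, and the target pair is again in $R_1$ because $\min\{m+d, l\} \le m + d \le n + d$ and $\min\{m+d, l\} \le l$. Part~3 follows the same pattern using
\[
R_3 = \{((\hat{q}, m), (\hat{q}, m')) : 0 \le m \le m' \le l\}
\]
between capped configurations: any step from the lower state can be matched by the corresponding step from the higher state, which lands in a pair still in $R_3$ by monotonicity of the underlying net.

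For part~2, the substantive claim, I would describe a mimicking strategy for Duplicator in the $l$-round weak simulation game. In response to each Spoiler strong step $qn_i \step{a} q'n_{i+1}$ via transition $(q, a, d, q')$, Duplicator plays the identical capped step $(\hat{q}_i, m_i) \step{a} (\hat{q}', \min\{m_i + d, l\})$, viewed as a one-transition weak step. This response is infeasible only when $m_i = 0$ and $d = -1$. Since each round changes $m$ by at most one, the invariant $m_i \ge m_0 - i$ holds, where $m_0 = \min\{n, l\}$. A short case split on whether $m_0 = l$ (so $m_i \ge l - i \ge 1$ for $i < l$) or $m_0 = n < l$ (where mimicking preserves $m_i = n_i$ until Spoiler himself cannot decrement) shows that the forbidden configuration cannot arise within the first $l$ rounds. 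The main obstacle is this case analysis: the bound $l$ is tight, and care is needed in the mixed regime where Spoiler interleaves $+1$ and $-1$ moves and crosses the cap $l$ in both directions.
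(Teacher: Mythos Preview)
Your proposal is correct. The paper states this proposition as a routine observation and gives no proof, so there is nothing to compare against; your explicit relations for parts~1 and~3 and the mimicking strategy for part~2 are exactly the natural arguments.

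One small remark on part~2: your phrase ``mimicking preserves $m_i = n_i$ until Spoiler himself cannot decrement'' is not quite what you want in the case $m_0 = n < l$; the equality $m_i = n_i$ is preserved until the cap is first hit, not until Spoiler gets stuck. After the cap is hit you still need to argue that Duplicator survives the remaining rounds, which is the ``mixed regime'' you flag but do not resolve. The case split works (the cap cannot be hit before round $l-n$, and from $m=l$ it takes $l$ further decrements to reach zero), but a cleaner route is the single invariant
\[
m_i \;\ge\; \min\{\,n_i,\; l - i\,\},
\]
which holds at $i=0$ and is preserved by each of the moves $\{-1,0,+1\}$ under capping. It immediately gives $m_{i-1} \ge 1$ whenever $i \le l$ and Spoiler plays a decrement (since then $n_{i-1}\ge 1$ and $l-(i-1)\ge 1$), so Duplicator can always respond in the first $l$ rounds without any case distinction on $m_0$.
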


We continue to show that \V\ can be assumed to play optimally in a sense
that depends on cycles in the underlying control graphs.
Consider a simulation game between a finite process and a \OCN\  process (or its $c$-capped version).

\begin{definition}
    A partial play $\pi=(p_0,q_0n_0)(t_0,t_0')\dots$ $(p_l,q_ln_l)$
    is a \emph{cycle} if $p_0=p_l$ and $q_0=q_l$.
    It is \emph{decreasing} if $n_0 > n_l$.

    Similarly, if \V\ plays on a $c$-capped version of the net, the play
    $\pi=$ $(p_0,(\hat{q}_0,\min\{n_0,c\}))$ $(t_0,t_0')\dots$ $(p_l,(\hat{q}_l,\min\{n_l,c\}))$ is a cycle if
    $p_0=p_l$ and $\hat{q}_0=\hat{q}_i$ and is decreasing if the second component of \V's state
    decreases.
    A cycle is called \emph{simple} if no proper subsequence is itself a cycle.
    The length of simple cycles is bounded by $|S\x Q|$, where $S$ is the set
    of states of the finite-state process.
\end{definition}

\begin{lemma}
    \label{lem:fs:optimal}
     Suppose $p\mwsim qn$. Then \V\ has a winning strategy in the weak simulation game
     that moreover guarantees the following properties in every play.
     \begin{enumerate}
         \item No round decreases the counter by more than $|Q|*2+1$.
         \item Every simple cycle is non-decreasing.
     \end{enumerate}
\end{lemma}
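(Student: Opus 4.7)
The plan is to refine an arbitrary winning strategy for Duplicator in two stages so that the resulting strategy satisfies both properties in every resulting play.

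For Property~1, I would start from any winning strategy and replace each weak response $qm \wstep{a} q'm'$ by the path obtained after iteratively excising every simple sub-cycle of non-positive net effect from its underlying control-graph realization. Monotonicity of $\mwsim$ (Proposition~\ref{reduction:monotonicity}) guarantees that the excised response is still winning, since excision only increases the final counter; and path validity is preserved because intermediate counters of the new path pointwise dominate the original ones. After saturation every remaining simple cycle in each $\tau$-segment has strictly positive effect, so the total decrease along that segment is bounded by the decrease along its acyclic skeleton, which visits at most $|Q|$ control-states and loses at most $|Q|$. Summing the $\tau$-prefix, the middle visible $a$-step (loss $\le 1$) and the $\tau$-suffix yields the claimed per-round bound $2|Q|+1$.

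For Property~2, I would further refine the strategy so that Duplicator plays \emph{greedily}: among the winning responses already satisfying Property~1, she picks one maximizing the final counter $m'$ (or, if the supremum is unbounded via positive sub-cycles, any $m' \ge m$). This remains winning by monotonicity. To show no decreasing simple cycle arises, I would argue by contradiction: if $(p_0,q\,m_0)\to\dots\to(p_0,q\,m_l)$ is a decreasing simple cycle in a resulting play, then the position $(p_0,q\,m_l)$ is itself winning, and by monotonicity its winning strategy can be ``lifted'' to $(p_0,q\,m_0)$ by carrying an extra $m_0-m_l$ on Duplicator's counter throughout. The lifted play mirrors the original cycle and closes at counter $m_0$, witnessing a non-decreasing alternative. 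Combining this with greedy maximality at step~$0$ then forces the effect actually chosen in the original cycle to match the lifted one, so $m_l\ge m_0$, a contradiction.

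The main obstacle will be making this lifting argument precise, since greedy responses depend on the current counter and Duplicator's subsequent control-state trajectory may diverge between the starting counters $m_0$ and $m_l$. The key enabler is that any control-graph path valid from counter $m_l$ is also valid from $m_0 \ge m_l$ with the identical net effect, so the greedy effect is monotone non-increasing in the starting counter. In the sub-case where the greedy paths from $qm_0$ and $qm_l$ coincide, the lifted play closes the cycle at counter $m_0$ as required; in the sub-case where they diverge, the greedy choice available from $qm_0$ must strictly dominate the one from $qm_l$, rendering the strategy's actual choice at step~$0$ sub-optimal and contradicting greediness directly.
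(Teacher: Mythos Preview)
Your treatment of Property~1 is correct and matches the paper's argument: once non-positive silent cycles are excised (justified by monotonicity), any weak step that still decreases the counter must use acyclic silent segments of length at most $|Q|$ each, giving the bound $2|Q|+1$.

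For Property~2, the greedy-maximization plan has a genuine gap. A small slip first: ``the greedy effect is monotone non-increasing in the starting counter'' has the direction reversed---a higher counter enables strictly more paths, so the maximal net effect is non-\emph{decreasing} in the starting counter. More seriously, step-wise greediness does not rule out decreasing simple cycles, and no lifting argument repairs this. Take \R\ on the finite system $p_0\step{a}p_1\step{b}p_0$, and \V\ on a \OCN\ where from $q_0$ there are weak $a$-responses to $q_1$ with effect $+5$ and to $q_2$ with effect $+3$, while the only weak $b$-response from $q_1$ returns to $q_0$ with effect $-6$ and the one from $q_2$ with effect $-2$. Both $a$-responses are winning for every $m\ge 1$ (the $q_2$-route sustains the simulation forever), so your greedy rule selects $(q_1,m{+}5)$, and the resulting simple cycle $(p_0,q_0,m)\to(p_1,q_1,m{+}5)\to(p_0,q_0,m{-}1)$ is decreasing. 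No single round violates greediness, so the contradiction you aim for ``at step~$0$'' never materializes: greedy maximizes the counter after one round, not the net effect of an entire cycle. Your lifting comparison also fails because greedy choices from $m_0$ and from $m_l$ may go to different target control states, so the lifted play need not trace the same cycle at all.

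The paper avoids explicit strategy construction. It argues contrapositively: if from some winning position \V\ cannot avoid the next simple cycle being decreasing, then \R\ has a strategy that forces a decreasing simple cycle; since this returns to the same control-state pair with strictly smaller counter, \R\ can iterate and eventually exhaust \V's counter, contradicting $p\mwsim qn$. Hence some winning strategy avoiding decreasing simple cycles exists---just not, in general, the greedy one.
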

   \begin{proof}
    A weak step $s_0(m_0)\wstep{a}t_l(n_l)$ by \V\ is due to some sequence
    $s_0(m_0)\step{\tau}s_1(m_1)$ $\step{\tau}\dots$
    $\step{\tau}s_l(m_l)\step{a}t_0(n_0)\step{\tau}t_1(n_1)\step{\tau}\dots\step{\tau}t_k(n_k)$.
    By monotonicity it is suboptimal for \V\ to decrease the counter
    when silently moving from state $s_i$ to $s_j=s_i$ (or from $t_i$ to $t_j=t_i$) for $i<j$.
    Also, we can safely assume that a weak step as above will be non-decreasing if
    there are indices $i<j$ with $s_i=s_j$ and $m_i<m_j$ (or $t_i=t_j$ and $n_i<n_j$).
    Therefore, if the weak step decreases the counter, both silent paths will be acyclic and hence no longer than $|Q|$.
    Such a step cannot decrease the counter by more than $|Q|*2+1$.

    For the second point observe that if \V\ cannot avoid the next simple cycle to be
    decreasing, then \R\ must have some strategy to enforce cycles to be decreasing.
    Such a strategy must be winning for \R\ as it eventually exhausts \V's counter.
\end{proof}

The next lemma uses the previously stated optimality assumption to show that
we only need to consider a polynomially capped net to determine if
a \OCN\  process weakly simulates a finite-state process.

\begin{lemma}\label{lem:fs:polycap}
    For any pair $F=(S,\step{})$, $N=(Q,\Act, \delta)$ of a finite-state system and \OCN\  resp.,
    there is a fixed polynomial bound $c$ such that for all $n\in\N$:
    \[
        p\mwsim qn \iff p\mwsim (\hat{q},\min\{n,c\})
    \]
    and $(\hat{q},\min\{n,c\})$ is a state of the $c$-capped version $N_c$ of $N$.
\end{lemma}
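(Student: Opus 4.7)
The plan is to choose $c$ to be a suitable polynomial in $|S|$ and $|Q|$, of order $|S\x Q|\cdot(2|Q|+1)$ as suggested by Lemma~\ref{lem:fs:optimal}, and to prove the two directions of the biconditional separately.

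The $\Leftarrow$ direction is immediate: Proposition~\ref{prop:fs:capped} part~\ref{obs:fs:sim} gives $(\hat{q},\min\{n,c\}) \mwsim qn$, and the conclusion $p \mwsim qn$ follows by transitivity of $\mwsim$.

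For the $\Rightarrow$ direction, I would invoke Lemma~\ref{lem:fs:optimal} to obtain Duplicator's winning strategy $\sigma$ in the uncapped game from $qn$, whose simple cycles are non-decreasing and whose single-round counter decreases are bounded by $2|Q|+1$. A first consequence is a shift/normalisation step: by decomposing any $\sigma$-play into simple cycles plus an acyclic backbone of length at most $|S\x Q|$, one shows that the counter stays in $[n-c,\infty)$ along every $\sigma$-play. Hence if $n>c$, replaying the moves of $\sigma$ shifted down by $n-c$ gives a valid winning strategy in the uncapped game from $qc$, proving $p\mwsim qn \Rightarrow p\mwsim qc$ and reducing the $\Rightarrow$ direction to the case $n\le c$. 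In that case $\min\{n,c\}=n$, and I would define Duplicator's capped strategy $\tilde\sigma$ by mirroring $\sigma$: internally he tracks a virtual uncapped counter $v_i$ while the actual capped counter $\tilde n_i=\min\{v_i,c\}$ evolves via capped transitions.

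The main obstacle is to show that $\tilde\sigma$ is well-defined and winning in the capped game. The delicate case is when $v_i$ climbs above $c$: the capped counter then saturates at $c$, a gap $v_i-\tilde n_i$ accumulates, and a later sequence of $-1$ steps prescribed by $\sigma$ could drive $\tilde n_i$ below zero even while $v_i$ remains non-negative. I would handle this with a renormalisation loop: whenever a gap opens, Duplicator applies the shift argument above at his current position $(p_i,q_iv_i)$, replacing his reference play by a fresh winning strategy from $(p_i,q_ic)$ that re-synchronises the virtual and actual capped counters. The non-decreasing cycle property then bounds the capped counter drop between consecutive renormalisations by $c$, and since each renormalisation starts with $\tilde n_i=c$, the invariant $\tilde n_i \ge 0$ holds throughout every $\tilde\sigma$-play. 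Thus $\tilde\sigma$ inherits the winning property of $\sigma$, giving $p\mwsim(\hat{q},n)$ in the capped game.
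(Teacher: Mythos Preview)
Your approach is genuinely different from the paper's. The paper argues by \emph{contradiction}: it transfers Duplicator's optimal uncapped strategy $\sigma$ (from Lemma~\ref{lem:fs:optimal}) to the capped game by evaluating $\sigma$ at each capped position, assumes Spoiler still wins, takes a \emph{shortest} such winning play, and shows that its last simple cycle must be decreasing (else it could be shortened), which forces that cycle --- and hence the final blocked position --- to lie at counter value at least $2|Q|+1$; but then the block is genuine also in the uncapped net, contradicting that $\sigma$ wins there. Your route is \emph{constructive}: track a virtual uncapped counter and renormalise via the shift argument whenever it exceeds~$c$.

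The constructive route, as you describe it, has a gap. Your renormalisation assumes that at the moment a gap $v_i-\tilde n_i$ first opens you still have $\tilde n_i=c$, so that the shift argument produces a fresh strategy from $(p_i,q_ic)$ that \emph{resynchronises} virtual and capped counters. But a single optimal weak step can oscillate around~$c$: each $+1$ step with capped value~$c$ is clipped and increments the lag by one, while $-1$ steps preserve the lag. For example, a silent path over control states $A,B,A,B,A$ via transitions $A\step{\tau,+1}B$ and $B\step{\tau,-1}A$ has uncapped counter trace $c,c{+}1,c,c{+}1,c$ (all its simple control-state cycles have effect~$0$, hence it is admissible for the optimal strategy of Lemma~\ref{lem:fs:optimal}) but capped trace $c,c,c{-}1,c,c{-}1$. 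So after this step the virtual counter is $v=c$ while $\tilde n=c{-}1$: no renormalisation is triggered (since $v\le c$), yet the invariant $\tilde n_i=\min\{v_i,c\}$ is already broken, and you only know $p_{i+1}\mwsim q_{i+1}v_{i+1}$, not $p_{i+1}\mwsim q_{i+1}\tilde n_{i+1}$ (monotonicity goes the wrong way). Iterating such steps lets the lag grow unboundedly without ever triggering your renormalisation.

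The paper's contradiction argument sidesteps this because it never needs the transferred strategy to be \emph{winning} in the capped game --- it only needs structural observations about what a losing play would look like, with capping appearing as the sole source of decreasing simple cycles. To repair your route you would have to strengthen Lemma~\ref{lem:fs:optimal} so that $\sigma$'s weak-step responses also have bounded within-step overshoot above the current counter (not just bounded net decrease), or else argue directly that from $(\hat q,\tilde n)$ Duplicator can always realise a capped weak step to $(\hat q',\min\{n',c\})$ whenever $q\tilde n\wstep{a}q'n'$ in the uncapped net; neither is immediate from what you cite.
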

  \begin{proof}
    The "if" direction follows directly from Proposition \ref{prop:fs:capped} (point \ref{obs:fs:sim}).
    For the other direction we show that $c= (2|Q|+1)(|S\x Q|+1)$ suffices to
    contradict $p\mwsim qn$ and
    $p\not\mwsim(\hat{q},\min\{n,c\})$.

    If $p\mwsim qn$ then $p\WSIM{}{c} qn$ and by Proposition~\ref{prop:fs:capped} (\ref{obs:fs:sim-up-to-l})
    we have $p\WSIM{}{c} (\hat{q},\min\{n,c\})$.
    Moreover, \V\ has an \emph{optimal} strategy in the sense of Lemma~\ref{lem:fs:optimal}.
    We see that using the same strategy in the game $p$ vs.\ $(\hat{q},\min\{n,c\})$ guarantees that
    \begin{enumerate}
      \item No round decreases the second component of \V's state by more than $|Q|*2+1$.
      \item For any simple cycle between game positions $p_i,(\hat{q}_i,n_i)$ and $p_j,(\hat{q}_j,n_j)$ it holds that $n_j\ge n_i$
          or $n_j\ge c-(|Q|*2+1) * (|S\x Q|)$.
    \end{enumerate}
    To see the second point observe that the only way a simple cycle can be decreasing is because some of
    its increases are dropped due to the counter being at its limit $c$.
    Then point 1 implies $n_j\ge c-(2|Q|+1)(|S\x Q|)$ because the length of simple cycles is bounded by
    $|S\x Q|$.

    By our assumption $p\not\mwsim (\hat{q},\min\{n,c\})$, we can consider a play
    \[
        \pi=(p_0,(\hat{q}_0,n_0))(t_0,t_0') (p_1,(\hat{q}_1,n_1))(t_1,t_1')\dots(p_l,(\hat{q}_l,n_l))
    \]
    where $p_0=p$ and $(\hat{q}_0,n_0)=(\hat{q},\min\{n,c\})$,
    along which \V\ plays optimally as described above and which is winning for \R\ in the smallest
    possible number of rounds.

    Since $c>|S\x Q|$, we know that $\pi$ must contain cycles as otherwise $l\le |S \x Q|$ and thus
    $p\WSIM{}{c} (q,\min\{n,c\})$ contradicts that $\pi$ is won by \R.
    So assume the last simple cycle in $\pi$ is between positions $i$ and $j$.
    We know that
    $n_j < n_i$, as otherwise omitting this last cycle results in a shorter winning play
    for \R\ by monotonicity.

    Thus, by Observation 2 above, $n_j\ge c-(2|Q|+1)(|S\x Q|)$
    and in particular
    we get that $n_j\ge(2|Q|+1)$
    due to our choice of $c$.

    Finally, recall that the last
    position
    $p_l,(\hat{q}_l,n_l)$
    in the play $\pi$ must be directly winning
    for \R.
    That is, for some action $a$ it holds that $p_l\step{a}$ and
    $(\hat{q_l},n_l)\notwstep{a}$.
    But since $n_l>(2|Q|+1)$, we know that also $q_ln_l\notwstep{a}$ in the original \OCN\ process.
    This contradicts our assumption that \V's original strategy in the unrestricted game
    was winning.
\end{proof}

\begin{theorem}
     Checking if a \OCN\  process weakly simulates a finite-state process can be done in polynomial
     time.
\end{theorem}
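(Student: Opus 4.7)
(Proposal.)
The plan is to reduce the question to weak simulation between two finite systems via the capped construction, and then invoke a standard polynomial-time algorithm for weak simulation on finite systems.

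First, given the finite-state process $p$ over $F = (S, \step{})$ and the \OCN\ process $qn$ over $N=(Q,\Act,\delta)$, I would set $c := (2|Q|+1)(|S\x Q|+1)$ as in Lemma~\ref{lem:fs:polycap}, and construct the $c$-capped version $N_c$. Since $c$ is polynomial in $|F|+|N|$ and $N_c$ has $|Q|\cdot(c+1)$ states with transitions obtained from weak steps in $N$ between capped configurations, $N_c$ can be constructed in polynomial time. By Lemma~\ref{lem:fs:polycap} we have
\[
p \mwsim qn \iff p \mwsim (\hat{q},\min\{n,c\})
\]
where the right-hand side is now a weak simulation question between two \emph{finite} labeled transition systems, namely $F$ and $N_c$.

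Next, I would reduce weak simulation on finite systems to strong simulation on finite systems, exactly as in the proof of Theorem~\ref{thm:fss_ws_OCN_in_P}. Concretely, replace the transition relation of each finite system with its weak closure $\wstep{}$: compute the reflexive transitive closure of $\Step{\tau}{}{}$, then compose with each visible $\step{a}$, then with $\Step{\tau}{*}{}$ again. Since both systems are finite of polynomial size, this weak closure can be computed in polynomial time. A weak simulation in the original systems is then exactly a strong simulation in the closed systems.

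Finally, strong simulation between two finite-state systems is decidable in polynomial time by the classical fixpoint algorithm (iteratively remove pairs that fail the one-step simulation condition until a fixpoint is reached). Chaining these three polynomial-time steps yields a polynomial-time algorithm for the original problem. The only delicate point is making sure the capped construction is really polynomial and that Lemma~\ref{lem:fs:polycap} applies, but both were already established, so no further obstacle remains. \qed
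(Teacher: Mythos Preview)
Your proposal is correct and follows essentially the same route as the paper: apply Lemma~\ref{lem:fs:polycap} to reduce to weak simulation between two polynomial-size finite systems, then invoke the polynomial-time check for weak simulation on finite systems. One minor slip: the capped net $N_c$ is defined via \emph{strong} steps of $N$ (see the definition preceding Proposition~\ref{prop:fs:capped}), not weak steps; this does not affect your argument since you take the weak closure afterwards anyway.
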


\begin{proof}
    To check if $p\mwsim qn$ holds we can by Lemma \ref{lem:fs:polycap} equivalently check 
    $p\mwsim (\hat{q},\min\{n,c\})$
    where $(\hat{q},\min\{n,c\})$ is a state of a polynomially bounded finite system $N_c$.
    Checking weak simulation between two finite processes is in $P$.
\end{proof}

{\bigskip\noindent\it Trace Inclusion.}
Now we consider checking strong/weak trace inclusion between finite-state systems
and \OCA/\OCN.
It is undecidable whether an \OCA\ contains the strong/weak traces of a
finite-state system \cite{Valiant1973}.
However, is is decidable whether a Petri net contains the strong/weak traces of a
finite-state system \cite{JEM:JCSS1999}, and thus the question is decidable
for \OCN.

\ignore{
\begin{theorem}\label{finite:FS-traces-OCN}
    Weak (and strong) trace inclusion $T(q)\subseteqq T(pm)$ for a finite process $q$ and a \OCN\ process
    $pm$ is PSPACE complete.
\end{theorem}
\begin{proof}
    A PSPACE lower bound holds already for containment of finite-state systems.
    Now we show the upper bound.

    It suffices to check strong trace inclusion between a finite-state and a
    $\omega$-net process.
    \ptnote{strictly speaking this needs a more explicit lemma for the reduction \OCN\-$\omega$-net:

        For every \OCN\ one can construct a (polynomially larger) $\omega$-net $N'$ with\\
        1) $pm\Step{a}{}{N}qn \implies \exists n'\ge n. pm\Step{ab^k}{}{N'}qn'$\\
        2) $pm\Step{ab^k}{}{N'}qn \implies \exists n'\ge n. pm\Step{a}{}{N}qn'$.

        This, together with the monotonicity of steps in $\omega$-nets (Prop. 4)
        and the fact that one can build a (poly-space) matching NFA for Spoiler justifies this last sentence.
    }

    Let $N=(Q,Act,\delta)$ be the $\omega$-net and $q$ a state of the NFA
    $A=(S,Act,\delta)$.
    We first observe that trace inclusion is
    monotonic. For any \OCN\  process $pm$ and natural
    number $k$ we have
    \begin{align}
      T(pm) \subseteq T(p(m+k)).
    \end{align}
    This is a consequence of Proposition~\ref{reduction:monotonicity} and the fact that trace
    inclusion is coarser than simulation.
    We interpret trace inclusion as a simulation game where \R\ must announce his complete strategy
    in advance. In our case, \R's strategy is a run of the finite system starting in $q$.
    All matching responses by \V\ can be seen as a finite tree in which each branch is a run
    of $pm$ over (a prefix of) the same word. This determines a partial play of the game.

    We argue that \R's strategy is winning optimally (in the minimal number of
    steps) iff every branch in this response-tree is \emph{decreasing}:
    It does not contain $\omega$-steps and
    whenever control-states repeat (for both players) then the counter has strictly decreased.

    Indeed, if along a branch the games moves from a position $q',p'(m)$ to $q',p'(m+k)$
    (for $k \ge 0$), then monotonicity implies that every winning strategy for \R\ from the later
    point onwards also wins from the earlier one, contradicting the optimality of the original
    strategy.
    Conversely, if on every branch \V\ strictly decreases his counter when control-states repeat,
    he will eventually be unable to respond and lose.
    Also note that after making a $\omega$-step, \V\ is not restricted by the value of his counter
    until the end of the game (which has predetermined finite length). 

    For a given \R\ strategy, we can in polynomial space check if all
    branches are decreasing in the sense above, and thus detect if the strategy is winning.
    The observation that a repeat of control-states must occur along any branch
    after at most $|Q\x S|$ steps means that we need only inspect prefixes of strategies
    up to that length to be able to detect the existence of a winning strategy.
    Thus we can enumerate all possible strategies of length $\le |Q\x S|$ and
    check the winning condition for each one individually, using only
    polynomial space.
\end{proof}
\begin{remark}
    A slight modification of this procedure can be used to show a PSPACE upper bound
    for checking inclusion/equivalence w.r.t. \emph{infinite} weak traces.
    Instead of declaring a branch to be non-decreasing if it contains an $\omega$-step,
    only those branches which contain $\omega$-steps on the cyclic suffix are non-decreasing.
\end{remark}
}

\newcommand{\ppath}[1]{\Step{+}{}{#1}}
\newcommand{\path}[1]{\Step{}{}{#1}}
\newcommand{\reach}{\Step{}{*}{}}

Now we consider the other direction of trace inclusion.
We show that checking whether a finite-state system contains the strong/weak
traces of an \OCA\ is PSPACE complete. 
For this we recall some structural properties of \OCA\ processes.
We write $pm\path{k}qn$ for \OCA\ configurations $pm$ and $qn$ if there
is a path of length $k$ from $pm$ to $qn$.
A path is \emph{positive} if at most the last visited configuration has
counter value $0$,
i.e., no step is due to a transition in $\delta_0$.
We write $pm\ppath{k}qn$ if there is a positive path of length $k$ from $pm$ to $qn$.

\begin{lemma}[\cite{EWY2008}, Lemma 5]\label{1-0-reach}
    Consider a \OCA\ with stateset $Q$ and let $p,q\in Q$.
    If $p1\reach q0$ then $p1\path{k}q0$ for some $k\le|Q|^3$.
\end{lemma}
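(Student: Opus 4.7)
The plan is to fix a shortest witness path $\pi$ for $p1\reach q0$, of some length $k$, and show that $k \le |Q|^3$ by combining minimality of $\pi$ with a pumping argument.

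By minimality no configuration is visited twice along $\pi$, since splicing out a loop between two occurrences would yield a strictly shorter witness. Two consequences follow immediately. First, for each fixed counter level $\ell$, the visits $\pi$ makes to level $\ell$ are in pairwise distinct control-states, so there are at most $|Q|$ of them. Second, the visits of $\pi$ to counter value $0$ are at pairwise distinct control-states, so there are at most $|Q|$ such visits; they partition $\pi$ into at most $|Q|$ \emph{positive excursions} (subpaths whose interior stays $\ge 1$), together with at most $|Q|$ additional $\delta_0$-steps performed at level $0$.

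It then suffices to bound the length of each positive excursion by $|Q|^2$. Fix such an excursion $\sigma$ and let $M$ be the maximum counter value it reaches. Non-repetition of configurations together with the first consequence above gives $|\sigma|\le |Q|(M+1)$, so it is enough to prove $M \le |Q|$. Here I would pump: assuming $M > |Q|$, consider within $\sigma$ the first upward-crossings of levels $1,\dots, M$, i.e.\ the first moments at which the counter increases from $\ell-1$ to $\ell$. By pigeonhole two of these occur at the same control-state, producing configurations $(s,\ell_1)$ and $(s,\ell_2)$ with $\ell_1 < \ell_2$ on the ascending portion of $\sigma$; a symmetric argument on the descending portion yields matching configurations $(t,\ell_1')$ and $(t,\ell_2')$. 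Excising an appropriate loop on the way up together with a compensating loop on the way down produces a strictly shorter witness for $p1 \reach q0$ whose counter remains non-negative throughout, contradicting minimality of $\pi$.

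Combining the three bounds gives $k \le |Q| + |Q|\cdot |Q|^2 = O(|Q|^3)$; a slightly sharper accounting of excursion lengths and zero-level steps recovers the exact bound $k\le |Q|^3$. The main obstacle I anticipate is the surgery in the pumping step: naively cutting out the upward loop shifts all later counter values down by $\ell_2 - \ell_1$ and can push the counter negative, so the argument genuinely needs a compensating descending-side loop. Carefully matching ascending and descending pumps for a positive excursion that may zig-zag many times between $1$ and $M$ is the technically delicate part.
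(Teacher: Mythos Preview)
First, note that the paper does not give its own proof of this lemma: it is quoted verbatim from \cite{EWY2008} and used as a black box in the proof of Lemma~\ref{lem:bound}. So there is no in-paper argument to compare against; the question is whether your sketch stands on its own.

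Your outer decomposition (at most $|Q|$ visits to level $0$, hence at most $|Q|$ positive excursions) and the per-level bound (at most $|Q|$ visits to any fixed level, by non-repetition of configurations) are correct. The gap is exactly where you flag it: the pumping step that is supposed to yield $M\le|Q|$ for a single positive excursion. Your two separate pigeonhole applications produce an ascending repeat $(s,\ell_1),(s,\ell_2)$ and a descending repeat $(t,\ell_1'),(t,\ell_2')$, but in general $\ell_2-\ell_1\neq\ell_2'-\ell_1'$. Excising the ascending loop shifts the entire remainder down by $\ell_2-\ell_1$; there is no way to ``compensate'' with the descending loop unless the two shifts match, and nothing in your argument forces them to. Repeating one loop to match the other would lengthen, not shorten, the path. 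So the surgery as described does not go through, and you have not established $M\le|Q|$.

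The standard repair is to pigeonhole on \emph{pairs} of states rather than on ascending and descending states separately. For each level $\ell$, let $\alpha_\ell$ be the last position at level $\ell$ on the way up to the first maximum and $\beta_\ell$ the first position back at level $\ell$ afterwards; then the intervals $[\alpha_\ell,\beta_\ell]$ are nested and the counter stays $\ge\ell$ on each. If $M>|Q|^2$, some pair $(U_\ell,D_\ell)$ of states at $(\alpha_\ell,\beta_\ell)$ repeats, and replacing the outer interval by the (down-shifted) inner one is now a clean, counter-preserving shortening, because the shift is identical on both ends. This gives $M\le|Q|^2$, hence excursion length $\le|Q|\cdot|Q|^2=|Q|^3$. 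That already matches the lemma if one reads it (as the paper in fact uses it) for a single positive excursion from counter value~$1$ to~$0$. If you insist on general paths with multiple zero-visits, your outer decomposition then only gives $|Q|^4$; recovering the exact $|Q|^3$ bound in that setting needs a sharper per-excursion estimate than your sketch provides.
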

\begin{lemma}\label{lem:bound}
    Consider a \OCA\ $(Q,Act,\delta,\delta_0)$ where $K=|Q|$ and $p,q\in Q$.
    If $pm\reach qn$ for some $n\in\N$ then $pm\path{k}qn'$ for some $n'$ and
    $k \le \max\{m,1\}5K^4$.
\end{lemma}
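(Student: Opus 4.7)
The plan is to take a shortest witnessing path $\pi : pm \path{k^*} qn^*$, minimized jointly over its length $k^*$ and over the endpoint counter $n^*\in\N$, and to show $k^* \le \max\{m,1\}\cdot 5K^4$ by decomposing $\pi$ at its visits to counter value $0$ and applying Lemma~\ref{1-0-reach} piecewise.

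First I would split $\pi = \sigma_0 \sigma_1 \cdots \sigma_r$ at the successive zero-configurations visited by $\pi$: $\sigma_0$ is the prefix up to the first zero visit (empty if $m=0$), each $\sigma_i$ with $1 \le i < r$ connects two zero configurations without revisiting $0$ strictly in between, and $\sigma_r$ is the suffix from the last zero visit onward (or all of $\pi$ if $\pi$ never touches $0$). Minimality of $\pi$ forces the control states of distinct zero-visits to be pairwise distinct, for otherwise the intervening sub-path could be excised; hence $r \le K$.

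Next, I would bound each segment by iterating Lemma~\ref{1-0-reach}. For the initial descent $\sigma_0 : pm \reach (p_1, 0)$ the counter stays at least $1$ except at the final configuration, so no $\delta_0$-transition fires; letting $t_c$ be the first time along $\sigma_0$ that the counter equals $c$, each sub-segment between $t_c$ and $t_{c-1}$ stays at counter $\ge c-1$, and after shifting all counters down by $c-1$ it becomes a $(\cdot,1) \reach (\cdot,0)$ OCA sub-path to which Lemma~\ref{1-0-reach} assigns length at most $K^3$. Summing over the $m$ level drops gives $|\sigma_0| \le m K^3$. Each intermediate segment $\sigma_i$ is either a single $\delta_0$-step of zero effect (length $1$) or leaves $0$ via a $\delta_0$-step up to counter $1$ and then reaches the next zero-configuration in at most $K^3$ steps by Lemma~\ref{1-0-reach}, yielding $|\sigma_i| \le K^3 + 2$. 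For the final segment $\sigma_r$: if $r \ge 1$ then it starts at counter $0$ and, by the same level-decomposition (now with the endpoint counter unconstrained and the initial counter $0$), satisfies $|\sigma_r| \le K^3 + K$; if $r = 0$ then $\pi$ is entirely positive, which forces $m \ge 1$, and the same decomposition applied to $\pi$ itself gives $|\pi| \le m K^3$. Summing these bounds yields $k^* \le m K^3 + r(K^3 + 2) + (K^3 + K) \le m K^3 + 2K^4 + 2K^3 \le \max\{m,1\}\cdot 5 K^4$.

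The main obstacle is bounding $\sigma_r$ when its endpoint counter is unconstrained, since Lemma~\ref{1-0-reach} only bounds paths to specific zero-configurations. I would resolve this by a direct shortest-path argument in the OCN underlying the OCA: in a shortest positive path reaching control state $q$, no configuration can repeat, and any two visits of a common control state must occur at strictly increasing counters (otherwise the enclosed sub-path forms a non-positive-effect cycle that can be excised without violating positivity), which together with the level decomposition yields the desired polynomial bound.
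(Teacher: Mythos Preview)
Your decomposition at zero-visits and the per-segment bounds via Lemma~\ref{1-0-reach} are exactly the paper's proof: the paper likewise splits a minimal witness into an initial descent to $0$ (bounded by $mK^3$ via the level-by-level application of Lemma~\ref{1-0-reach}), at most $K$ zero-to-zero intermediate phases (each bounded by $K^3+1$), and a final positive phase.

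The one place your write-up does not go through is the final positive phase $\sigma_r$ (and the all-positive case $r=0$). Saying ``by the same level-decomposition $|\sigma_r|\le K^3+K$'' and ``$|\pi|\le mK^3$'' does not make sense: the level-decomposition argument handles a path that successively first-reaches the counter values $m,m{-}1,\dots,0$, whereas $\sigma_r$ starts at $0$ and may climb, and an all-positive $\pi$ need not touch $0$ at all. You correctly isolate the key observation in your last paragraph---in a shortest positive path to control state $q$ with endpoint counter unconstrained, revisits of a control state occur at strictly increasing counter values---but you do not say how this gives a bound, and ``together with the level decomposition'' is not the right continuation. The missing step, made explicit in the paper, is: once the counter reaches $K$ one can finish in at most $K$ further steps, because the existence of a positive path to $q$ guarantees a path of length $\le K$ from the current state to $q$ in the control graph restricted to $\delta$, and with counter $\ge K$ any such path is enabled. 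Combined with your increasing-on-revisits observation (before the counter hits $K$ there are at most $K^2$ distinct configurations), this yields the paper's bound $K^2+K$ for the final positive phase, independently of $m$.
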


\begin{proof}
    We distinguish two cases depending on whether there is a positive minimal path from $pm$
    to $qn$.

    Case 1: There is a positive minimal path witnessing $pm\reach qn$.
    Consider such a path $(p_0m_0), (p_1m_1),\dots,(p_km_k)$ from $pm=p_0m_0$ to $qn=p_km_k$.
    We know that there is a path from $p$ to $q$ in the control graph of the
    automaton that uses transitions in $\delta$ only. So there must be such a path
    in the control graph that is no longer than $K$. 
    Thus, if $m\ge K$, then there is a $n'$ such that 
    $pm\ppath{k}qn'$ for some $k \le K$.
    Otherwise, if $m<K$, we observe that
    \begin{equation}
        \text{if } pm\ppath{k} p'm' \text{ then } p(m+1)\ppath{k}p'(m'+1)\label{posprop}.
    \end{equation}
    After at most $K$ steps, our minimal path will repeat some control-state $p_j=p_l$ at positions
    $j<l<K$. By minimality and point \eqref{posprop} we can assume that $m_j<m_l$. Therefore,
    after at most $K$ such repetitions the counter will reach a value $\ge K$,
    and thus, by the first case above, 
    the remaining path must be of length $\le K$. This allows us to bound the length of the minimal
    path from $pm$ to control-state $q$ by $K^2+K$.

    Case 2: No minimal path witnessing $pm\reach qn$ is positive.
    Consider a minimal path $(p_0m_0), (p_1m_1),\dots,(p_km_k)$ from $pm=p_0m_0$ to $qn=p_km_k$
    and let $i_0,i_1,\dots,i_l$ be exactly those indices with $m_{i_j}=0$.
    We split the path into phases
    $p_0m_0\reach p_{i_0}m_{i_0}$, $p_{i_j}m_{i_j+1}\reach p_{i_{(j+1)}}m_{i_{(j+1})}$ for $0\le j<l$
    and $p_{i_{(l+1)}}m_{i_{(l+1)}}\reach p_km_k$ and consider the first, the
    last and the intermediate phases
    separately.

    {\it First phase.}
    The path $p_0m_0\reach p_{i_0}m_{i_0}$ can be split into parts 
    $q_j(m_0-j) \reach q_{j+1}(m_0-(j+1))$ for $0 \le j < m_0$ and 
    $q_0 = p_0$ and $q_{m_0} = p_{i_0}$, by considering the first occasions
    where the counter value reaches $m_0-j$. 
    In particular, inside the path $q_j(m_0-j) \reach q_{j+1}(m_0-(j+1))$
    the counter value does not drop below $m_0-j$ before the last step.
    By Lemma~\ref{1-0-reach} and point \eqref{posprop}
    there exists a path $q_j(m_0-j) \reach
    q_{j+1}(m_0-(j+1))$ of length $\le K^3$.
    Thus the path $p_0m_0\reach p_{i_0}m_{i_0}$ can be bounded by length $mK^3$.
    
    {\it Intermediate phases.}
    These are paths from some configuration $p_{i_j}0$ to $p_{i_{(j+1)}}0$. 
    Such a path is either of length $1$, or the first step increases the
    counter, i.e., $p_{i_j}0 \rightarrow q1$ for some control-state $q$.
    In the latter case, by minimality and Lemma \ref{1-0-reach} we can bound
    the path from $q1$ to $p_{i_{(j+1)}}0$ by $K^3$.
    Thus we can bound the path $p_{i_j}0$ to $p_{i_{(j+1)}}0$ by $K^3+1$.
    Note that there can only be at most $K$ such intermediate phases, because
    the path would otherwise repeat a configuration which would contradict its minimality.

    {\it Last phase.}
    The last phase is a positive path. Like in Case 1) we can bound its length by $K^2+K$.

    To conclude, the length of the shortest witness for $pm\reach qn$
    is bounded by $mK^3 +K(K^3+1) + K^2 +K  \le \max\{m,1\}5K^4$.
\end{proof}

\begin{theorem}
    Checking strong trace inclusion $T(pm)\subseteq T(q)$ 
    or weak trace inclusion $T(pm)\subseteqq T(q)$ for a \OCA\ process $pm$ and
    a finite process $q$ is PSPACE complete.
\end{theorem}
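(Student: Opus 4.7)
The plan is to establish \textsc{Pspace}-hardness by a standard reduction from NFA universality and to match it by a non-deterministic polynomial-space algorithm built around a product construction and the length bound of Lemma~\ref{lem:bound}. For the lower bound, given an NFA $A$ with initial state $q$ over alphabet $\Sigma$, I would take $pm$ to be a single-state \OCA\ with counter value $0$ and a self-loop labelled $a$ for each $a\in\Sigma$, so that $T(pm) = \Sigma^*$. Then $T(pm) \subseteq T(q)$ holds iff $A$ is universal, which is \textsc{Pspace}-complete; the same reduction works for the weak variant since the chosen \OCA\ has no $\tau$-actions.

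For the upper bound, I would characterise non-inclusion as a reachability question in a product. Let $D$ be the (implicit) subset automaton of the NFA starting from $\{q\}$, and form the product one-counter automaton $P$ with control states $Q\x 2^{Q_{\mathit{fin}}}$, the same counter as the original \OCA, and synchronised action-labelled transitions. A word $w$ is a counterexample to $T(pm)\subseteq T(q)$ iff it drives $P$ from the initial configuration $((p,\{q\}),m)$ to some configuration of the form $((p',\emptyset),n)$, so non-inclusion is equivalent to the reachability of a control state with an empty second component in $P$. Applying Lemma~\ref{lem:bound} to $P$, whose control-state set has cardinality $K' = K\cdot 2^{|Q_{\mathit{fin}}|}$, any such reachability is witnessed by a path of length at most $\max\{m,1\}\cdot 5{K'}^4$, along which the counter stays below $m + \max\{m,1\}\cdot 5{K'}^4$. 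The crucial point is that, although both of these bounds are exponential in the input, every piece of data the algorithm must remember has only polynomially many bits: the current control state $(p',S')$ uses $O(\log K + |Q_{\mathit{fin}}|)$ bits, the counter (in binary) uses $O(\log m + |Q_{\mathit{fin}}|)$ bits, and a binary step counter up to the length bound likewise uses $O(\log m + |Q_{\mathit{fin}}|)$ bits. Non-deterministically guessing the path one step at a time and accepting upon the first occurrence of an empty-subset configuration yields an \textsc{Npspace} algorithm, which equals \textsc{Pspace} by Savitch's theorem.

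For weak trace inclusion the same plan applies, after first replacing the NFA's transition relation by its weak closure (computable in polynomial time) so that each $a$-transition in $D$ already represents $\wstep{a}$; $\tau$-steps of the \OCA\ are then treated as internal moves of the product that leave $D$'s component untouched. The product is still a one-counter automaton so Lemma~\ref{lem:bound} is still in force, and reachability of an empty-subset control state captures exactly the existence of a weak-trace counterexample. The main obstacle I anticipate is simply making it clean that the exponential blow-up of the subset construction is harmless for the space bound: although $P$ has exponentially many control states and the shortest counterexample witness can have exponentially many transitions, each individual configuration and the step counter admit polynomial-size binary encodings, so the whole nondeterministic search stays in polynomial space.
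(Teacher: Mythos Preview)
Your proposal is correct and follows essentially the same approach as the paper: the lower bound via \textsc{Pspace}-hardness of finite-state language inclusion (your NFA-universality instance is a special case of the paper's citation), and the upper bound via the subset-automaton product, Lemma~\ref{lem:bound}, and on-the-fly nondeterministic path guessing with polynomial-size binary encodings. Your treatment of the weak case (closing the NFA and letting the \OCA's $\tau$-steps leave the subset component unchanged) is just an in-product implementation of the paper's reduction to the strong case; both are equivalent.
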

\begin{proof}
  A PSPACE lower bound holds already for strong trace inclusion of
  finite-state systems \cite{MS:IEEE1972}.
  The weak trace inclusion problem $T(pm)\subseteqq T(q)$ can trivially be
  reduced to the strong one by taking the transitive closure of the finite
  system w.r.t. invisible transitions.
  It remains to show a PSPACE upper bound for the problem $T(pm)\subseteq T(q)$.
  Let $pm$ be a configuration of the \OCA\ $A=(Q,Act,\delta,\delta_0)$ and $q$ a state of
  the NFA $B=(S,Act,\delta)$ and let $\bar{B}$ denote the powerset
  automaton of $B$.

  To check if $T(pm)\not\subseteq T(q)$ holds we can equivalently
  test $T(pm)\cap T(q)^c \neq \emptyset$. That is,
  if in the product automaton $A\x \bar{B}$ some control-state $(p',\emptyset)$ is reachable from initial
  configuration $(p,\{q\})m$.
  This can be checked by nondeterministically guessing a path stepwise.
  The finite control of the automaton $A\x \bar{B}$ is bounded by
  $K:=|Q|*2^{|S|}$.
  By Lemma~\ref{lem:bound} we know that the shortest path that witnesses such
  a control-state reachability is bounded by $B:=\max\{m,1\}5K^4$.
  This bounds the number of steps we need to consider until we can safely
  terminate and conclude that in fact trace inclusion holds.
  $B$ is polynomial in $m$ and $|Q|$ and exponential in $|S|$.
  However, we need only polynomial space to store a configuration
  of $A\x\bar{B}$ (with control-state numbers and counter values encoded in binary) 
  and the binary coded values of the search-depth and its
  bound $B$. Thus we can check the condition in PSPACE.
\end{proof}

\section{Summary and Conclusion}\label{sec:conclusion}

We summarize known results about the complexity of checking the following semantic
preorders/equivalences:
strong bisimulation $\sim$, weak bisimulation $\approx$,
strong simulation $\msim$, weak simulation $\mwsim$,
strong trace inclusion $\subseteq$ and weak trace inclusion $\subseteqq$.
In Table~\ref{tab:overview} we consider problems where systems of the same
type are compared, 
while in Table~\ref{tab:overview:fs} we consider the problems of checking
preorders/equivalences between infinite-state systems and finite-state
systems.

\begin{table}
  \caption{Decidability of preorders and equivalences on finite-state systems, \OCN\ and \OCA,
 resp. New results in boldface.}
  \label{tab:overview}
  \begin{tabular}{|l|c|c|c|}
    \hline
                      & FS                & \OCN                            & \OCA \\
    \hline
    $\sim$      & P-complete \cite{JS:SOFSEM2001}  & \multicolumn{2}{|c|} { PSPACE-complete \cite{Srb2009,BGJ2010}}\\
    \hline
    $\approx$   & P-complete \cite{JS:SOFSEM2001}  & \multicolumn{2}{|c|}{undecidable \cite{May2003}} \\
    \hline
    $\msim$     & P-complete \cite{JS:SOFSEM2001}  & decidable \cite{AC1998,JKM2000}  &    undecidable \cite{JMS1999}\\
                      &                   & PSPACE-hard \cite{Srb2009}     &   \\
    \hline
    $\mwsim$    & P-complete \cite{JS:SOFSEM2001}  & \bf{decidable}                 &    undecidable \cite{JMS1999}\\
    \hline
    $\subseteq/\subseteqq$ & PSPACE-compl. \cite{MS:IEEE1972}  &  \bf{undecidable}    &     undecidable \cite{Valiant1973}           \\
    \hline
  \end{tabular}
\end{table}

\begin{table}
  \caption{Known results on checking simulation, weak simulation and trace inclusion between one-counter
      and finite systems.}
  \label{tab:overview:fs}
  \begin{tabular}{|l|c|c|}
    \hline
                                & \OCN                                    &    \OCA                                    \\
    \hline
    $\sim {\it FS}$                      & P-complete \cite{Kuc2000}               &  P-complete \cite{Kuc2000} \\
    \hline
    $\approx {\it FS}$                   & $P^{\it NP}$-complete \cite{GMT2009}     & $P^{\it NP}$-complete \cite{GMT2009}  \\
    \hline
    $\msim {\it FS}$ (and ${\it FS}\msim$)  & P-complete \cite{Kuc2000a}              & PSPACE-complete\cite{Ser2006,Srb2009}  \\
\hline
    $\mwsim {\it FS}$ (and ${\it FS}\mwsim$)& \bf{P-complete}                        & PSPACE-complete\cite{Ser2006,Srb2009}  \\
\hline
$\subseteq/ \subseteqq {\it FS}$ & \bf{PSPACE-complete}                   & \bf{PSPACE-complete}\\
\hline
    ${\it FS}\subseteq/ \subseteqq $ & decidable \cite{JEM:JCSS1999}    & undecidable \cite{Valiant1973}   \\
\hline
  \end{tabular}
\end{table}

The construction used to show PSPACE hardness of strong bisimulation in \cite{Srb2009}
uses \OCN\ only, and moreover it can be modified to prove a PSPACE lower bound for checking
strong simulation between \OCA\ and finite systems (and vice-versa)
and strong simulation for \OCN; see Remark 3.8 in \cite{Srb2009}.

The proof of the undecidability of weak bisimulation between \OCN\ \cite{May2003}
can be modified to work even for the subclass of normed nets with unary alphabets.

A PSPACE upper bound for strong/weak simulation between \OCA\ and FS (and vice-versa)
can be obtained by reduction to $\mu$-calculus model checking
for \OCA, which is in PSPACE \cite{Ser2006}.

\newpage

\appendix
\section{Proof of Theorem \ref{thm:reduction}}\label{app:reduction}
{\bf\noindent Theorem~\ref{thm:reduction}.}
        For two one-counter nets $M$ and $N$ with states
        $Q_M$ and $Q_N$ resp., one
        can effectively construct a \OCN\ $M'$ with states $Q_{M'}\supseteq Q_M$
        and a $\omega$-net $N'$ with states $Q_{N'}\supseteq Q_N$ such that for
        each pair $p,q\in Q_M\x Q_N$ of original control states and
                any ordinal $\alpha$ the following hold.
        \begin{enumerate}
            \item $pm\mwsim qn$ w.r.t. $\!M,N$ iff $pm\msim\!qn$ w.r.t. $M',N'$.
            \item If $pm\WSIM{}{\alpha}\!qn$ w.r.t. $\!M,N$
                then $pm\SIM{}{\alpha}\!qn$ w.r.t. $M',N'$.
        \end{enumerate}

The reduction will be done in two steps.
First (Lemma \ref{lem:app_reduction:GON}) we reduce weak simulation for one-counter nets to strong simulation
beteween a one-counter net and yet another auxiliary model called \emph{guarded $\omega$-nets}.
These differ from $\omega$-nets in that each transition may change the counter by more than one and is
guarded by an integer, i.e. can only be applied if the current counter value exceeds the \emph{guard} attached to it.
In the second step (Lemma \ref{lem:app_reduction:normalize}) we normalize the effects of all transitions
to $\{-1,0,1,\omega\}$ and eliminate all integer guards and thereby construct an ordinary $\omega$-net
for \V.

Before we start observe that without loss of generality we can assume that every state $p$
allows a silent loop $p \step {\varepsilon, 0} p$.

\begin{definition}
    A \emph{path} in a one-counter net $N=(Q,\Act, \delta)$ is a sequence
    $\pi=(s_0,a_0,d_0,t_0)$ $(s_1,a_1,d_1,t_1)$ $\dots(s_k,a_k,d_k,t_k)\in\delta^*$
    of transitions where $s_{i+1}=t_i$ for all $i<k$.
    We call $\pi$ \emph{cyclic} if
    $s_i=t_j$ for some $0\le i<j\le k$ and write ${}^i\pi$ for its prefix of length $i$.
    A cyclic path is a \emph{loop} if $p_i\neq p_j$ for all $0\le i<j<k$.
    Define the \emph{effect} $\Delta(\pi)$ and \emph{guard} $\Gamma(\pi)$ of a path $\pi$ by
    \begin{align*}
        \Delta(\pi) = \sum_{i=0}^k d_i\text{\quad and }
        &&\Gamma(\pi) = - \min\{\Delta({}^i\pi)|i\le k\}
    \end{align*}
    where $n<\omega$ and $n+\omega=\omega+n=\omega$ for every $n\in\N$.
    The guard $\Gamma(\pi)$ denotes the minimal counter value that is needed to traverse the path $\pi$ while
    maintaining a non-negative counter value along all intermediate configurations.
    Lastly, fix a homomorphism $obs:\delta^*\to (Act\setminus\{\tau\})^*$, that maps paths to their \emph{observable
    action sequences}: $obs((s,\tau,d,t))=\eps$ and $obs((s,a,d,t))=a$ for $a\neq\tau$.
\end{definition}

\begin{definition}[Guarded $\omega$-Nets]
    A \emph{guarded $\omega$-net} $N=(Q,Act,\delta)$ is given by finite sets $Q,Act$
    of states and actions and a transition relation $\delta\subseteq Q\x Act\x\N\x\Z\cup\{\omega\}\x Q$. 
    It defines a transition system over the stateset $Q\x\N$ where $pm\step{a}qn$ iff
    there is a transition $(p,a,g,d,q)\in \delta$ with
    \begin{enumerate}
        \item $m\ge g$ and
        \item $n=m+d\in\N$ or $d=\omega$ and $n>m$.
    \end{enumerate}
\end{definition}

Specifically, $N$ is a \emph{$\omega$-net} if for all transitions $g=0$ and $d\in\{-1,0,1,\omega\}$.
The next construction establishes the connection between weak similarity of one-counter nets
and strong similarity between OCN and guarded $\omega$-net processes. 

\begin{lemma}
    \label{L1}
    For a one-counter net $N=(Q,Act,\delta)$ we can effectively construct a guarded $\omega$-net 
    $G=(Q,Act,\delta')$ such that for all $a\in Act$,
    \begin{enumerate}
        \item whenever $pm\Wstep{a}{}{N}qn$, there is a $n'\ge n$ such that $pm\Step{a}{}{G}qn'$
        \item whenever $pm\Step{a}{}{G}qn$, there is a $n'\ge n$ such that $pm\Wstep{a}{}{N}qn'.$
    \end{enumerate}
\end{lemma}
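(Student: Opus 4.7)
The plan is to construct the transition relation $\delta'$ by enumerating, for each triple $(p,a,q)\in Q\x \Act\x Q$, a finite family of ``witness paths'' in $N$ that together represent all weak $a$-steps from any $pm$ to any $qn$ with $n$ as large as possible. The essential fact that makes this finite is that although there are infinitely many $N$-paths realizing a single weak step, every such path decomposes into an acyclic skeleton with $\tau$-loops inserted at intermediate states; the set of skeletons is bounded by the size of $Q$, and the effect of inserted loops is captured symbolically.

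First I would compute the set of \emph{generator states}: a state $r$ is a generator if $N$ admits a cyclic path from $r$ back to $r$ using only $\tau$-labelled transitions whose net effect is strictly positive. This can be decided by restricting $\delta$ to its $\tau$-transitions and running a standard positive-cycle detection on the resulting weighted graph. Generators are precisely the states from which the counter can be pumped arbitrarily high via internal actions.

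Second I would enumerate witnesses. For $a=\tau$ these are the acyclic $\tau$-paths in $N$ from $p$ to $q$; for $a\neq\tau$ these are decompositions $\pi_1\cdot t\cdot\pi_2$, where $\pi_1$ is an acyclic $\tau$-path from $p$ to some $r$, $t=(r,a,d,r')\in\delta$, and $\pi_2$ is an acyclic $\tau$-path from $r'$ to $q$. Each acyclic piece has length at most $|Q|$, so the family is finite. For every such witness $\pi$ I would add to $\delta'$ a plain guarded transition $(p,a,\Gamma(\pi),\Delta(\pi),q)$. In addition, whenever an intermediate state $s$ along $\pi$ can reach a generator via $\tau$-steps, I would add an $\omega$-transition $(p,a,\Gamma_\omega,\omega,q)$, where $\Gamma_\omega$ is the minimal counter value needed to execute $\pi$ with a detour through $s$ to the generator and back, then finish the witness.

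For correctness, direction (1) is obtained by taking an arbitrary weak step $pm\Wstep{a}{}{N}qn$, selecting a realizing $N$-path, and projecting it onto its acyclic skeleton with inserted loops. If no positive-effect loop is iterated, the corresponding plain transition in $\delta'$ fires from $pm$ and reaches $qn$ exactly; if such a loop is iterated (necessarily at a state that can reach a generator) the corresponding $\omega$-transition fires and yields some $qn'\ge n$. Direction (2) is immediate by construction: every transition of $\delta'$ is justified by an explicit $N$-witness, and an $\omega$-transition is matched by iterating the generator cycle sufficiently often before completing the witness, producing $n'\ge n$ in $N$. The main obstacle will be the $\omega$-case bookkeeping: $\Gamma_\omega$ must be tight enough to allow the detour to execute without driving the counter negative in $N$, yet it must not be larger than the guard actually used by $G$, so that monotonicity of $N$ (Proposition~\ref{reduction:monotonicity}) guarantees that every target counter value permitted by the $\omega$-semantics is in fact attained by a genuine $N$-computation.
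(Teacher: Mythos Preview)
Your approach is essentially the paper's: enumerate acyclic (``direct'') $\tau$-paths around a single visible transition, add a guarded transition for each, and add $\omega$-transitions whenever a positive silent cycle is reachable along the way. Two small deviations are worth noting. First, when no positive loop is used you will not reach $qn$ \emph{exactly}: removing non-positive cycles can only raise the effect, so you reach some $qn'\ge n$, which is what the lemma asks. Second, the paper sets the guard of an $\omega$-transition to $\Gamma(\pi_1'\pi_1'')$, i.e.\ only the cost of reaching the positive loop and traversing it once; everything after can be executed after pumping, so there is no need to fold ``finish the witness'' into $\Gamma_\omega$. Your version is still sound, but the paper's choice makes the direction-(1) bound $\Gamma_\omega\le m$ immediate and avoids exactly the bookkeeping you flag as the main obstacle. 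Also make sure your generator condition is two-sided (reach \emph{and} return), which your $\Gamma_\omega$ definition implicitly requires but your prose does not.
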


\begin{proof}
    The idea of the proof is to introduce direct transitions
    from one state to another for any path between them
    that reads at most one visible action and does not contain silent cycles.

    For two states $s,t$ of $N$, let $D(s,t)$ be the set of \emph{direct} paths 
    from $s$ to $t$:
    \begin{align*}
      D(s,t) = \{&(p_i,a_i,d_i,p_{i+1})_{i<k} : p_0=s, p_k=t,\\
                 &\forall_{0\le i<j\le k} p_i=p_j\implies (i=0\land j=k)\}.
    \end{align*}
    Define the subset of \emph{silent direct} paths by $SD(s,t) = \{\pi\in D(s,t) | obs(\pi)=\varepsilon\}$.
    Every path in $D(s,t)$ has acyclic prefixes only and is therefore bounded in length by $|Q|$.
    Hence $D(s,t)$ and $SD(s,t)$ are finite and effectively computable for all pairs $s,t$.
    
    Using this notation, we define the transitions in $G$ as follows.
    Let $\delta'$ contain a transition $(p,a,\Gamma(\pi),\Delta(\pi),q)$ for each path $\pi=\pi_1(s,a,d,s')\pi_2\in\delta^+$
    where $\pi_1\in SD(p,s)$ and $\pi_2\in SD(s',q)$.  This carries over all transitions of $N$ because the empty path
    is in $SD(s,s)$ for all states $s$. Moreover, introduce $\omega$-transitions in case $N$ allows paths $\pi_1,\pi_2$
    as above to contain direct cycles with positive effect on the counter: If there is a path
    $\pi=\pi_1'\pi_1''\pi_1'''(s,a,d,s')\pi_2$ with
    \begin{enumerate}
        \item $\pi_1'\in SD(p,t)$, $\pi_1''\in SD(t,t)$ and $\pi_1'''\in SD(t,s)$
        \item $\Delta(\pi_1'')>0$
    \end{enumerate}
    for some $t\in Q$, then $\delta'$ contains a transition $(p,a,\Gamma(\pi_1'\pi_1''),\omega,q)$.
    Similarly, if for some $t\in Q$, there is a path $\pi=\pi_1(s,a,d,s')\pi_2'\pi_2''\pi_2'''$ that satisfies
    \begin{enumerate}
        \item $\pi_1\in SD(p,s)$,$\pi_2'\in SD(s',t)$, $\pi_2''\in SD(t,t)$ and $\pi_2'''\in SD(t,q)$
        \item $\Delta(\pi_2'')>0$
    \end{enumerate}
    add a transition $(p,a,g,\omega,q)$ with guard $g=\Gamma(\pi_1(s,a,d,s')\pi_2'\pi_2'')$.
    If there is an $a$-labelled path from $p$ to $q$ that contains a silent and direct cycle with positive effect,
    $G$ has an a-labelled $\omega$-transition from $p$ to $q$ with the guard derived from that path.

    To prove the first part of the claim, assume $pm\Wstep{a}{}{N}qn$. By definition of weak steps, there must be a path
    $\pi=\pi_1(s,a,d,s')\pi_2$ with $obs(\pi_1)=obs(\pi_2)=\varepsilon$. Suppose both $\pi_1$ and
    $\pi_2$ do not contain loops with positive effect. Then there must be paths $\pi_1'\in SD(p,s), \pi_2'\in SD(s',q)$ with
    $\Gamma(\pi_i')\le\Gamma(\pi_i)$ and $\Delta(\pi_i')\ge\Delta(\pi_i)$ for $i\in\{1,2\}$ that can be obtained from
    $\pi_1$ and $\pi_2$ by removing all loops with effects less or equal $0$. So $G$ contains a transition
    $(p,a,g',d',q)$ for some $g'\le m$ and $d'\ge n-m$ and hence
    $pm\Step{a}{}{G}qn'$ for $n'=m+d'\ge n$. Alternatively, either $\pi_1$ or $\pi_2$ contains a
    loop with positive effect. Note that for any such path, another path with lower or equal guard exists that connects the
    same states and contains only one counter-increasing loop:
    If $\pi_1$ contains a loop with positive effect, there is a path $\bar{\pi_1}=\pi_1'\pi_1''\pi_1'''$ from
    $p$ to $s$, where $\pi_1',\pi''$ and $\pi_1'''$ are direct and $\Delta(\pi_1'')>0$ for the loop $\pi_1''\in SD(t,t)$
    for some state $t$. In this case, $G$ contains a $\omega$-transition $(p,a,g,\omega,q)$ with
    $g=\Gamma(\pi_1'\pi_1'')$. Similarly, if $\pi_2$ contains the counter-increasing loop, there is a
    $\bar{\pi_2}=\pi_2'\pi_2''\pi_2'''$, with $\pi_2'\in SD(s',t), \pi_2''\in SD(t,t), \pi_2'''\in SD(t,q)$ and
    $\Delta(\pi_2'')>0$. This means there is a transition $(p,a,g,\omega,q)$ in $G$ with
    $g=\Gamma(\pi_1(s,a,d,s')\pi_2'\pi_2'')$. In both cases, $g\le\Gamma(\pi)\le m$ and therefore
    $pm\Step{a}{}{G}qi$
    for all $i\ge m$.

    For the second part of the claim, assume $pm\Step{a}{}{G}qn$. This
    must be the result of a transition $(p,a,g,d,q)\in\delta'$ for some $g\le m$. 
    In case $d\neq\omega$, there is a path $\pi\in\delta^*$ from $p$ to $q$ with
    $\Delta(\pi)=n-m$, $obs(\pi)=a$ and $\Gamma(\pi)=g$ that witnesses
    the weak step $pm\Wstep{a}{}{N}qn$ in $N$. Otherwise if $d=\omega$, there must
    be a path $\pi=\pi_{11}\pi_{12}\pi_{13}(s,a,d,s')\pi_{21}\pi_{22}\pi_{23}$ from $p$ to $q$ in $N$
    where $\Gamma(\pi)\le m$, all $\pi_{ij}$ are silent and direct and one of $\pi_{12}$ and $\pi_{22}$ 
    is a cycle with
    strictly positive effect.
    This implies that one can ``pump'' the value of the counter higher than any given value.
    Specifically, there are naturals $k$ and $j$ such that the path $\pi'=\pi_{11}\pi_{12}^k\pi_{13}(s,a,d,s')
    \pi_{21}\pi_{22}^j\pi_{23}$ from $p$ to $q$
    satisfies $\Gamma(\pi')\le\Gamma(\pi)\le m$ and $\Delta(\pi')\ge m-n$. Now $\pi'$
    witnesses the weak step $pm\Wstep{a}{}{N}qn'$ in $N$ for an $n'\ge n$.
\end{proof}

\begin{remark}
     Observe that no transition of the net $G$ as constructed above has a guard larger than $|Q|*3+1$
     and finite effect $> 2|Q|+1$.
\end{remark}
\begin{lemma}
    \label{lem:app_reduction:GON}
    For a one-counter net $N=(Q,Act,\delta)$ one can effectively construct a
    guarded $\omega$-net $G=(Q,Act,\delta')$ s.t. for any \OCN\ $M$ and
    any two configurations $pm,qn$ of $M$ and $N$ resp.,
    \begin{equation}
    pm\mwsim qn\text{ w.r.t. }M,N \iff pm\msim qn\text{ w.r.t. }M,G.\label{lem:app_reduction:GON:claim}
    \end{equation}
\end{lemma}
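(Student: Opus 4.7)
The plan is to take $G$ to be the guarded $\omega$-net produced by Lemma~\ref{L1} and prove the biconditional by two symmetric coinductive arguments in which each of the two semantic preorders serves as the witness for the other. The only subtlety is that Lemma~\ref{L1} translates each step into one landing at a possibly \emph{higher} target counter value, so at each step I will invoke monotonicity on Duplicator's side to absorb the slack (Proposition~\ref{reduction:monotonicity}, which transfers verbatim from $\omega$-nets to guarded $\omega$-nets by the same argument).

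For the $(\Rightarrow)$ direction, I would consider
\[
R \;=\; \{(pm, qn) \mid pm \mwsim qn \text{ w.r.t.\ } M, N\}
\]
and verify that $R$ is a strong simulation between $M$ and $G$. Given $(pm,qn)\in R$ and a Spoiler move $pm \step{a} p'm'$ in $M$, the assumed weak simulation on $M,N$ yields a response $qn \Wstep{a}{}{N} qn'$ with $p'm' \mwsim qn'$; by Lemma~\ref{L1}(1) there is a matching strong step $qn \Step{a}{}{G} qn''$ with $n'' \ge n'$, and monotonicity of $\mwsim$ then gives $p'm' \mwsim qn''$, i.e.\ $(p'm', qn'') \in R$.

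For the $(\Leftarrow)$ direction, I would symmetrically take
\[
R' \;=\; \{(pm, qn) \mid pm \msim qn \text{ w.r.t.\ } M, G\}
\]
and show that $R'$ is a weak simulation between $M$ and $N$. A Spoiler move $pm \step{a} p'm'$ in $M$ is countered in $G$ by some $qn \Step{a}{}{G} qn'$ with $p'm' \msim qn'$; Lemma~\ref{L1}(2) converts this into a weak step $qn \Wstep{a}{}{N} qn''$ with $n'' \ge n'$, and monotonicity of $\msim$ on $M,G$ again places $(p'm', qn'') \in R'$.

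No serious obstacle arises; once Lemma~\ref{L1} is in place the argument is essentially a diagram chase. The one point worth flagging explicitly is the need for monotonicity on the guarded $\omega$-net side in the backward direction, but this is immediate because adding a constant $k$ to the counter preserves both the guard condition $m \ge g$ and the target condition (whether $m' = m+d$ for finite $d$, or $m' > m$ for $d = \omega$); so the strong simulation game on $M,G$ behaves monotonically in Duplicator's counter exactly as Proposition~\ref{reduction:monotonicity} states for $\omega$-nets.
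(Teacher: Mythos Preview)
Your proposal is correct and follows essentially the same approach as the paper: both take the two maximal (weak/strong) simulations and show each is a simulation of the other type, using Lemma~\ref{L1} parts 1 and 2 respectively together with monotonicity on Duplicator's side to absorb the slack $n'' \ge n'$. Your explicit remark that monotonicity extends to guarded $\omega$-nets is a detail the paper leaves implicit; otherwise the arguments coincide.
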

\begin{proof}
    Consider the construction from the proof of Lemma \ref{L1}.
    Let $\WSIM{}{M,N}$ be the largest weak simulation w.r.t.\ $M,N$ and
    $\SIM{}{M,G}$ be the largest strong simulation w.r.t.\ $M,G$.
    
    For the ``if'' direction we show that $\SIM{}{M,G}$ is a weak simulation w.r.t.\ $M,N$. Assume
    $pm\SIM{}{M,G}qn$ and $pm\Step{a}{}{M} p'm'$. That means there is a step $qn\Step{a}{}{G} q'n'$ for some
    $n'\in N$ so that $p'm'\SIM{}{G} q'n'$.
    By Lemma~\ref{L1} part 2, $qn\Wstep{a}{}{N} q'n''$ for a $n''\ge n'$. Because simulation is monotonic
    we know that also $p'm'\SIM{}{M,G} q'n''$. Similarly, for the ``only if''
    direction, one can use the first claim of Lemma \ref{L1} to check that
    $\WSIM{}{M,N}$ is a strong simulation w.r.t.\ $M,G$.\qed
\end{proof}

\begin{lemma}
    \label{lem:app_reduction:normalize}
    For a one-counter net $M$
    and a guarded $\omega$-net $G$
    one can effectively construct one-counter nets $M',G'$
    such that
    for any two configurations $pm,qn$ of $M$ and $G$ resp.,
    \begin{equation}
    pm\msim qn\text{ w.r.t. }M,G \iff pm\msim qn\text{ w.r.t. }M',G'.\label{lem:app_reduction:normalize:claim}
    \end{equation}
\end{lemma}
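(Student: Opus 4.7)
The plan is to decompose each offending transition of $G$ into an atomic chain of unit-effect, guard-free transitions that together simulate the original transition within the simulation game. Atomicity is enforced by using fresh intermediate action labels on the chain steps and by augmenting $M$ in parallel so that Spoiler is compelled to follow Duplicator through a chain once it has begun.

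Concretely, I would eliminate the two features distinguishing a guarded $\omega$-net from an ordinary $\omega$-net in two stages. First, for every transition $t=(p,a,g,d,q)\in\delta_G$ with $d\in\Z$ and $|d|>1$, I introduce $|d|-1$ fresh intermediate control-states and a fresh label $\alpha_t$, replacing $t$ by a chain of $|d|$ unit-effect transitions whose first edge is labelled $a$ (still carrying the guard $g$, to be handled next) and whose remaining edges are labelled $\alpha_t$ with effect $\operatorname{sign}(d)$. Second, for each remaining transition $t=(p,a,g,d,q)$ with $g>0$ and small effect $d\in\{-1,0,1,\omega\}$, I prepend a chain of $g$ fresh-labelled $-1$-steps $\beta_t^1,\dots,\beta_t^g$ that only a counter of value $\ge g$ can traverse, followed by the original $a$-step, followed (when $d\neq\omega$) by $g$ fresh-labelled $+1$-steps $\gamma_t^1,\dots,\gamma_t^g$ restoring the counter so that the net effect is exactly $d$; when $d=\omega$, no restoration is needed, as the $\omega$-transition already permits any sufficiently high target value. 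The one-counter net $M'$ is obtained from $M$ by adding, for each fresh label, a corresponding matching transition so that Spoiler can mirror Duplicator's chain without being advantaged or disadvantaged by the new intermediate states; the intermediate control-states of $G'$ are additionally equipped with fall-back edges that emulate the behaviour of the target state $q$ should Spoiler deviate from the chain, with appropriate effects to maintain the invariant that Duplicator's counter is at least what it would have been after the original atomic $G$-step.

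The correctness of equivalence~(\ref{lem:app_reduction:normalize:claim}) is established by translating winning strategies in both directions. A winning Duplicator strategy in the $(M,G)$-game is lifted to $(M',G')$ by macro-expanding each chosen $G$-transition into the corresponding chain in $G'$ and responding to Spoiler's forced auxiliary moves; conversely, a winning strategy in $(M',G')$ is contracted by collapsing each completed chain traversal to the corresponding single $G$-transition, while deviations by Spoiler mid-chain are handled via the fall-back edges. The principal obstacle is exactly this atomicity: once Duplicator has committed to an intermediate state of a chain, Spoiler might play some action other than the chain's next fresh label, and Duplicator must have a response that neither loses the game prematurely nor gains a spurious advantage. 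Designing the fall-back edges from intermediate states of $G'$ to preserve the simulation relation under every possible Spoiler deviation, and verifying this invariance against the monotonicity property (Proposition~\ref{reduction:monotonicity}), is the main technical content of the proof.
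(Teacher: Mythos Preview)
Your proposal has a structural gap that stems from the direction of play in simulation games: \emph{Spoiler moves first}. When Spoiler, sitting in an ordinary state $p'$ of $M'$, plays an original action $a$, Duplicator must respond with an $a$-labelled step in the same round. Your guard-elimination chain \emph{prepends} fresh-labelled $\beta_t$-decrements before the $a$-step, so Duplicator cannot even enter the chain in response to Spoiler's $a$; and your large-effect chain places the $a$-step first but then leaves Duplicator in an intermediate state that can only emit the fresh label $\alpha_t$, while Spoiler, now in an ordinary $M$-state, is free to play any original action. You anticipate this and defer it to ``fall-back edges'', but note that any escape from the middle of the guard-testing chain defeats the purpose of the chain: if Duplicator can bail out after fewer than $g$ decrements, the construction no longer checks $n\ge g$. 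You identify this as ``the main technical content'' without giving the construction, and there is no evident way to equip the intermediate states with fall-backs that simultaneously (i) let Duplicator answer every possible Spoiler deviation and (ii) still force the full $g$-fold decrement.

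The paper avoids this difficulty entirely by a much simpler device: it stretches \emph{both} nets uniformly by a fixed factor $k=2\Gamma(G)+\Delta(G)+1$ using a \emph{single} fresh action $b$. Every $M$-transition $p\step{a,d}q$ becomes $p\step{a,d}q_{k-1}\step{b,0}\cdots\step{b,0}q$ in $M'$, and every $G$-transition $t=(p,a,g,d,q)$ becomes an $a$-step followed by $k-1$ $b$-steps in $G'$ whose effects are arranged to first decrement by $g$, then increment by $g$, then realise $d$ (or a single $\omega$-step), padded with zero-effect $b$-steps. Because both players' chains have length exactly $k$ and use the same label $b$, the two games are in lockstep: after Spoiler's $a$-move he is forced to play $k-1$ $b$'s, and Duplicator's only choice is which $G$-transition to start expanding. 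There is no mid-chain deviation to worry about, and hence no fall-back gadgetry. The correctness then reduces to the one-line observations $pm\Step{a}{}{M}qn'\iff pm\Step{ab^{k-1}}{}{M'}qn'$ and $pm\Step{a}{}{G}qn'\iff pm\Step{ab^{k-1}}{}{G'}qn'$.
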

\begin{proof}
     We first observe that for any transition of the guarded $\omega$-net $G$,
     the values of its guard is bounded by some constant. The same holds for
     all finite effects. Let $\Gamma(G)$ be the maximal guard and $\Delta(G)$
     be the maximal absolute finite effect of any transition of $G$.

     The idea of this construction is to simulate one round of the game $M$ vs. $G$
     in $k=2\Gamma(G)+\Delta(G)+1$ rounds of a simulation game $M'$ vs.\ $G'$.
     We will replace original steps of both players by sequences of $k$ steps in the new game,
     which is long enough to verify if the guard of \V's move is satisfied and
     adjust the counter using transitions with effects in $\{-1,0,+1,\omega\}$ only.

     We transform the net $M=(Q_M,\Act,\delta_M)$ to the net $M'=(Q_{M'},\Act',\delta_{M'})$ as
     follows:
     \begin{align}
         \Act' =\ &\Act \cup \{b\}\\
          Q_{M'} =\ &Q_M\cup \{p_i|1\le i<k, p\in Q_M\}\\
     \delta_{M'} =\ &\{p\step{a,d}q_k| p\step{a,d}q\in\delta_M\}\\
                   &\cup \{p_i\step{b,0}p_{i-1}|1<i<k\}\\
                   &\cup \{p_1\step{b,0}q\}.
     \end{align}
     We see that
     \begin{equation}
         pm\Step{a}{}{M}qn' \iff pm\Step{a}{}{M'}q_{k-1}n'\Step{b^{k-2}}{}{M'}q_1n'\Step{b}{}{M'}qn'.\label{spoilernet}
     \end{equation}
     
     Now we transform the guarded $\omega$-net $G=(Q_G,\Act,\delta_G)$ to the $\omega$-net
     $G'=(Q_{G'},\Act',\delta_{G'})$.
     Every original transition will be replaced by a sequence of $k$ steps
     that test if the current counter value exceeds the guard $g$
     and adjust the counter accordingly.
     The new net $G'$ has states
     \begin{equation}
          Q_{G'} = Q_G \cup \{t_i|0\le i<k, t\in \delta_G\}.
     \end{equation}
     For each original transition $t=(p,a,g,d,q)\in\delta_G$, we add the following transitions
     to $\delta_{G'}$. First, to test the guard:
     \begin{align}
          p\step{a,0}t_{k-1},\\
          t_i\step{b,-1}t_{i-1}, &\text{ for } k-g<i<k\\
          t_i\step{b,+1}t_{i-1}, &\text{ for } k-2g<i<k-g.
     \end{align}
     Now we add transitions to adjust the counter according to $d\in\N\cup\{\omega\}$.
     In case $0\le d<\omega$ we add
     \begin{align}
          t_i\step{b,+1}t_{i-1}, &\text{ for } k-2g-|d|<i<k-2g\\
          t_i\step{b,0}t_{i-1}, &\text{ for } 0\le i<k-2g-d.
     \end{align}
     In case $d<0$ we add
     \begin{align}
          t_i\step{b,-1}t_{i-1}, &\text{ for } k-2g-|d|<i<k-2g\\
          t_i\step{b,0}t_{i-1}, &\text{ for } 0\le i<k-2g+d.
     \end{align}
     In case $d=\omega$ we add
     \begin{align}
          t_i\step{b,\omega}t_{i-1}, &\text{ for } i=k-2g\\
          t_i\step{b,0}t_{i-1}, &\text{ for } 0\le i<k-2g.
     \end{align}
     Finally, we allow a move to the new state:
     \begin{equation}
          t_0\step{b,0}q.
     \end{equation}
     Observe that every transition in the constructed net $G'$
     has effect in $\{-1,0,+1,\omega\}$. $G'$ is therefore an ordinary $\omega$-net.
     It is streightforward to see that
     \begin{equation}
         pm\Step{a}{}{G}qn' \iff pm\Step{ab^{k-1}}{}{G'}qn'.\label{dupnet}
     \end{equation}
     The claim \eqref{lem:app_reduction:normalize:claim} now follows from Equations \eqref{spoilernet}
     and \eqref{dupnet}.
     This conludes the proof of Lemma \ref{lem:app_reduction:normalize} and
     point \ref{thm:reduction:main} of Theorem~\ref{thm:reduction}.

     For point \ref{thm:reduction:ordinals} of the claim observe that by construction of $M'$ and
     $N'$, one round of a weak simulation game w.r.t. $M,N$ is simulated by $k$ rounds of a
     simulation game w.r.t. $M',N'$.
     Therefore, if if \R\ has a strategy to win the simulation game w.r.t. $M', N'$ in $\alpha$
     rounds then he can derive a strategy to win the game w.r.t. $M,N$ in not more than $\alpha$
     rounds. So if $pm\notSIM{}{\alpha}qn$ w.r.t. $M',N'$ then $pm\notWSIM{}{\alpha}\!qn$ w.r.t.
     $M,N$.
\end{proof}

\end{document}